\newtheorem{definition}{Definition}
\newtheorem{lemma}{Lemma}
\newtheorem{theorem}{Theorem}
\newtheorem{corollary}{Corollary}
\begin{document}

\title{Differential Privacy Via a Truncated and Normalized Laplace Mechanism \footnote{\textbf{This is a pre-print of an article published in Journal of Computer Science and Technology. The final authenticated version is available online at: \url{https://doi.org/10.1007/s11390-020-0193-z}.}}}

\author{William Lee Croft \and J\"{o}rg-R\"{u}diger Sack \and Wei Shi}



\maketitle

\begin{abstract}
	When querying databases containing sensitive information, the privacy of individuals stored in the database has to be guaranteed. Such guarantees are provided by differentially private mechanisms which add controlled noise to the query responses. However, most such mechanisms do not take into consideration the valid range of the query being posed. Thus, noisy responses that fall outside of this range may potentially be produced. To rectify this and therefore improve the utility of the mechanism, the commonly used Laplace distribution can be truncated to the valid range of the query and then normalized. However, such a data-dependent operation of normalization leaks additional information about the true query response thereby violating the differential privacy guarantee.
		
	Here, we propose a new method which preserves the differential privacy guarantee through a careful determination of an appropriate scaling parameter for the Laplace distribution. We also generalize the privacy guarantee in the context of the Laplace distribution to account for data-dependent normalization factors and study this guarantee for different classes of range constraint configurations. We provide derivations of the optimal scaling parameter (i.e., the minimal value that preserves differential privacy) for each class or provide an approximation thereof. As a consequence of this work, one can use the Laplace distribution to answer queries in a range-adherent and differentially private manner.
\end{abstract}

\section{Introduction}

Since its introduction in 2006, differential privacy \cite{2} has become one of the most well-studied disclosure control methods that offer a formal privacy guarantee for the release of query responses for sensitive databases. At a high level, it prevents accurate inferences about the contents of the database by using a randomization mechanism to add noise to the query responses. By carefully controlling the noise, a data custodian can protect the privacy of individuals in the database while allowing for aggregate level analysis to be effectively carried out.

In many applications, queries will have natural and publicly known constraints on their range. For example, the percentage of satisfied (or dissatisfied) customers in a survey must fall in the range of 0 to 100. Inferences about the range of a query can often be drawn with ease. The sum of positive valued attributes cannot be less than zero and the number of records in a database that satisfy a particular predicate cannot be greater than the size of the database. If a valid range is known for a particular attribute, a valid range for most queries posed over that attribute can be calculated. In practice, most differentially private mechanisms add noise without consideration for such constraints. This implies that noisy responses may be generated outside of the valid range of the query.

The generation of out-of-bounds noisy query responses can be detrimental both in terms of downstream compatibility of the responses with other software and in terms of the ability to perform useful analysis on the database. To avoid this, the noisy responses should conform to the true range of the query if it is public information. A boundary-snapping process which involves snapping out-of-bounds values to the nearest valid value is sometimes used in practice, however, this does not take into consideration the impact of this operation on the overall utility of the noisy responses. To achieve a low level of expected distortion in noisy query responses, probability mass should ideally be focused near the true query response as much as possible. Boundary-snapping leads to pooling of probability mass at the boundaries of the valid range, which may not be near the true query response.

We provide an alternative to boundary-snapping where we first truncate the probability density function (PDF) used for the generation of noisy query responses to the valid range of the query and then normalize it. Subsequent normalization is required to restore the truncated function to a valid PDF. Since normalization involves the multiplication of the function by a factor greater than 1 and most PDFs have higher probability density around the location parameter, this process leads to greater increases in probability density for noisy responses nearby the true response compared to those farther away. This is beneficial for the utility of the mechanism as it leads to a lower level of expected distortion when noise is drawn from the new distribution.

The method of truncation and normalization has been previously applied \cite{11} in the context of the Laplace distribution, a very widely-used distribution in differential privacy. However, that study did not account for the fact that the normalization factor is a function of the true query response, which is sensitive information. Due to this, normalization in fact leaks further information, and additional modification to the PDF is required to account for this. In this work, we study this leakage of information and address it by proposing new versions of truncated and normalized PDFs from a Laplace distribution which, as we show, preserve the differential privacy guarantee.

\subsection{Contributions}

Our work focuses on the design of differentially private mechanisms which adhere to query range constraints via the truncation and normalization of the PDF from a Laplace distribution. More specifically, our main contributions are:

\begin{itemize}
	
	\item We demonstrate that the process of truncation and normalization of a Laplace PDF is not sufficient on its own to preserve the differential privacy guarantee. The main reason for this results from the data-dependent nature of the normalization.
	
	\item We show how this can be corrected by carefully calculating the scaling parameter for the Laplace distribution. For this, we generalize the differential privacy guarantee in the context of a Laplace mechanism to incorporate data-dependent normalization factors. 
	
	\item We use our generalized privacy guarantee to guide a study of range-adherent mechanisms with respect to different classes of range constraints they are able to adhere to. For each class we show how to derive an optimal scaling parameter or an approximation thereof.

\end{itemize}

\section{Literature Review}

Privacy-preserving analysis of sensitive data has a long history as can be witnessed by the many different research directions that have been investigated; for a survey of early such works, see \cite{1}. In recent years, the framework of differential privacy \cite{15} has been gaining a great deal of traction as a preferred method for analysis of sensitive databases. By adding noise drawn from an appropriately configured Laplace distribution to database query responses prior to their release, it has been proven that the ability of attackers to distinguish between potential configurations of database records can be limited \cite{2}. This concept has subsequently spawned diverse studies including the number of queries that can be safely answered, the utility of the noisy responses, and variations on how to add noise; see e.g., \cite{3,4}.

Some studies on preserving utility when adding noise have focused on publicly known constraints. In these, the goal for the generation of noisy responses is to respect the constraints and thereby provide better utility. In the non-interactive setting, where batches of related queries are posed, known relationships between the responses to a batch of queries should be preserved. This was first considered in the context of the queries needed to break a contingency table for a database up into marginals (i.e., projections of record counts over subsets of the database attributes). Linear programming was applied after a transformation to the Fourier domain to ensure consistency such that the noisy marginals don't appear to have been derived from different databases \cite{5}. Subsequent work has improved further upon this concept by applying post-processing to vectors of noisy query responses in order to achieve consistency between the noisy values while minimizing the distance between the original noisy vector and the post-processed version \cite{6,7}.

In the interactive setting in differential privacy independent queries are answered on an individual basis. There, it may be desirable to ensure that noisy responses are consistent with a (known) valid range of the query. A well-known implementation of this is the truncated geometric mechanism \cite{8}, a variant of the geometric mechanism \cite{9} which snaps out-of-bounds values to the nearest valid response. In many cases, this approach can lead to a high density of noisy responses on the boundaries of the valid range of the query. To avoid undesirable properties such as these types of spikes in probability density, the explicit fair mechanism \cite{10} was designed with to answer queries in a range-adherent manner while providing properties such as a monotonically decaying PDF. Both the geometric and the explicit fair mechanisms are specifically designed to answer counting queries (i.e., the number of database records satisfying a particular predicate). As such, they only produce integer-valued responses and cannot be applied to more general numeric queries.

The concept of snapping out-of-bounds noisy responses to the nearest valid value can be equally applied to any type of randomization mechanism. This has been studied in the context of the Laplace mechanism which produces real-valued output and can be applied to general numeric queries. Another variant of the Laplace mechanism involves truncation of the PDF used by a Laplace distribution to the valid range of the query followed by normalization \cite{11}. Although the method of normalization might be preferred for a greater inflation of probability mass around the true query response as opposed to the boundaries of the valid range, the study neglected to consider the impact of the information leakage inherent in the process of normalization. We note that other mechanisms using truncated PDFs have been studied as well, however, the truncation was not designed to match the valid range of the query and thus they cannot be applied as a means to achieve adherence to the valid range \cite{12,13}.

\section{Preliminaries for Truncation and Normalization}

Differential privacy \cite{15} (Definition \ref{def:dp}) enforces the guarantee that similar databases must have similar probabilities of producing the same noisy query responses. More specifically, it guarantees that for every pair of adjacent databases (i.e., databases that differ by a single record) and every noisy query response, the respective probabilities of the two databases to produce the given noisy query response must be within a factor of $e^{\epsilon}$ of each other, where $\epsilon$ is a privacy parameter specified by the data custodian. The lower the value of $\epsilon$ is set, the stricter the guarantee becomes as the distributions over the noisy query responses are forced to become more similar to each other.

\begin{definition} \label{def:dp}
	Let $\mathbb{D}$ be the set of potential database configurations, $f: \mathbb{D} \rightarrow \mathbb{R}$ be a query function and $K: \mathbb{R} \rightarrow \mathbb{R}$ be a randomization mechanism. For $K$ to satisfy the \textbf{differential privacy guarantee} \cite{15}, the following condition must hold for all pairs of adjacent databases $D_1, D_2 \in \mathbb{D}$:
	
	\begin{equation} \label{eq:guar}
	\Pr \left(  K \left( f \left( D_1 \right) \right) = r \right) \leq e^{\epsilon} \Pr \left( K \left( f \left( D_2 \right) \right) = r \right) \hspace{0.3em} \forall r \in \mathbb{R}
	\end{equation}
\end{definition}

For answering general numeric queries in a differentially private manner, the Laplace distribution is commonly used to implement the randomization mechanism.
	
\begin{definition} \label{def:lap}
	The \textbf{Laplace distribution} for a continuous random variable $x$, configured by a location parameter $\mu$ and a scaling parameter $\sigma$ is given by:
	
	\begin{equation} \label{eq:laplace}
		Lap \left( x | \mu, \sigma \right) = \frac{e^{-\frac{\left| \mu - x \right|}{\sigma}}}{2 \sigma}.
	\end{equation}
	
\end{definition}
	
To satisfy the differential privacy guarantee, the scaling parameter must be set in terms of the chosen value of $\epsilon$ and the query sensitivity, $\Delta F$, which is the largest possible difference between the true query responses of a pair of adjacent databases. By setting the scaling parameter to $\frac{\Delta F}{\epsilon}$ and the location parameter to the true query response, values drawn from the resulting distribution can be used as the noisy query responses. This configuration is referred to as the Laplace mechanism \cite{15}.

When answering queries in a range-adherent manner, the PDF of a Laplace distribution can be truncated to the valid range of the query and then normalized. Since the probability density in the Laplace distribution exponentially decays as distance from the location parameter increases, multiplication by the normalization factor will induce greater increases in probability density for noisy query responses that are closer to the true response than those that are farther away. This is favourable for the utility of the mechanism as it is conducive to a low level of expected distortion in the noisy query responses. However, the operation of normalization is dependent upon the true query response and leaks sensitive information. Verifying the privacy guarantee becomes much more complex with this method and requires further modification of the PDF. We propose an alternative method (to that which is used by the standard Laplace mechanism) for the calculation of the scaling parameter in the Laplace distribution that accounts for the data-dependent normalization in order to preserve the privacy guarantee of differential privacy.

In this section, we provide the preliminaries that are needed to study the impact of truncation and normalization of a Laplace PDF with respect to the privacy guarantee. We first formalize our treatment of range constraints and lay out the calculations of normalization factors for classes of constraint configurations. We then examine the impact of adherence to range constraints and data-dependent normalization on the privacy guarantee. We define a generalized privacy guarantee to use in this setting. With these components laid out, we provide a small example to demonstrate the violation of the privacy guarantee when truncation and normalization are applied without further modification to the PDF.

In Table \ref{table:definitions}, we provide a quick reference for the definitions of the variables used throughout the paper.

\begin{table*}
	\small
	\caption{Definitions of variables}
	\begin{tabular}{ | p{1.3cm} | p{11.6cm} | } \hline
		$\mathbb{D}$ & The set of potential database configurations\\ \hline
		$K$ & A function $K:\mathbb{R} \rightarrow \mathbb{R}$ representing a randomization mechanism \\ \hline
		$f$ & A function $f:\mathbb{D} \rightarrow \mathbb{R}$ representing a query posed on a database\\ \hline
		$\Delta F$ & Query sensitivity - The maximum possible difference between the query responses of two adjacent databases. This is a non-zero, positive, real value. \\ \hline
		$\epsilon$ & Privacy parameter - The user-specified, desired level of privacy. This is a non-zero, positive, real value. \\ \hline
		$\sigma$ & Scaling parameter (for cases where all PDFs share the same scaling parameter) - A value calculated to determine the scale of noise required to release a database query. This is a non-zero, positive, real value. \\ \hline
		$x$ & The continuous random variable used for a PDF \\ \hline
		$n_j$ & The normalization factor required for database $D_j$ \\ \hline
		$\sigma_j$ & The scaling parameter required for database $D_j$ \\ \hline
		$\Delta x_j$ & The distance between $f(D_j)$ and the continuous random variable $x$ \\ \hline
		$CL_j$ & The set of finite constraints to the left of $f(D_j)$ \\ \hline
		$CR_j$ & The set of finite constraints to the right of $f(D_j)$ \\ \hline
		$I_L$ & A boolean variable used to indicate the presence of a constraint spanning to negative infinity \\ \hline
		$I_R$ & A boolean variable used to indicate the presence of a constraint spanning to infinity \\ \hline
		$\Delta R_{j}$ & The distance between $f(D_j)$ and the right endpoint of a constraint spanning to negative infinity \\ \hline
		$\Delta L_{j}$ & The distance between $f(D_j)$ and the left endpoint of a constraint spanning to infinity \\ \hline
		$\Delta R_{jk}$ & The distance between $f(D_j)$ and the right endpoint of a constraint $C_k$ \\ \hline
		$\Delta L_{jk}$ & The distance between $f(D_j)$ and the left endpoint of a constraint $C_k$ \\ \hline
		$L_j$ & A summation of the integrals of all constraints to the left of $f(D_j)$ \\ \hline
		$R_j$ & A summation of the integrals of all constraints to the right of $f(D_j)$ \\ \hline
		$i$ & A real-valued variable used to indicate the number of spans of $\Delta F$ between two databases \\ \hline		
	\end{tabular}
	\label{table:definitions}
\end{table*}

\subsection{Constraint Configurations}

Given constraints on the range of a query represented as a set of ranges within which query responses cannot occur, the goal is to truncate a Laplace PDF to remove these constrained ranges, calculate an appropriate normalization factor to restore the integral over the truncated span(s) to 1, and calculate an appropriate scaling parameter. The required scaling parameter is dependent upon a privacy parameter $\epsilon$ specified by the data custodian and the position of the location parameter (i.e., the true query response) relative to the constraints.

The scaling parameter must be selected such that the differential privacy guarantee is satisfied. The higher the scaling parameter is set, the farther the probability mass becomes spread from the location parameter. Therefore, calculating the lowest possible scaling parameter that satisfies the privacy guarantee is one of the core aspects of our work. We consider the selection of the lowest possible scaling parameter to be the criterion for optimal utility within the class of mechanisms using a truncated and normalized Laplace PDF.

Allowing for truncated spans in a PDF provides a natural way to incorporate range constraints into a randomization mechanism. Given the wide variety of queries that can be posed on a database, many different configurations of constraints may arise in practice. To simplify the work required in designing appropriate mechanisms for the various constraint configurations, we categorize these configurations into classes such that for each class, a corresponding mechanism can be developed. Visual representations of possible constraint configurations from each class are shown in Figure \ref{fig:constraints}.

The first class is characterized by a single constraint that begins at any point and spans either to infinity or to negative infinity (Figure \ref{fig:constraints}-a). An example of this is a query that has the set of natural numbers or positive real numbers as its range.

The second class may have any number of finite spans of constraints (Figure \ref{fig:constraints}-b). These configurations would be typical of queries in which the range is subject to a periodic function. For example, a query may be posed about dates on which individuals visited a particular medical clinic where it is known that the clinic is closed during certain periods of the year.

We also consider a final class of arbitrary constraint configurations to handle any combination of the first two classes (Figure \ref{fig:constraints}-c). Common instantiations of this class are queries that have a single finite span of valid responses. For example, a query may have a single finite range of possible age values.

\begin{figure}
	\centering
	\includegraphics[width=0.4\textwidth]{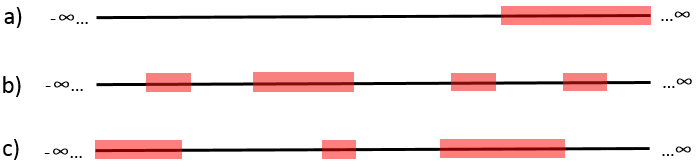}
	\caption{An example from each class of constraint configurations is shown. The black line represents a real-valued, one-dimensional range and the shaded red blocks represent constraints. Constraints shown on the endpoint of the line span infinitely in that direction. The constraints classes are: a) a single infinite constraint, b) arbitrary finite constraints, and c) arbitrary constraints.}
	\label{fig:constraints}
\end{figure}

\subsection{Normalization Factors}

Since constraints represent infeasible spans in the range of the query, the PDF must be truncated to remove these spans. This has the effect of producing a piece-wise function whose integral is no longer equal to 1. In order to restore this to a PDF, it must be scaled by a normalization factor such that the integral over the valid space is equal to 1. The normalization factor is calculated as the reciprocal of the integral of the truncated PDF. This integral is calculated as the sum of the integrals of the removed spans of the PDF subtracted from 1. An example of a truncated and normalized Laplace PDF is shown in Figure \ref{fig:normalization}.

\begin{figure} [ht]
	\centering
	\includegraphics[width=0.4\textwidth]{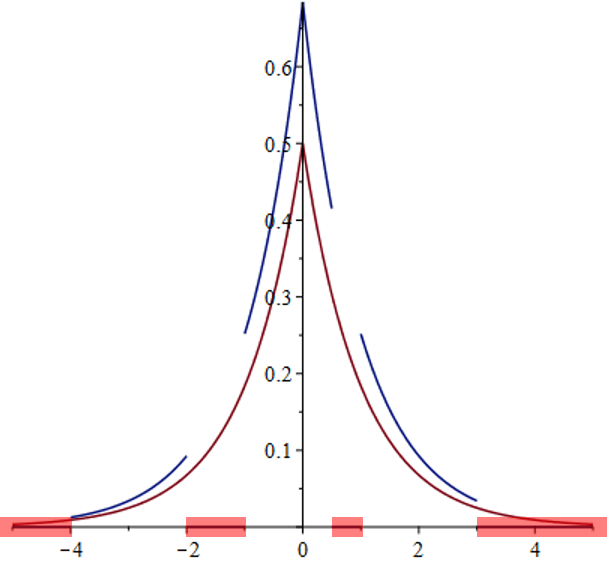}
	\caption{A graph of two Laplace PDFs with location parameters of 0 and scaling parameters of 1. The shaded blocks show an example of different types of constraints (infinite and finite on both the left and right of the location parameter). The red line shows the original Laplace PDF and the blue line shows one that has been truncated and then normalized.}
	\label{fig:normalization}
\end{figure}

Integrals of a Laplace PDF can be easily calculated using the cumulative distribution function (CDF).

\begin{definition} \label{def:cdf}
	The \textbf{cumulative distribution function} corresponding to a Laplace distribution of $Lap \left( x | \mu, \sigma \right)$ is given by:

	\begin{equation} \label{eq:laplace_cdf}
	CDF \left( x | \mu, \sigma \right) = \frac{1}{2} + \frac{1}{2} sgn \left( x - \mu \right) \left( 1 - e^{-\frac{\left| \mu - x \right|}{\sigma}} \right)
	\end{equation}

\end{definition}

We define the following notation for the representation of constraints. Let $CL_j$ be the set of finite constraints to the left of $f(D_j)$ for any database $D_j \in \mathbb{D}$. For any constraint $C_k \in CL_j$, let $\Delta L_{jk}$ and $\Delta R_{jk}$ be the distance between $f(D_j)$ and the left and right endpoints, respectively, of $C_k$. Let $I_L$ be an indicator variable set to either 0 or 1 to denote the absence or presence of a constraint that spans to negative infinity and let $\Delta R_j$ be the distance between the right endpoint of such a constraint, if it exists, and $f(D_j)$. These concepts are all defined analogously for a set $CR_j$ of finite constraints to the right of $f(D_j)$ with $I_R$ acting as the indicator variable for a constraint spanning to infinity and $\Delta L_j$ acting as the distance between the left endpoint of such a constraint, if it exists, and $f(D_j)$. Finally, let $\sigma_j$ be a data-dependent scaling parameter to be used when querying a database $D_j$.

Let $L_j$, defined in Formula \ref{eq:integral_L}, be the sum of all integrals of the constraints to the left of $f(D_j)$. The integral of a finite constraint to the left of $f(D_j)$ can be calculated as the CDF at the right endpoint of the constraint minus the CDF at the left endpoint (shown as the expression inside the summation of Formula \eqref{eq:integral_L}). The integral of a constraint that spans to negative infinity can be calculated as the CDF at the right endpoint of the constraint (shown in the term outside the summation of Formula \eqref{eq:integral_L}). 

{
	\begin{equation} \label{eq:integral_L}
	L_j = 
	\sum_{C_k \in CL_j} \left( \frac{e^{-\frac{\Delta R_{jk}}{\sigma_j}}-e^{-\frac{\Delta L_{jk}}{\sigma_j}}}{2} \right) + 
	I_L \frac{e^{-\frac{\Delta R_j}{\sigma_j}}}{2}
	\end{equation}
}

Analogous calculations for the integrals of finite constraints to the right of $f(D_j)$ and a constraint spanning to infinity are shown in Formula \eqref{eq:integral_R}, where $R_j$ is the sum of all integrals of the constraints to the right of $f(D_j)$.

{
	\begin{equation} \label{eq:integral_R}
	R_j = 
	\sum_{C_k \in CR_j} \left( \frac{e^{-\frac{\Delta L_{jk}}{\sigma_j}}-e^{-\frac{\Delta R_{jk}}{\sigma_j}}}{2} \right) +
	I_R \frac{e^{-\frac{\Delta L_j}{\sigma_j}}}{2}
	\end{equation}
}

The total integral of the removed space is simply the summation of all constraint integrals (the sum of Formulae \eqref{eq:integral_L} and \eqref{eq:integral_R}). The integral of the truncated PDF is the sum of the constraint integrals subtracted from 1. The normalization factor $n_j$ is the reciprocal of the integral of the truncated PDF, as shown in Formula \eqref{eq:norm}.

{
	\begin{equation} \label{eq:norm}
	n_j = \frac{1}{1 - \left( L_j + R_j \right)}
	\end{equation}
}

\subsection{Generalized Differential Privacy Guarantee}

When using the basic Laplace distribution to implement the randomization mechanism, the privacy guarantee of Definition \ref{def:dp} can be simplified to the form shown in Formula \eqref{eq:reg_guar} for all pairs of adjacent databases $D_1, D_2 \in \mathbb{D}$, where $\Delta x_j$ is the distance between between the continuous random variable $x$ and $f(D_j)$ for any database $D_j \in \mathbb{D}$.

{
	\begin{equation} \label{eq:reg_guar}
	e^{\frac{\Delta x_2}{\sigma}} \leq e^\epsilon e^{\frac{\Delta x_1}{\sigma}} \hspace{0.3em} \forall x \in \mathbb{R}
	\end{equation}
}

When using normalization factors induced by constraints and allowing for differing scaling parameters, we obtain a new general form of the privacy guarantee for all pairs of adjacent databases $D_1, D_2 \in \mathbb{D}$ shown in Formula \eqref{eq:guar_simp}.

{
	\begin{equation} \label{eq:guar_simp}
	\frac{\sigma_2}{\sigma_1}
	\left( \frac{n_1}{n_2} \right)
	\left( \frac{e^{\frac{\Delta x_2}{\sigma_2}}}{e^{\frac{\Delta x_1}{\sigma_1}}} \right)
	\leq e^\epsilon
	\hspace{0.3em} \forall x \in \mathbb{R}
	\end{equation}
}

To simplify the analysis, we assume that when two databases, $D_1, D_2 \in \mathbb{D}$ are being compared in a privacy guarantee, it will always be the case that $f(D_1) \ge f(D_2)$. As a result, a form for the symmetric case of the guarantee (i.e., where $f(D_1) \le f(D_2)$) must also be considered, as shown in Formula \eqref{eq:guar_sym}. To assert the differential privacy guarantee, both forms must be satisfied for all pairs of adjacent databases that apply to their respective cases.

{
	\begin{equation} \label{eq:guar_sym}
	\frac{\sigma_1}{\sigma_2}
	\left( \frac{n_2}{n_1} \right)
	\left( \frac{e^{\frac{\Delta x_1}{\sigma_1}}}{e^{\frac{\Delta x_2}{\sigma_2}}} \right)
	\leq e^\epsilon
	\hspace{0.3em} \forall x \in \mathbb{R}
	\end{equation}
}

\subsection{Guarantee Generalization for Arbitrary Distances} \label{sec:gen_dist}

The differential privacy guarantee is generally interpreted in terms of pairs of adjacent databases. For such databases, the distance between their true query responses is upper bounded by the query sensitivity $\Delta F$. Although not typically done, the privacy guarantee can also be interpreted in terms of the actual distance between query responses of arbitrary databases (i.e, pairs that are not necessarily adjacent). This is a useful interpretation for our setting since the distance between the databases can be used as a means to calculate the difference between their normalization factors as well.

For a pair of databases $D_1, D_2 \in \mathbb{D}$ having PDFs with location parameters in the same span of truncated space, the constraint endpoint distances of $D_2$ can be written in terms of the variables used for $D_1$. This can be done by applying an offset equal to the distance between the two location parameters. We define a distance variable $i$, representing the number of spans of $\Delta F$ between the location parameters, which we utilize to create a new representation of the constraint integral calculations. Since we assume that $f(D_1) \geq f(D_2)$, we know that $f(D_2)$ will be closer to all constraints to the left and farther from all constraints to the right compared to $f(D_1)$. We therefore use an offset of $-i \Delta F$ for distances to the left and an offset of $i \Delta F$ for distances to the right. The distance offset versions of the constraint integral calculations are shown in Formulae \eqref{eq:integral_Lj} and \eqref{eq:integral_Rj}.

{
	\begin{equation} \label{eq:integral_Lj}
	L_2 = 
	\sum_{C_k \in CL_1} \left( \frac{e^{-\frac{\Delta R_{1k} - i \Delta F}{\sigma_2}}-e^{-\frac{\Delta L_{1k} - i \Delta F}{\sigma_2}}}{2} \right) + 
	I_L \frac{e^{-\frac{\Delta R_1 - i \Delta F}{\sigma_2}}}{2}
	\end{equation}
}

{
	\begin{equation} \label{eq:integral_Rj}
	R_2 = 
	\sum_{C_k \in CR_1} \left( \frac{e^{-\frac{\Delta L_{1k} + i \Delta F}{\sigma_2}}-e^{-\frac{\Delta R_{1k} + i \Delta F}{\sigma_2}}}{2} \right) +
	I_R \frac{e^{-\frac{\Delta L_1 + i \Delta F}{\sigma_2}}}{2}
	\end{equation}
}

To use these integral calculations in the privacy guarantee, we must consider how the required multiplicative bound between the probabilities of the databases producing the same noisy response should reflect this modification. Since $e^\epsilon$ is the multiplicative bound for adjacent databases (meaning their query responses are separated by a distance of at most $\Delta F$), it follows that databases whose query responses are separated by $i$ spans of $\Delta F$ should have a multiplicative bound given by the product of $e^\epsilon$ multiplied by itself $i$ times. In other words, the multiplicative bound must become $e^{i \epsilon}$.

In fact, $i$ does not need to be an integer; it can be real-valued. The important requirement to maintain in this generalization is that for any two databases separated by $\Delta F$, the multiplicative difference in their probabilities will be bounded by $e^\epsilon$. To show this, consider two databases separated by $i$ spans of $\Delta F$. Since $\frac{1}{i}$ times that distance is equal to $\Delta F$, we must show that the multiplicative bound for $\frac{1}{i}$ of the total distance is $e^\epsilon$. This is easily shown as follows:

{
	\begin{equation} \label{eq:dist_proof}
	{e^{i \epsilon}}^{\frac{1}{i}} = e^\epsilon
	\end{equation}
}


To apply this generalization to the privacy guarantee, we use $L_1$ and $R_1$ defined according to Formulae \eqref{eq:integral_L} and \eqref{eq:integral_R} as well as $L_2$ and $R_2$ defined according to Formulae \eqref{eq:integral_Lj} and \eqref{eq:integral_Rj} to produce the generalized form of the privacy guarantee shown in Formula \eqref{eq:guar_dist}.

{
	\begin{equation} \label{eq:guar_dist}
	\frac{\sigma_2}{\sigma_1} 
	\left( \frac{1 - \left( L_2 + R_2 \right) }
	{1 - \left( L_1 + R_1 \right) } \right)
	\left( \frac{e^{\frac{\Delta x_2}{\sigma_2}}}{e^{\frac{\Delta x_1}{\sigma_1}}} \right)
	\leq e^{i \epsilon}
	\hspace{0.3em} \forall x \in \mathbb{R}
	\end{equation}
}

\subsection{Continuous Random Variable Worst Case Analysis} \label{sec:cont_var}

In the generalized privacy guarantee of Formula \eqref{eq:guar_dist}, we can see that the variables related to the continuous random variable of the PDFs appear only in the third fraction of the left-hand side of the inequality. We now look for the worst case with respect to the continuous random variable placement in order to eliminate the need for explicit quantification over this variable in the privacy guarantee. Since the left-hand side of the guarantee must always be less than or equal to the right-hand side, the worst case occurs when the fraction is maximized. 

Applying identities, the fraction containing the continuous random variable can be re-written as:

{
	\begin{equation} \label{eq:cont_var}
	e^{\frac{\Delta x_2}{\sigma_2}-\frac{\Delta x_1}{\sigma_1}}
	\end{equation}
}

This expression is maximized when the exponent is maximized. If we redefine the $\Delta x_2$ in terms of $\Delta x_1$ based on the distance between the location parameters, we can write out three different cases which cover the possible placements of the continuous random variable. These cases are shown in Formula \eqref{eq:cont_var_placement}. Recall that $f(D_1) > f(D_2)$ and $f(D_1) - f(D_2) = i \Delta F$.

\begin{equation} \label{eq:cont_var_placement}
	e^{\frac{\Delta x_2}{\sigma_2}-\frac{\Delta x_1}{\sigma_1}} = 
	\begin{cases}
		e^{\frac{\Delta x_1 - i \Delta F}{\sigma_2}-\frac{\Delta x_1}{\sigma_1}}  & x \le f(D_2) \\
		e^{\frac{i \Delta F - \Delta x_1}{\sigma_2}-\frac{\Delta x_1}{\sigma_1}} & f(D_2) < x < f(D_1) \\
		e^{\frac{\Delta x_1 + i \Delta F}{\sigma_2}-\frac{\Delta x_1}{\sigma_1}} & f(D_1) \le x \\
	\end{cases}
\end{equation}

Of these cases, the third will produce the highest value, thus the continuous random variable should be placed somewhere in the span of truncated space from $f(D_1)$ up to infinity. In order to determine which position within this space maximizes the value of the expression, we must consider the relationship between $\sigma_1$ and $\sigma_2$.

If $\sigma_1$ were always equal to $\sigma_2$ then the placement of the continuous random variable within the identified span would have no impact on the value of the expression. However, we would ideally like to take advantage of the fact that the farther a location parameter is from a constraint, the less of the PDF integral is lost during truncation. A larger remaining integral means that a lower normalization factor is needed. As we later show, higher scaling parameters are needed to compensate for the increased normalization factors. However, higher scaling parameters lead to higher levels of expected noise. Therefore, it is desirable to allow the scaling parameters to decrease as the normalization factors decrease.

We first consider the case of a single infinitely spanning constraint. If the constraint spans to infinity, this implies that in this configuration, $\sigma_1 \geq \sigma_2$. A greater $\sigma_1$ value means that the value of the expression goes up at $\Delta x_1$ increases. We must therefore work with the highest possible $\Delta x_1$ value. If the guarantee can be satisfied for this value then any other possible $\Delta x_1$ value will cause the value to the left-hand side to decrease, meaning that the guarantee will hold. We therefore place $x$ right on the border of the constraint going to infinity (if one exists). This means that $\Delta x_1$ is equal to $\Delta L_1$. The fraction can be updated to:

\begin{equation} \label{eq:cont_var_worst}
\scalebox{1.5}{$
	\frac{e^{-\frac{\Delta L_1}{\sigma_1}}}{e^{-\frac{\Delta L_1 + i \Delta F}{\sigma_2}}}
	$}
\end{equation}

While similar analysis could be provided for the case of a single constraint spanning to negative infinity, we do not explicitly consider this case as it can be easily treated as the case of a constraint spanning to infinity after a horizontal reflection has been applied.

In constraint configurations where there are no infinite constraints, the worst-case with differing scaling parameters is undefined since the continuous random variable can always be placed farther away. In such a configuration, it becomes necessary to use the same scaling parameter everywhere to avoid this problem. When this occurs, the fraction can be written as:

{
	\begin{equation} \label{eq:cont_var_same_sigma}
	e^{\frac{i \Delta F}{\sigma}}
	\end{equation}
}

We must also consider the form for the symmetric case of the privacy guarantee. This time, the worst-case occurs when we maximize:

{
	\begin{equation} \label{eq:cont_var_sym}
	e^{\frac{\Delta x_1}{\sigma_1}-\frac{\Delta x_2}{\sigma_2}}
	\end{equation}
}

When redefining the $\Delta x_2$ in terms of $\Delta x_1$ based on the distance between the location parameters, the three possible cases are as shown in Formula \eqref{eq:cont_var_placement_sym}.

\begin{equation} \label{eq:cont_var_placement_sym}
	e^{\frac{\Delta x_1}{\sigma_1}-\frac{\Delta x_2}{\sigma_2}} = 
	\begin{cases}
		e^{\frac{\Delta x_1}{\sigma_1}-\frac{\Delta x_1 - i \Delta F}{\sigma_2}} & x \le f(D_2) \\
		e^{\frac{\Delta x_1}{\sigma_1}-\frac{i \Delta F - \Delta x_1}{\sigma_2}} & f(D_2) < x < f(D_1) \\
		e^{\frac{\Delta x_1}{\sigma_1}-\frac{\Delta x_1 + i \Delta F}{\sigma_2}} & f(D_1) \le x \\
	\end{cases}
\end{equation}

The first and second cases will be larger than the third. The worst-case occurs when the second term of the exponent is minimized. The smallest possible value it can take on is 0, which occurs when $\Delta x_1 = i \Delta F$. The fraction can therefore be re-written as:

{
	\begin{equation} \label{eq:cont_var_sym_worst}
	e^{\frac{i \Delta F}{\sigma_1}}
	\end{equation}
}

For cases where the same $\sigma$ values are used due to the worst-case analysis from the symmetric form, this would be:

{
	\begin{equation} \label{eq:cont_var_sym_same_sigma}
	e^{\frac{i \Delta F}{\sigma}}
	\end{equation}
}

\subsection{Motivating Example} \label{sec:laplace_motiv}

We conclude this section with an example to illustrate the need for an alternate calculation of the scaling parameter when the Laplace PDF is truncated and normalized. Consider a case with a single constraint spanning to negative infinity where a fixed scaling parameter is used regardless of the position of the location parameter. The privacy guarantee would be written as shown in Formula \eqref{eq:motiv_example}.

{
	\begin{equation} \label{eq:motiv_example} 
	\frac{2-e^{-\frac{\Delta R_1 + i \Delta F}{\sigma}}}{2-e^{-\frac{\Delta R_1}{\sigma}}}  
	\left( e^{\frac{i \Delta F}{\sigma}} \right)
	\leq e^{i \epsilon}
	\end{equation}
}

The regular calculation of the scaling parameter for a Laplace mechanism is to set $\sigma=\frac{\Delta F}{\epsilon}$. When making this substitution, we have the form shown in Formula \eqref{eq:motiv_example_2}.

{
	\begin{equation} \label{eq:motiv_example_2} 
	\frac{2e^{i \epsilon}-e^{-\frac{\epsilon \Delta R_1}{\Delta F}}}{2-e^{-\frac{\epsilon \Delta R_1}{\Delta F}}}
	\leq e^{i \epsilon}
	\end{equation}
}

Now we consider the comparison of a database with a location parameter on the border of the constraint to any other database. To do so, we set $\Delta R_1$ to 0 and simplify to produce the form shown in Formula \eqref{eq:motiv_example_3}.

{
	\begin{equation} \label{eq:motiv_example_3} 
	2e^{i \epsilon}-1
	\leq e^{i \epsilon}
	\end{equation}
}

We can see that for any $i>0$ (i.e., any pair of databases that do not share the same true query response), the privacy guarantee is not satisfied. This shows that the regular calculation of the scaling parameter cannot be applied when normalization has occurred. In the remainder of the paper, we show how to calculate appropriate scaling parameters to preserve the differential privacy guarantee. Furthermore, since different location parameters induce different normalizations factors, we allow for the scaling parameter to be calculated as a function of the position of the location parameter relative to the constraints in order to achieve lower scaling parameters and thus better utility.

\section{Constraint Configurations}

Having defined classes of range constraints and their corresponding normalization factors, we now study these classes in the context of our generalized privacy guarantee. This form of the privacy guarantee captures the information leakage induced by data-dependent normalization, allowing us to investigate how the selection of appropriate scaling parameters can be used to account for the information leakage. Throughout this section, we define an instantiation of the privacy guarantee for each class of constraint configurations according to the corresponding calculation of the normalization factors and the worst-case analysis of the continuous random variable. Through the manipulation of these instantiations, we derive the required scaling parameters. For conciseness within this section, the full proofs are provided in the appendix.

\subsection{Single Infinite Constraint}

We begin with the class of single infinitely spanning constraints. We show the corresponding configuration of the privacy guarantee as well as the derivations required for the calculation of optimal scaling parameters that satisfy the guarantee. In this class, there is a single span of infeasible space extending either to infinity or to negative infinity. The complement of this space is a single span of feasible space. Without loss in generality, we assume the constraint extends to infinity. The case of going to negative infinity can be seen as a horizontal reflection of this, in which all analysis is the same.

Since there is only a single constraint, the normalization factor can be calculated as the reciprocal of the integral of the truncated space. We apply this normalization factor to the privacy guarantee of Formula \eqref{eq:guar_simp} along with the distance generalization of Section \ref{sec:gen_dist} and the continuous random variable worst-case in the third case of Formula \eqref{eq:cont_var_placement}. The privacy guarantee can now be written as:

{
	\begin{equation} \label{eq:sing_inf_guar_simp}
	\frac{\sigma_2}{\sigma_1} 
	\left( \frac{2-e^{-\frac{\Delta L_1 + i \Delta F}{\sigma_2}}}{2-e^{-\frac{\Delta L_1}{\sigma_1}}} \right) 
	\left( \frac{e^{\frac{\Delta L_1 + i \Delta F}{\sigma_2}}}{e^{\frac{\Delta L_1}{\sigma_1}}} \right)
	\leq e^{i \epsilon}
	\end{equation}
}

The formulae in this section make use of the multi-valued LambertW function \cite{14}. As we are interested only in real-valued output, we restrict our attention to the single-valued functions of the primary and -1 branches. We refer to these functions as $W$ and $W_{-1}$ for shorthand and use $W_Z$ for instances where either branch may be applied.

\begin{definition}
	The \textbf{LambertW function} \cite{14} is given by the inverse function of $f(x) = xe^x$ where $x$ is a complex number. For the branches to be single-valued, the following conditions apply: $W(x) \geq -1$ and $W_{-1}(x) \leq -1$.
\end{definition}

{
	\begin{equation} \label{eq:lambert_0}
	W(x e^x) = x \hspace{1cm} -\frac{1}{e} \leq x e^x
	\end{equation}
}

{
	\begin{equation} \label{eq:lambert_-1}
	W_{-1}(x e^x) = x \hspace{1cm} -\frac{1}{e} \leq x e^x < 0
	\end{equation}
}

\begin{lemma} \label{lem:1}
	For any PDF with a location parameter at distance $i \Delta F > 0$ from the constraint, bounds on the possible values of its scaling parameter are determined by the following four inequalities:
	
	{
		\begin{equation} \label{eq:s2_isolation_bound_1}
		\sigma_2 \leq -\frac{i \Delta F e^{i \epsilon} \sigma_1}{W \left( -\frac{2 i \Delta F e^{-i \epsilon} e^{-\frac{i \Delta F e^{-i \epsilon}}{\sigma_1}} }{\sigma_1} \right) e^{i \epsilon} \sigma_1 + i \Delta F}
		\end{equation}
	}

	{
		\begin{equation} \label{eq:s2_isolation_bound_2}
		\sigma_2 \geq -\frac{i \Delta F e^{i \epsilon} \sigma_1}{W_{-1} \left( -\frac{2 i \Delta F e^{-i \epsilon} e^{-\frac{i \Delta F e^{-i \epsilon}}{\sigma_1}} }{\sigma_1} \right) e^{i \epsilon} \sigma_1 + i \Delta F}
		\end{equation}
	}

	{
		\begin{equation} \label{eq:s2_isolation_bound_3}
		\sigma_2 \geq \frac{i \Delta F}
		{ W \left( \frac{-1}{2 e} \right) i \epsilon e^{1 -i \epsilon}
			- W \left(2 W \left( \frac{-1}{2 e} \right) i \epsilon
			e^{ W \left( \frac{-1}{2 e} \right) i \epsilon e^{1 -i \epsilon} - i \epsilon + 1} \right)}
		\end{equation}
	}

	{
		\begin{equation} \label{eq:s2_isolation_bound_4}
		\sigma_2 \leq \frac{i \Delta F}
		{ W \left( \frac{-1}{2 e} \right) i \epsilon e^{1 -i \epsilon}
			- W_{-1} \left(2 W \left( \frac{-1}{2 e} \right) i \epsilon
			e^{ W \left( \frac{-1}{2 e} \right) i \epsilon e^{1 -i \epsilon} - i \epsilon + 1} \right)}
		\end{equation}
	}
\end{lemma}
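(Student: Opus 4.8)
The plan is to turn the instantiated guarantee for this class, Formula~\eqref{eq:sing_inf_guar_simp}, into a single clean inequality in the unknown $\sigma_2$ and then solve the associated equality with the Lambert $W$ function. I would set up the comparison so that the reference database sits on the constraint border; then the separation $i\Delta F$ coincides with the distance of the PDF of interest from the constraint, i.e. $\Delta L_1 = 0$, and by Formula~\eqref{eq:norm} the reference normalization factor is $2$. Substituting $\Delta L_1 = 0$ collapses \eqref{eq:sing_inf_guar_simp} to $\frac{\sigma_2}{\sigma_1}\left(2 e^{\frac{i\Delta F}{\sigma_2}} - 1\right) \le e^{i\epsilon}$, isolating all of the $\sigma_2$-dependence into one expression.

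Next I would study the left-hand side as a function of $\sigma_2$. As $\sigma_2 \to 0^+$ the factor $e^{i\Delta F/\sigma_2}$ blows up, while as $\sigma_2 \to \infty$ the linear factor dominates; the function therefore tends to $+\infty$ at both ends and is U-shaped with a single interior minimum. Hence the guarantee holds exactly on a closed interval whose endpoints solve $\frac{\sigma_2}{\sigma_1}\left(2 e^{\frac{i\Delta F}{\sigma_2}} - 1\right) = e^{i\epsilon}$. To solve this I would substitute $w = \frac{i\Delta F}{\sigma_2}$ together with the constant $v = -\frac{i\Delta F}{e^{i\epsilon}\sigma_1}$; routine manipulation rewrites the equality as $(v - w)\,e^{\,v - w} = 2 v e^{\,v}$. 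Setting $Y = v - w$ yields the canonical form $Y e^{Y} = 2 v e^{\,v}$, so $Y = W_Z\!\left(2 v e^{v}\right)$ and hence $\sigma_2 = \frac{i\Delta F}{v - W_Z(2 v e^{v})}$. The principal branch $W$ returns the larger endpoint, Inequality~\eqref{eq:s2_isolation_bound_1}, and the $W_{-1}$ branch the smaller endpoint, Inequality~\eqref{eq:s2_isolation_bound_2}; I would check the domain condition $-\tfrac1e \le 2 v e^{v} < 0$ and the branch constraints $W \ge -1$, $W_{-1} \le -1$ to confirm the pairing.

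For the remaining two bounds I would run the analogous computation on the symmetric form of the guarantee, Formula~\eqref{eq:guar_sym}, using the symmetric worst case \eqref{eq:cont_var_sym_worst}. The new feature is that this determines the reference scaling parameter itself rather than relating $\sigma_2$ to a free $\sigma_1$: minimizing the border scaling parameter forces an extremum condition of the shape $e^{t}(1 - t) = \tfrac12$, whose solution $t = 1 + W\!\left(-\tfrac{1}{2e}\right)$ is exactly where the constant $W\!\left(-\tfrac{1}{2e}\right)$ enters. Feeding the optimal value back in, the role of $v$ above is now played by $Q = W\!\left(-\tfrac{1}{2e}\right) i\epsilon\, e^{1 - i\epsilon}$, and one checks that the $W$-argument appearing in Inequalities~\eqref{eq:s2_isolation_bound_3} and \eqref{eq:s2_isolation_bound_4} simplifies to $2 Q e^{Q}$. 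The identical Lambert inversion with the two branches then gives $\sigma_2 = \frac{i\Delta F}{Q - W_Z(2 Q e^{Q})}$, i.e. the two $\sigma_1$-free bounds.

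The main obstacle will be the transcendental algebra rather than any conceptual difficulty: selecting the substitutions that collapse each equality to the exact form $Y e^{Y} = \text{const}$, verifying in each case that the argument lies in $[-\tfrac1e, 0)$ so both real branches are available, and matching the principal branch to the upper bound and $W_{-1}$ to the lower bound through the monotonicity of $W_Z$. A secondary subtlety is justifying the elimination of $\sigma_1$ for the second pair, namely arguing that the extremum condition genuinely yields the optimal, self-consistent border scaling parameter, so that \eqref{eq:s2_isolation_bound_3} and \eqref{eq:s2_isolation_bound_4} are the correct $\sigma_1$-free specializations of \eqref{eq:s2_isolation_bound_1} and \eqref{eq:s2_isolation_bound_2}.
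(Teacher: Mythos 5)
Your derivation of the first two bounds is sound and matches the paper: setting $\Delta L_1 = 0$ reduces the guarantee \eqref{eq:sing_inf_guar_simp} to $\frac{\sigma_2}{\sigma_1}\left(2e^{\frac{i\Delta F}{\sigma_2}}-1\right) \leq e^{i\epsilon}$, the left-hand side is indeed coercive in $\sigma_2$ at both ends so the feasible set is an interval, and your substitution $v = -\frac{i\Delta F e^{-i\epsilon}}{\sigma_1}$, $Y = v - w$ with $Ye^Y = 2ve^v$ recovers exactly \eqref{eq:s2_isolation_bound_1} and \eqref{eq:s2_isolation_bound_2}, with the correct branch-to-endpoint pairing.

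The gap is in your account of \eqref{eq:s2_isolation_bound_3} and \eqref{eq:s2_isolation_bound_4}. These are \emph{not} obtainable by "the identical Lambert inversion" on any form of the guarantee, and they are \emph{not} specializations of \eqref{eq:s2_isolation_bound_1} and \eqref{eq:s2_isolation_bound_2} at an optimal $\sigma_1$: note that the branch-to-direction pairing is \emph{flipped} (in \eqref{eq:s2_isolation_bound_3} the principal branch gives a \emph{lower} bound, in \eqref{eq:s2_isolation_bound_4} the $W_{-1}$ branch gives an \emph{upper} bound), so the pair describes the \emph{union} of two disjoint rays, not an interval. An interval-endpoint argument like yours structurally cannot produce that; plugging the optimal $\sigma_1$ into your own formulas would simply reproduce \eqref{eq:s2_isolation_bound_1}--\eqref{eq:s2_isolation_bound_2} with the same signs, which the paper explicitly points out are the sign-flipped versions of \eqref{eq:s2_isolation_bound_3}--\eqref{eq:s2_isolation_bound_4}. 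In the paper these two bounds have a different origin: one must keep $\Delta L_1 > 0$ general (which you discarded at the outset) and demand that the argument of the Lambert function in the $\sigma_2$-isolation remain $\geq -\frac{1}{e}$ for \emph{all} $i$, i.e., that real-valued solutions exist when the PDF at distance $i\Delta F$ itself serves as the reference in comparisons with other databases. Minimizing that argument over $i$ (the minimum sits at $i = \frac{1}{\epsilon}$) and imposing the domain condition yields a two-sided exclusion on the reference scaling parameter — a forbidden middle band — which, after renaming $\Delta L_1 = i\Delta F$, is precisely \eqref{eq:s2_isolation_bound_3} and \eqref{eq:s2_isolation_bound_4}. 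Your appeal to the symmetric guarantee \eqref{eq:guar_sym} with worst case \eqref{eq:cont_var_sym_worst} also does not rescue this: the paper shows the symmetric form is automatically satisfied once the primary form is tight, so it generates no additional constraint, and its equality does not reduce to the form $Ye^Y = \mathrm{const}$ in any case. You landed on the right constant $W\!\left(-\frac{1}{2e}\right)$, but through a mechanism that cannot yield the stated inequality directions.
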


\begin{proof}
	By isolating $\sigma_2$ in Formula \eqref{eq:sing_inf_guar_simp}, we obtain an inequality which contains the LambertW function. Since the LambertW function has two branches, this gives two restrictions on the possible values that the $\sigma_2$ values can take on. By setting $\Delta L_1 = 0$, we determine that $\sigma_2$ must fall in the intersection of the two spans specified in Formulae \eqref{eq:s2_isolation_bound_1} and \eqref{eq:s2_isolation_bound_2} in order for the privacy guarantee to be satisfied for a pair of PDFs with location parameters at distances of 0 and $i \Delta F > 0$ away from the constraint. Since the LambertW function does not produce real-valued output for any input values less than $-\frac{1}{e}$, we are also able to derive restrictions on the allowable $\sigma_2$ values that produce valid input for the LambertW function. In this way, we additionally determine that the $\sigma_2$ value for a PDF with a location parameter at distance $i \Delta F > 0$ away from the constraint must fall into one of the two spans specified by Formulae \eqref{eq:s2_isolation_bound_3} and \eqref{eq:s2_isolation_bound_4}. Note that these conditions are necessary but not sufficient for satisfying the privacy guarantee.
\end{proof}

\begin{lemma} \label{lem:2}
	Through the selection of an appropriate $\sigma_1$ value when $\Delta L_1 = 0$, it is possible to calculate $\sigma_2$ values for any PDF with a location parameter at distance $i \Delta F > 0$ away from the constraint such that the inequalities of Lemma \ref{lem:1} are satisfied.
\end{lemma}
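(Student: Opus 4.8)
The plan is to specialize the single-constraint guarantee \eqref{eq:sing_inf_guar_simp} to the reference database sitting exactly on the constraint, $\Delta L_1 = 0$, where it collapses to a transparent one-variable inequality. Substituting $\Delta L_1 = 0$ makes $e^{-\Delta L_1/\sigma_1} = 1$, so the middle factor's denominator becomes $2-1 = 1$ and, after cancelling the $e^{\Delta L_1/\sigma_1}=1$ term, the guarantee reduces to $\sigma_2\bigl(2e^{i\Delta F/\sigma_2}-1\bigr) \le \sigma_1 e^{i\epsilon}$. Writing $g(\sigma_2) = \sigma_2\bigl(2e^{i\Delta F/\sigma_2}-1\bigr)$, the two roots of $g(\sigma_2) = \sigma_1 e^{i\epsilon}$ are exactly the boundary values produced by the two LambertW branches in \eqref{eq:s2_isolation_bound_1} and \eqref{eq:s2_isolation_bound_2}, so the privacy-admissible set for $\sigma_2$ is the closed interval between them. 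The goal is then to exhibit a single $\sigma_1$ for which this interval, intersected with the LambertW-validity window of \eqref{eq:s2_isolation_bound_3} and \eqref{eq:s2_isolation_bound_4}, is nonempty for every $i\Delta F > 0$.

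Next I would establish that $g$ is unimodal. Since $g(\sigma_2)\to\infty$ as $\sigma_2\to 0^+$ (the exponential dominates) and as $\sigma_2\to\infty$ (the bracketed factor tends to $1$ while $\sigma_2$ diverges), $g$ attains a global minimum. Differentiating and substituting $t = i\Delta F/\sigma_2$ turns the stationarity condition into $(t-1)e^{t-1} = -\tfrac{1}{2e}$, whose relevant root is $t^\star = 1 + W\!\left(-\tfrac{1}{2e}\right)$, a constant independent of $i$. Hence $g_{\min}(i) = \tfrac{2e^{t^\star}-1}{t^\star}\, i\Delta F$ is linear in $i$, and a privacy-admissible $\sigma_2$ exists precisely when $\sigma_1 e^{i\epsilon}\ge g_{\min}(i)$, that is, $\sigma_1 \ge \tfrac{2e^{t^\star}-1}{t^\star}\, i\Delta F\, e^{-i\epsilon}$. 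The nested LambertW expressions of \eqref{eq:s2_isolation_bound_3} and \eqref{eq:s2_isolation_bound_4} are exactly the $\sigma_2$ values at which the two branches coalesce; I would verify that they frame, rather than exclude, the interval found above, so that no additional constraint is introduced.

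Finally I would choose $\sigma_1$ large enough to dominate the right-hand side uniformly in $i$. The map $i\mapsto i\,e^{-i\epsilon}$ is maximized at $i = 1/\epsilon$ with value $\tfrac{1}{e\epsilon}$, so $\sup_{i>0} i\Delta F\, e^{-i\epsilon} = \tfrac{\Delta F}{e\epsilon}$ is finite. Consequently any $\sigma_1 \ge \tfrac{2e^{t^\star}-1}{t^\star}\cdot\tfrac{\Delta F}{e\epsilon}$ makes the admissible interval nonempty simultaneously for all $i\Delta F>0$, proving the claim, and incidentally identifying the smallest such $\sigma_1$, which is what the utility objective wants. The main obstacle is not this optimization, which is routine calculus, but the bookkeeping that ties the clean existence condition $\sigma_1 e^{i\epsilon}\ge g_{\min}(i)$ back to the four LambertW inequalities of Lemma \ref{lem:1}: I must confirm that the privacy window of \eqref{eq:s2_isolation_bound_1} and \eqref{eq:s2_isolation_bound_2} and the validity window of \eqref{eq:s2_isolation_bound_3} and \eqref{eq:s2_isolation_bound_4} overlap rather than merely touch, and that the correct branch assignments ($W$ versus $W_{-1}$) yield the upper and lower endpoints I expect, since a sign or branch error there would collapse the interval to a point.
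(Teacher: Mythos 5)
Your proposal is correct in substance and lands exactly where the paper does, but it gets there by a cleaner, more elementary route. The paper's proof stays inside the LambertW representation throughout: it isolates $\sigma_2$ to get the branch bounds, then demands that the LambertW \emph{input} stay above $-\tfrac{1}{e}$ for all $i$, analyzes that input as a function of $i$ (ruling out two spurious stationary points where the input would equal $-\tfrac{2}{e}$, leaving a single minimum at $i=\tfrac{1}{\epsilon}$), and extracts the $\sigma_1$ condition from validity at that minimum. You instead work directly with $g(\sigma_2)=\sigma_2\bigl(2e^{i\Delta F/\sigma_2}-1\bigr)$: real solvability of $g(\sigma_2)\le\sigma_1 e^{i\epsilon}$ is equivalent to the paper's input-validity condition, your $g_{\min}(i)$ is linear in $i$, and maximizing $i\,e^{-i\epsilon}$ at $i=\tfrac{1}{\epsilon}$ replaces the paper's unimodality analysis of the LambertW input. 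The two derivations are reconciled by the identity (with $w=W(-\tfrac{1}{2e})$, $t^\star=1+w$, so $e^{t^\star}=-\tfrac{1}{2w}$)
\begin{equation*}
\frac{2e^{t^\star}-1}{t^\star}=-\frac{1}{w},
\qquad\text{hence}\qquad
\frac{2e^{t^\star}-1}{t^\star}\cdot\frac{\Delta F}{e\epsilon}
=-\frac{\Delta F}{W\!\left(-\tfrac{1}{2e}\right)e\epsilon},
\end{equation*}
which is precisely the paper's choice of $\sigma_1$ in Formula \eqref{eq:sigma_1_inequality_2} (numerically $\approx 1.586\,\Delta F/\epsilon$). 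One caveat on the bookkeeping you defer at the end: your expectation that the privacy window of \eqref{eq:s2_isolation_bound_1}--\eqref{eq:s2_isolation_bound_2} and the validity window of \eqref{eq:s2_isolation_bound_3}--\eqref{eq:s2_isolation_bound_4} should \emph{overlap} rather than merely touch is backwards. With the minimal $\sigma_1$ the paper shows the substituted bounds \eqref{eq:s2_isolation_bound_1}--\eqref{eq:s2_isolation_bound_2} become identical to \eqref{eq:s2_isolation_bound_3}--\eqref{eq:s2_isolation_bound_4} with inequality directions flipped, so the admissible set collapses to exactly the two branch endpoints of your interval --- and that is the intended conclusion of the lemma (Formula \eqref{eq:s2_equality} with $Z\in\{0,-1\}$ gives exactly two candidate $\sigma_2$ values), not a degenerate failure. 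So the final verification you flag as the main obstacle does go through, but its successful outcome is ``touching,'' not ``overlapping.''
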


\begin{proof}
	By calculating $\sigma_1$ as shown in Formula \eqref{eq:sigma_1_inequality_2} when $\Delta L_1 = 0$, the right-hand sides of Formulae \eqref{eq:s2_isolation_bound_1} and \eqref{eq:s2_isolation_bound_2} become equal to the right-hand sides of Formulae \eqref{eq:s2_isolation_bound_3} and \eqref{eq:s2_isolation_bound_4}, respectively. As a result, in order to satisfy all four inequalities of Lemma \ref{lem:1}, a PDF with a location parameter at distance $i \Delta F > 0$ from the constraint can be assigned one of the two possible values (by setting Z to either 0 or -1) calculated using Formula \eqref{eq:s2_equality}. 
	
	{
		\begin{equation} \label{eq:sigma_1_inequality_2}
		\sigma_1 = -\frac{\Delta F}{W \left( -\frac{1}{2e} \right) e \epsilon}
		\end{equation}
	}
	
	{
		\begin{equation} \label{eq:s2_equality}
		\sigma_2 = -\frac{i \Delta F e^{i \epsilon} \sigma_1}{W_Z \left( -\frac{2 i \Delta F e^{-i \epsilon} e^{-\frac{i \Delta F e^{-i \epsilon}}{\sigma_1}} }{\sigma_1} \right) e^{i \epsilon} \sigma_1 + i \Delta F}
		\end{equation}
	}
\end{proof}

\begin{lemma} \label{lem:3}
	The sign of the denominator in the derivative taken with respect to $i$ of the $\sigma_2$ calculation of Lemma \ref{lem:2} depends on which branch is indicated by the branch index variable $Z$.
\end{lemma}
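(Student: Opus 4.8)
The plan is to differentiate the closed form for $\sigma_2$ from Lemma \ref{lem:2} directly and track where the branch index $Z$ enters the sign of the resulting denominator. For brevity, write $g(i) = -\frac{2 i \Delta F e^{-i\epsilon} e^{-\frac{i\Delta F e^{-i\epsilon}}{\sigma_1}}}{\sigma_1}$ for the argument of the LambertW function and set $w = W_Z(g(i))$, so that Formula \eqref{eq:s2_equality} reads $\sigma_2 = N/D$ with numerator $N = -i\Delta F e^{i\epsilon}\sigma_1$ and denominator $D = w\, e^{i\epsilon}\sigma_1 + i\Delta F$. The quotient rule gives $\frac{d\sigma_2}{di} = \frac{N'D - N D'}{D^2}$, so the only way the branch can affect the sign of the denominator is through whatever factors are forced into it when $D'$ is expanded.

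The key step is the differentiation of $w = W_Z(g(i))$. Using the standard identity $\frac{d}{dz}W_Z(z) = \frac{1}{(1 + W_Z(z))\, e^{W_Z(z)}}$, valid wherever $g(i) \neq -\frac{1}{e}$ (equivalently $w \neq -1$), the chain rule yields $\frac{dw}{di} = \frac{g'(i)}{(1 + w)\, e^{w}}$. Substituting this into $D' = \frac{dw}{di}\, e^{i\epsilon}\sigma_1 + w\epsilon\, e^{i\epsilon}\sigma_1 + \Delta F$ places the factor $(1 + w)$ in the denominator of one of the three terms of $D'$. Clearing that denominator so that $\frac{d\sigma_2}{di}$ is written as a single fraction multiplies $D^2$ by $(1 + w)\, e^{w}$. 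Hence the denominator of the fully simplified derivative is proportional to $D^2\, e^{w}\, \bigl(1 + W_Z(g(i))\bigr)$.

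It then remains only to read off the sign. The factor $D^2$ is nonnegative (and strictly positive wherever $\sigma_2$ is finite), and $e^{w} > 0$ always, so the sign of the denominator is governed entirely by the factor $1 + W_Z(g(i))$. By the defining conditions $W(x) \geq -1$ and $W_{-1}(x) \leq -1$, the primary branch gives $1 + W(g(i)) \geq 0$ while the $-1$ branch gives $1 + W_{-1}(g(i)) \leq 0$. Thus the denominator is nonnegative when $Z = 0$ and nonpositive when $Z = -1$, which is exactly the branch-dependence asserted.

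The main obstacle I anticipate is the bookkeeping in the second paragraph: one must verify that, after clearing the $(1+w)$ arising from $\frac{dw}{di}$ and collecting over a common denominator, no further branch-dependent factor survives in the denominator and that $(1 + w)$ is not inadvertently cancelled against the numerator. One should also confirm that the differentiation identity applies on the relevant range, i.e.\ that $g(i) \neq -\frac{1}{e}$ so that $w \neq -1$ and $\frac{dw}{di}$ is well-defined; at the boundary $g(i) = -\frac{1}{e}$ the two branches coincide and the derivative formula degenerates, so that point must be excluded from the statement.
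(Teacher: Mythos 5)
Your decomposition is not the paper's, and the difference is not cosmetic: your fraction is not fully reduced. Writing $w = W_Z(g(i))$, $N = -i\Delta F e^{i\epsilon}\sigma_1$ and $D = w\,e^{i\epsilon}\sigma_1 + i\Delta F$, the expression you obtain after clearing $(1+w)e^{w}$ has a common factor of $D e^{w}$ in its numerator and denominator. Cancelling it yields the form the paper actually works with (Formula \eqref{eq:app_sing_inf_sigma2_der}),
\begin{equation*}
\frac{d\sigma_2}{di} \;=\; -\frac{\left(w + i\epsilon\right)\Delta F e^{i\epsilon}\sigma_1}{\left(w\, e^{i\epsilon}\sigma_1 + i\Delta F\right)\left(1 + w\right)},
\end{equation*}
whose denominator is $D(1+w)$, not $D^{2}e^{w}(1+w)$. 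Your own closing worry---that some factor might be ``inadvertently cancelled against the numerator''---is exactly what happens, and it reverses your conclusion: the paper proves $D<0$ on both branches, so the reduced denominator is \emph{negative} for the principal branch and \emph{positive} for the $-1$ branch, the opposite of what you assert. Lemma \ref{lem:6} and Theorem \ref{th:1} quote precisely those directions, so your version of the lemma cannot be substituted for the paper's.

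More importantly, hiding $D$ inside a square does not remove the difficulty; it relocates it. The substantive content of the paper's proof is establishing that $W_Z(g)\,e^{i\epsilon}\sigma_1 + i\Delta F < 0$ for both branches: one rewrites this as $W_Z(g) < -i\Delta F e^{-i\epsilon}/\sigma_1$, shows---using the specific $\sigma_1$ chosen in Lemma \ref{lem:2}---that the right-hand side can be expressed through the \emph{principal} branch because $i\epsilon e^{-i\epsilon+1}W\!\left(-\tfrac{1}{2e}\right)$ never drops below $W\!\left(-\tfrac{1}{2e}\right) > -1$, and then compares the two LambertW inputs (which differ by a factor of $2$). In your decomposition this troublesome factor $De^{w}$ simply migrates into the numerator, so its sign analysis is still needed before the lemma can do its job: Lemmas \ref{lem:4} and \ref{lem:6} analyze the reduced numerator $-(w+i\epsilon)\Delta F e^{i\epsilon}\sigma_1$, and that analysis does not apply to your inflated numerator. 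The observation about $1+W_Z$ (nonnegative for the principal branch, nonpositive for the $-1$ branch) is the trivial half of the paper's argument; the half you have omitted is the one this lemma exists to record.
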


\begin{proof}
	Through analysis of the signs of all factors in the denominator, we show that the sign will always be negative for the principal branch and will always be positive for the -1 branch.
\end{proof}

\begin{lemma} \label{lem:4}
	As a function of $i$, the $\sigma_2$ calculation of Lemma \ref{lem:2} is unimodal for either branch index.
\end{lemma}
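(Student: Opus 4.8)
The plan is to establish unimodality by showing that the derivative $\frac{d\sigma_2}{di}$ of the expression in Formula \eqref{eq:s2_equality} changes sign exactly once on $i \in (0, \infty)$. Since $\sigma_2$ is a ratio $N(i)/D(i)$ with $N(i) = -i\Delta F e^{i\epsilon}\sigma_1$ and $D(i) = W_Z\bigl(z(i)\bigr) e^{i\epsilon}\sigma_1 + i\Delta F$, the quotient rule gives a derivative of the form (numerator)/(denominator). The first step is therefore to compute this derivative explicitly, using the identity $\frac{dW_Z(z)}{dz} = \frac{W_Z(z)}{z\,(1 + W_Z(z))}$ together with the chain rule to differentiate the nested LambertW term, and then to simplify by exploiting the defining relation $W_Z(z)\,e^{W_Z(z)} = z$ to remove the explicit exponentials coming from $z(i)$.

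Second, I would invoke Lemma \ref{lem:3}, which already pins down the sign of the denominator of this derivative: negative on the principal branch and positive on the $-1$ branch. Because the denominator never vanishes and keeps a fixed sign throughout $(0,\infty)$, the sign of $\frac{d\sigma_2}{di}$ is governed entirely by the sign of its numerator multiplied by the known, constant sign of the denominator. Unimodality then reduces to the single claim that the numerator of the derivative has exactly one zero for $i > 0$, i.e. it changes sign exactly once. Whether the resulting single extremum is a maximum or a minimum is determined by the branch through the denominator sign, but in either case a single sign change yields a unimodal function.

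Third, and this is where the real work lies, I would analyze that numerator. Writing $W = W_Z(z(i))$ and substituting $u = \frac{i\Delta F e^{-i\epsilon}}{\sigma_1}$ so that $z = -2u\,e^{-u}$, the numerator can be re-expressed as a comparatively simple transcendental function of $i$ in which the nested exponentials have been absorbed into $W$. I would then show this function is strictly monotonic in $i$ — either by a direct sign computation of its own derivative, again using the LambertW derivative identity and the branch monotonicity recorded in Lemma \ref{lem:3} — so that it admits at most one root. Complementing this with the boundary behaviour as $i \to 0^+$ and $i \to \infty$ establishes that the numerator does cross zero, and does so exactly once.

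The main obstacle is precisely this third step: the numerator is a transcendental expression built from a LambertW argument that itself contains a product of $i$ with two nested exponentials, so naive differentiation produces an unwieldy expression. The key to taming it is to repeatedly apply $W_Z(z)\,e^{W_Z(z)} = z$ to collapse the exponentials and to use the fact that the branch of $W_Z$ is monotonic on the relevant subinterval of $[-\tfrac{1}{e}, 0)$ — the same structural facts that drive Lemma \ref{lem:3} — reducing the sign question to one about a single monotonic quantity rather than the full nested composition.
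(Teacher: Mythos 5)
Your overall skeleton matches the paper's: differentiate Formula \eqref{eq:s2_equality} with respect to $i$, use Lemma \ref{lem:3} to fix the sign of the denominator of that derivative on each branch, and reduce unimodality to showing that the numerator changes sign exactly once. The gap is in your third step. The numerator of the derivative is
\begin{equation*}
-\left( W_Z\!\left(z(i)\right) + i\epsilon \right)\Delta F e^{i\epsilon}\sigma_1 ,
\qquad
z(i) = -\frac{2 i \Delta F e^{-i\epsilon} e^{-\frac{i\Delta F e^{-i\epsilon}}{\sigma_1}}}{\sigma_1},
\end{equation*}
and it is \emph{not} strictly monotonic in $i$ for the principal branch. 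There the factor $W(z(i)) + i\epsilon$ vanishes at $i=0$, is initially negative (this is precisely what Lemma \ref{lem:6} later establishes: $W$ initially decreases faster than $i\epsilon$ increases), and then rises through a single zero toward $+\infty$; consequently the numerator equals $0$ at $i=0$, is positive on an interval, and is negative afterwards --- it rises and then falls. So ``strictly monotone, hence at most one root'' is not available, and the argument you lean on for the key counting step collapses.

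What actually makes the count work --- and what the paper does --- is to solve the zero condition in closed form rather than by monotonicity. Setting $W_Z(z(i)) = -i\epsilon$ and applying the identity $W_Z(xe^x) = x$ with $x = -i\epsilon$ turns the condition into $z(i) = -i\epsilon e^{-i\epsilon}$; cancelling the common factor $-ie^{-i\epsilon}$ (valid for $i>0$) and isolating $i$ yields one explicit zero per branch, Formula \eqref{eq:app_lemma_4_step_3}. The subtlety your plan misses is the branch-range restriction hiding in that identity: $W_Z(z(i)) = -i\epsilon$ forces $-i\epsilon \geq -1$ (i.e., $i \le \frac{1}{\epsilon}$) on the principal branch and $-i\epsilon \le -1$ (i.e., $i \ge \frac{1}{\epsilon}$) on the $-1$ branch. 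The reduced equation $ie^{-i\epsilon} = c$ has, in general, \emph{two} positive roots; it is the branch restriction that assigns exactly one of them to each branch, giving exactly one sign change per branch index. Incidentally, this also shows how a monotonicity argument could be salvaged: $ie^{-i\epsilon}$ is strictly increasing on $\left[0,\frac{1}{\epsilon}\right]$ and strictly decreasing on $\left[\frac{1}{\epsilon},\infty\right)$, so restricted to the interval on which each branch's identity is valid, uniqueness does follow from monotonicity --- but that restriction is essential, and it is absent from your proposal.
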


\begin{proof}
	Since the sign of the denominator of its derivative cannot change when restricted to the same branch index, a sign change could only be induced by the numerator. Through analysis of the factors in the numerator, we show that exactly one sign change occurs for each branch index. Since the full derivative has only one sign change when considering $W$ and $W_{-1}$ separately, each of the functions are unimodal.
\end{proof}

\begin{lemma} \label{lem:5}
	As a function of $i$, the input to $W_Z$ in the $\sigma_2$ calculation of Lemma \ref{lem:2} is unimodal and is initially decreasing.
\end{lemma}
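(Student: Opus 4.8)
The plan is to extract the argument of $W_Z$ from Formula \eqref{eq:s2_equality}, namely
\[
g(i) = -\frac{2 i \Delta F\, e^{-i\epsilon}\, e^{-\frac{i\Delta F e^{-i\epsilon}}{\sigma_1}}}{\sigma_1},
\]
and to exploit a hidden self-similar structure inside it. The key observation I would make is that the factor $i e^{-i\epsilon}$ appears both as a multiplicative term and, scaled by $\Delta F/\sigma_1$, inside the inner exponent. Writing $u(i) = i e^{-i\epsilon}$ and $\alpha = \Delta F/\sigma_1 > 0$, the argument factors as $g(i) = -2\alpha\,\phi(u(i))$ with $\phi(v) = v e^{-\alpha v}$. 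This reduces the statement to understanding a composition of two elementary unimodal maps.

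First I would record the shape of each map separately. The classical function $u(i) = i e^{-i\epsilon}$ has $u'(i) = e^{-i\epsilon}(1 - i\epsilon)$, so it increases on $(0, 1/\epsilon)$, attains its maximum $u(1/\epsilon) = 1/(e\epsilon)$, and decreases thereafter. Likewise $\phi'(v) = e^{-\alpha v}(1 - \alpha v)$ is positive precisely when $v < 1/\alpha$. Differentiating the composition gives $g'(i) = -2\alpha\,\phi'(u(i))\,u'(i)$, so the sign of $g'$ is governed by the two factors $\phi'(u(i))$ and $u'(i)$.

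The crux --- and the step I expect to be the main obstacle --- is to show that $\phi'(u(i))$ never changes sign, i.e. that $\alpha\,u(i) < 1$ for all $i \ge 0$. Since $u$ is bounded above by $1/(e\epsilon)$, it suffices to prove $\alpha/(e\epsilon) < 1$, equivalently $\sigma_1 > \Delta F/(e\epsilon)$. Here I would invoke the specific value of $\sigma_1$ fixed in Lemma \ref{lem:2}, namely $\sigma_1 = -\Delta F / \bigl(W(-\tfrac{1}{2e})\, e\,\epsilon\bigr)$; the desired inequality then collapses to $-W(-\tfrac{1}{2e}) < 1$, which holds because the principal branch satisfies $W(x) > -1$ for every $x > -1/e$ and $-\tfrac{1}{2e} > -\tfrac{1}{e}$. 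This is genuinely where the hypothesis bites: were $\sigma_1$ small enough to let $u(i)$ cross $1/\alpha$, the factor $\phi'(u(i))$ would flip sign twice and $g$ would become bimodal, so the precise choice of $\sigma_1$ is essential rather than incidental.

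With $\phi'(u(i)) > 0$ established uniformly, the sign of $g'(i)$ is opposite to that of $u'(i)$: thus $g$ is strictly decreasing on $(0, 1/\epsilon)$ and strictly increasing on $(1/\epsilon, \infty)$. This exhibits a single interior extremum (a minimum at $i = 1/\epsilon$), giving unimodality, and the decreasing behaviour on the initial interval --- together with $g(0) = 0$, whence $g(i) < 0$ for small $i > 0$ --- confirms that the argument to $W_Z$ is initially decreasing, completing the argument.
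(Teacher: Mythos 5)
Your proof is correct, and its skeleton---differentiating the argument of $W_Z$ and locating the genuine sign change of the derivative at $i = \frac{1}{\epsilon}$---matches the paper's. Where you genuinely diverge is in how the second potential critical point is eliminated. The paper factors the derivative into $(i\epsilon - 1)(i \Delta F e^{-i\epsilon} - \sigma_1)$ times positive terms, and rules out the zero of the second factor indirectly: substituting $\sigma_1 = i \Delta F e^{-i\epsilon}$ back into the argument of $W_Z$ forces that argument to equal $-\frac{2}{e}$, which lies outside the admissible domain $\left[-\frac{1}{e}, \infty\right)$ of the LambertW function, so such an $i$ cannot arise in the regime where the construction of Lemma \ref{lem:2} is valid. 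You instead prove directly and quantitatively that this factor never vanishes: writing the argument as $-2\alpha\,\phi(u(i))$ with $u(i) = ie^{-i\epsilon}$, $\phi(v) = ve^{-\alpha v}$, $\alpha = \Delta F/\sigma_1$, you bound $\alpha\,u(i) \leq \frac{\alpha}{e\epsilon} = -W\left(-\tfrac{1}{2e}\right) < 1$ using the explicit value of $\sigma_1$ fixed in Lemma \ref{lem:2}. Your version buys transparency: it makes explicit exactly where the choice of $\sigma_1$ bites, and it shows the conclusion persists for any larger $\sigma_1$ (since $\alpha$ only shrinks). The paper's version, conversely, never needs the explicit value of $\sigma_1$---only the fact that it was chosen so the $W_Z$ input stays above $-\frac{1}{e}$---so it covers the entire admissible range of $\sigma_1$ from Lemma \ref{lem:2} in one stroke. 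Both arguments establish the same facts: the argument vanishes at $i=0$, strictly decreases on $\left(0, \frac{1}{\epsilon}\right)$, and increases thereafter.
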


\begin{proof}
	Through analysis of the input, we can show that it is equal to 0 when $i=0$ and is initially decreasing. Furthermore, we show that the derivative has exactly one sign change at $i=\frac{1}{\epsilon}$, making the function unimodal.
\end{proof}

\begin{lemma} \label{lem:6}
	As a function of $i$, the mode of the $\sigma_2$ calculation of Lemma \ref{lem:2} is a minimum for the principal branch and a maximum for the -1 branch.
\end{lemma}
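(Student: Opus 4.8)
The plan is to reduce the classification to the orientation of a unimodal function and then read that orientation off from the behaviour of $\sigma_2$ at the ends of its domain $i \in (0,\infty)$. Lemma \ref{lem:4} already guarantees that, for each fixed branch index, $\sigma_2(i)$ from Formula \eqref{eq:s2_equality} has a single interior critical point at which its derivative changes sign, so the mode is a genuine local extremum. A unimodal function with exactly one interior extremum is either decreasing-then-increasing (the extremum is a minimum) or increasing-then-decreasing (it is a maximum), so it suffices to determine the trend of $\sigma_2$ near one suitable endpoint for each branch.

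To obtain that trend I would analyse the argument of $W_Z$ in Formula \eqref{eq:s2_equality}, which is negative for every $i>0$ and tends to $0$ as $i \to 0^+$ and (because of the factor $e^{-i\epsilon}$) also as $i \to \infty$; this is the same argument whose monotonicity is described in Lemma \ref{lem:5}. For the principal branch I would use the expansion $W(u)=u+O(u^2)$ near $u=0$: as $i\to0^+$ the resulting $0/0$ form resolves to $\sigma_2\to\sigma_1$, while as $i\to\infty$ the leading behaviour gives $\sigma_2\sim e^{i\epsilon}\sigma_1\to\infty$. For the $-1$ branch I would instead use $W_{-1}(u)\sim\ln(-u)$ as $u\to0^-$, which yields $\sigma_2\to0^+$ as $i\to0^+$ and a finite horizontal asymptote $\sigma_2\to\frac{\Delta F}{\epsilon}$ as $i\to\infty$, with the logarithmic growth of $W_{-1}$ cancelling against the numerator.

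With these endpoint behaviours the classification is immediate. For the principal branch, $\sigma_2$ is finite at the left end but grows without bound as $i\to\infty$; a unimodal function with a single interior extremum whose right tail diverges to $+\infty$ must be increasing to the right of that extremum, so the extremum is a minimum. For the $-1$ branch, $\sigma_2\to0^+$ as $i\to0^+$ while $\sigma_2>0$ for all $i>0$, so $\sigma_2$ is increasing immediately to the right of the origin; a unimodal function that starts by increasing has its single interior extremum as a maximum. Equivalently, the argument can be carried out through the sign of the derivative: Lemma \ref{lem:3} fixes the sign of its denominator (negative for $W$, positive for $W_{-1}$), so combining the tail slope (positive for $W$, negative for $W_{-1}$) with the single numerator sign change from Lemma \ref{lem:4} pins the orientation of the mode and reproduces the stated conclusion.

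I expect the main obstacle to be the careful bookkeeping of the $i\to\infty$ asymptotics, where the logarithmic behaviour of $W_{-1}$ near $0^-$ competes against the exponential factors $e^{\pm i\epsilon}$ and must be tracked to the correct order to confirm that the $-1$ branch approaches $\frac{\Delta F}{\epsilon}$ from above (hence a negative tail slope) rather than merely converging; the $i\to0^+$ limit for the principal branch is an indeterminate $0/0$ form requiring the second-order term of the $W$ expansion, but it is comparatively routine once the expansion is invoked.
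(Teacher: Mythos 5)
Your proposal is correct, but it takes a genuinely different route from the paper's. The paper never computes limits of $\sigma_2$: it argues from Lemma \ref{lem:5} (the input to $W_Z$ is initially decreasing from $0$) that the factor $W_Z\left(\cdot\right)+i\epsilon$ appearing in the numerator of the derivative is initially negative for both branches --- for $W$ because a zero of that factor can only ever occur if $W$ initially falls faster than $i\epsilon$ rises, and for $W_{-1}$ because $W_{-1}$ starts from $-\infty$ --- so the numerator of the derivative is positive before the mode, and the orientation is then read directly off the denominator signs established in Lemma \ref{lem:3}. You instead classify the mode by endpoint asymptotics of $\sigma_2$ itself combined with the unimodality of Lemma \ref{lem:4}: for the principal branch, $\sigma_2\to\sigma_1$ as $i\to0^+$ and $\sigma_2\sim e^{i\epsilon}\sigma_1\to\infty$ as $i\to\infty$, so the single interior extremum must be a minimum; for the $-1$ branch, $\sigma_2\to0^+$ as $i\to0^+$ while $\sigma_2>0$, forcing an initial increase and hence a maximum. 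Your asymptotics check out against Formula \eqref{eq:s2_equality} (in fact the first-order expansion $W(u)\approx u$ already resolves the $0/0$ limit, and the finite asymptote $\frac{\Delta F}{\epsilon}$ of the $-1$ branch is not needed at all --- the behaviour at the origin alone settles that branch). As for what each approach buys: yours yields explicit limiting values (which are independently informative, e.g.\ they match the bounds reported in Table \ref{table:sigma_summary}) and, for the principal branch, needs neither Lemma \ref{lem:3} nor Lemma \ref{lem:5}; the paper's is purely sign-based and avoids Lambert W asymptotics entirely. One point you should make explicit: the positivity of $\sigma_2$ on the $-1$ branch, which your ``increasing near the origin'' step relies on, needs justification --- it follows either from the proof of Lemma \ref{lem:3}, where the factor $W_{-1}\left(\cdot\right)e^{i\epsilon}\sigma_1+i\Delta F$ is shown to be always negative, or from your own near-origin expansion, where that factor behaves like $\sigma_1 e^{i\epsilon}\ln\left(\cdot\right)\to-\infty$ and stays negative.
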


\begin{proof}
	From Lemma \ref{lem:3}, we know that the sign of the denominator in the derivative of Formula \ref{eq:s2_equality} with respect to $i$  will always be negative for the principal branch and will always be positive for the -1 branch. From Lemma \ref{lem:4}, we know that a single sign change occurs in the numerator. Through further analysis of the numerator, we show that the numerator is initially positive until its zero, from which point it is negative. This results in the mode being a minimum for the principal branch and a maximum for the -1 branch.
\end{proof}

\begin{theorem} \label{th:1}
	The $\sigma$ calculation method of Lemma \ref{lem:2} provides a solution which optimally satisfies the differential privacy guarantee.
\end{theorem}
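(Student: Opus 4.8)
The statement bundles two claims: that the assignment of scaling parameters produced by Lemma~\ref{lem:2} is \emph{feasible} (it satisfies the generalized guarantee of Formula~\eqref{eq:sing_inf_guar_simp} for every pair of databases and every placement of $x$), and that it is \emph{optimal} in the sense adopted earlier, namely that each $\sigma$ is the smallest value for which feasibility can hold. My plan is to establish optimality first, since it follows almost directly from the preceding lemmas, and then to do the harder work of upgrading ``feasible against the border'' to ``feasible against all databases.''

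For optimality I would argue as follows. Lemma~\ref{lem:1} shows that, for the comparison between the border database ($\Delta L_1 = 0$) and a database at distance $i\Delta F$, any $\sigma_2$ satisfying the guarantee must lie in the intersection of the four spans of Formulae~\eqref{eq:s2_isolation_bound_1}--\eqref{eq:s2_isolation_bound_4}; these are necessary conditions. Lemma~\ref{lem:2} then chooses $\sigma_1$ via Formula~\eqref{eq:sigma_1_inequality_2} precisely so that the bounds in \eqref{eq:s2_isolation_bound_1}--\eqref{eq:s2_isolation_bound_2} coincide with those in \eqref{eq:s2_isolation_bound_3}--\eqref{eq:s2_isolation_bound_4}, collapsing the feasible intersection to its two endpoints, the branch values of Formula~\eqref{eq:s2_equality}. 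Because these are boundary values of a set of necessary conditions, no smaller scaling parameter can satisfy the guarantee; hence, once the correct branch is selected, \eqref{eq:s2_equality} is the minimal feasible $\sigma_2$. I would use Lemma~\ref{lem:6} to make the branch selection: the principal branch traces a curve whose mode is a minimum and the $-1$ branch one whose mode is a maximum, and I would pick, at each $i$, the branch whose value is both real and yields the smaller admissible $\sigma$. I would also argue that the particular $\sigma_1$ of \eqref{eq:sigma_1_inequality_2} is itself forced: using Lemma~\ref{lem:5}, the argument of $W_Z$ starts at $0$, is initially decreasing, and reaches its unique extremum at $i=\tfrac{1}{\epsilon}$, and I would check that \eqref{eq:sigma_1_inequality_2} is exactly the value making that extremum equal to $-\tfrac{1}{e}$, so the argument stays in the real domain $[-\tfrac1e,0)$ for all $i$ and the two branches join continuously at $i=\tfrac1\epsilon$. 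Any smaller $\sigma_1$ would push the argument below $-\tfrac1e$ for some $i$, emptying the feasible set there; so $\sigma_1$ is minimal as well.

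For feasibility across all pairs I would invoke the reductions already in place. By the worst-case analysis of Section~\ref{sec:cont_var}, the continuous variable can be pinned to the border of the constraint in the forward case (Formula~\eqref{eq:cont_var_worst}) and to $\Delta x_1 = i\Delta F$ in the symmetric case (Formula~\eqref{eq:cont_var_sym_worst}), and by Section~\ref{sec:gen_dist} a separation of $i\Delta F$ carries the bound $e^{i\epsilon}$. The remaining task is to show that, with $\sigma(\cdot)$ assigned as above, the guarantee for an arbitrary pair at distances $d_1 \le d_2$ from the constraint is dominated by the case $d_1 = 0$, which holds with equality by construction. Here I would treat the guarantee expression as a function of $d_1$ at fixed separation $d_2 - d_1 = i\Delta F$ and show it is non-increasing in $d_1$, using the unimodality of $\sigma(\cdot)$ from Lemma~\ref{lem:4} and the sign information from Lemmas~\ref{lem:3} and~\ref{lem:6} to control how both the ratio $\sigma_2/\sigma_1$ and the normalization ratio $n_1/n_2$ move as the nearer location parameter is slid away from the border.

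The main obstacle is exactly this last reduction. A naive composition through the border is too weak: bounding $p_{d_1}(x)/p_{d_2}(x)$ by $\bigl(p_{d_1}(x)/p_0(x)\bigr)\bigl(p_0(x)/p_{d_2}(x)\bigr)$ yields only $e^{(i_1+i_2)\epsilon}$, whereas the guarantee requires $e^{|i_1-i_2|\epsilon}$. So I cannot route through the border and must instead prove a genuine monotonicity statement: that pushing the closer database onto the constraint boundary is the worst case for the ratio. This is delicate because moving $d_1$ changes the two scaling parameters, the two normalization factors, and the worst-case location of $x$ simultaneously, and the sign of the net effect is what Lemmas~\ref{lem:3}--\ref{lem:6} are designed to pin down; assembling them into a single monotonicity argument, while correctly tracking the branch switch at $i=\tfrac1\epsilon$, is where the real effort lies.
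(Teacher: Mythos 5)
Your proposal has two genuine gaps, one in each half of the argument. First, the optimality half: your branch-selection rule (``pick, at each $i$, the branch whose value is both real and yields the smaller admissible $\sigma$'') is wrong and would break the proof. For $i < \tfrac{1}{\epsilon}$ the $W_{-1}$ branch of Formula \eqref{eq:s2_equality} gives the \emph{smaller} value: by Lemmas \ref{lem:5} and \ref{lem:6} it increases from $0$ up to the point where the two branches meet at $i=\tfrac{1}{\epsilon}$, while the principal branch decreases toward that point. So your rule produces a $\sigma$ curve that is \emph{increasing} on $\left(0,\tfrac{1}{\epsilon}\right)$. But every inequality you are minimizing over --- the guarantee form \eqref{eq:sing_inf_guar_simp} and hence the necessary conditions of Lemma \ref{lem:1} --- was derived from the worst-case placement of $x$ in Section \ref{sec:cont_var}, which is valid only when scaling parameters are non-increasing in the distance from the constraint ($\sigma_1 \geq \sigma_2$). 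With an increasing $\sigma$ curve those conditions are no longer the correct ones, so ``smallest point of the necessary set at each $i$'' establishes neither feasibility nor optimality. The paper selects the \emph{larger} (principal-branch) value before $i=\tfrac{1}{\epsilon}$ and the $W_{-1}$ value after, precisely to make $\sigma$ monotonically decreasing; its optimality argument is then a coupled one across different $i$: before $\tfrac{1}{\epsilon}$, monotonicity forces the choice into the span of Formula \eqref{eq:s2_isolation_bound_3}, whose minimum is taken; after $\tfrac{1}{\epsilon}$, any lower choice forces all earlier values upward, which shifts the mode of Formula \eqref{eq:s2_isolation_bound_1} and then contradicts monotone decrease. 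A per-$i$ minimization cannot see this coupling. (Relatedly, your claim that any smaller $\sigma_1$ pushes the $W$-argument below $-\tfrac{1}{e}$ is false: the realness condition admits a second, disjoint span of small $\sigma_1$ values via the $W_{-1}$ branch; that option is excluded by monotonicity, not by realness.)

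Second, the feasibility half is simply not proven: you explicitly leave the arbitrary-pair case open (``where the real effort lies''), and the route you dismiss is the one that works. With non-increasing $\sigma$, the worst-case ratio for databases at distances $a < b$ from the constraint occurs at the constraint border and equals $h(b)/h(a)$, where $h(t) = \sigma_t \left(2 - e^{-t/\sigma_t}\right) e^{t/\sigma_t}$. The construction of Lemma \ref{lem:2} makes the guarantee an \emph{equality} for border pairs, i.e.\ $h(b)/h(0) = e^{b\epsilon/\Delta F}$ for every $b$. Hence $h(b)/h(a) = \left( h(b)/h(0) \right) / \left( h(a)/h(0) \right) = e^{(b-a)\epsilon/\Delta F}$ exactly: the border bounds compose by \emph{division} of equalities, not multiplication of one-sided inequalities, so the factor $e^{(i_1+i_2)\epsilon}$ you object to never arises. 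This is precisely what the paper's proof asserts (the guarantee holds with equality, at the worst-case $x$, for all pairs), and it also disposes of the symmetric case immediately, since equality together with $e^{i\epsilon} \geq 1$ yields the reverse inequality. Your proposed substitute --- a monotonicity-in-$d_1$ statement at fixed separation, tracking how $\sigma_2/\sigma_1$, $n_1/n_2$, and the worst-case $x$ move simultaneously --- is both unnecessary and never carried out, so the theorem's claim that the guarantee is satisfied remains unproven in your write-up.
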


\begin{proof}
	To show that the worst-case analysis of Section \ref{sec:cont_var} holds, we show that the $\sigma$ values are monotonically decreasing as $i$ increases. Since the privacy guarantee of Formula \ref{eq:sing_inf_guar_simp} is non-symmetric, we also apply the $\sigma$ calculations to an alternate formulation which represents the symmetric case and show that this formulation holds. Finally, we must consider the optimality of the $\sigma$ values. Since lower $\sigma$ values are preferable, we show through further examination of the four inequalities of Lemma \ref{lem:1} that it is not possible to calculate lower $\sigma$ values that could satisfy the privacy guarantee.
\end{proof}

\subsection{Arbitrary Finite Constraints} \label{sec:arbitrary_finite}

In this configuration, we have an arbitrary set of finite spans of infeasible space. The complement of these spans is the feasible space, thus it extends to both positive and negative infinity.

As stated in Section \ref{sec:cont_var}, since we have no constraints that go to infinity, the continuous random variable worst-case analysis requires that we use the same scaling parameter everywhere. The normalization factors that we use here correspond to summations using finite integrals. We start by considering pairs of PDFs with location parameters in the same span of truncated space. Since the scaling parameters are all the same, we can apply the identities $L_2 = L_1 e^{\frac{i \Delta F}{\sigma}}$ and $R_2 = R_1 e^{-\frac{i \Delta F}{\sigma}}$ to produce the privacy guarantee shown in Formula \eqref{eq:finite_guar}. Note that since there are no infinite constraints present, the indicator variables for such constraints inside $L_1$ and $R_1$ are equal to 0.

{
	\begin{equation} \label{eq:finite_guar}
	\frac{1 - \left( 
		L_1 e^{\frac{i \Delta F}{\sigma}} + 
		R_1 e^{-\frac{i \Delta F}{\sigma}} \right) }
	{1 - \left( L_1 + R_1 \right) } 		
	\left( e^{\frac{i \Delta F}{\sigma}}  \right)
	\leq e^{i \epsilon}
	\end{equation}
}

For pairs of PDFs with location parameters in different spans of truncated space, the sets of constraints to their left and right will differ. This would require each normalization factor to be written in terms of its own sets of constraints. As we will see later, there is only one specific case of PDFs with location parameters in different spans of truncated space that we must explicitly consider in order to prove the guarantee for any pair of databases.

\begin{lemma} \label{lem:7}
	For each span of truncated space, there exists a value of $\sigma$ for which the privacy guarantee is satisfied for any pair of PDFs with location parameters within that span.
\end{lemma}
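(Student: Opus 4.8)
The plan is to work, as the lemma demands, only with pairs $D_1,D_2$ whose location parameters lie in one fixed span of feasible space, so that $CL_1,CR_1$ are common to every position in the span and the offset identities $L_2=L_1e^{\frac{i\Delta F}{\sigma}}$ and $R_2=R_1e^{-\frac{i\Delta F}{\sigma}}$ hold. The two conditions to verify are then the non-symmetric guarantee of Formula \eqref{eq:finite_guar} and the symmetric counterpart coming from Formula \eqref{eq:guar_sym} together with the worst-case analysis of Section \ref{sec:cont_var}, namely $\frac{n_2}{n_1}e^{\frac{i\Delta F}{\sigma}}\le e^{i\epsilon}$. Writing $\delta=\epsilon-\frac{\Delta F}{\sigma}$ and taking logarithms turns these into $\log\frac{n_1}{n_2}\le i\delta$ and $\log\frac{n_2}{n_1}\le i\delta$. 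My goal is to produce one value of $\sigma$ --- any sufficiently large one, for which $\delta>0$ --- satisfying both for every admissible separation $i$ and every position in the span, treating bounded and unbounded spans separately.

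For a bounded (interior) span the separation $i\Delta F$ is at most the span width, so $i$ and the positions both range over compact sets. Both left-hand sides are jointly continuous in $\sigma$ and in the pair, and a first-order expansion gives $L_1,R_1=O(\frac{1}{\sigma})$, so the normalization ratios deviate from $1$ only at order $\frac{1}{\sigma}$ while the leading requirement is precisely the standard $\frac{\Delta F}{\sigma}\le\epsilon$. Concretely, differentiating $\log\frac{n_2}{n_1}+\frac{i\Delta F}{\sigma}-i\epsilon$ in $i$ at $i=0$ yields $\frac{\Delta F}{\sigma}\cdot\frac{1-2R_1}{1-(L_1+R_1)}-\epsilon$, which is strictly negative once $\sigma$ is large; since the expression itself vanishes at $i=0$, this together with compactness furnishes a uniform $\sigma$ on the span.

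The main obstacle is the unbounded spans, the leftmost and rightmost pieces of feasible space, where one side carries no constraints and $i$ may be arbitrarily large; here the symmetric form is binding, since the non-symmetric ratio satisfies $\frac{n_1}{n_2}<1$ and hence $\log\frac{n_1}{n_2}<0\le i\delta$ for every $\sigma\ge\frac{\Delta F}{\epsilon}$. Taking the rightmost span, so $R_1=0$, the binding inequality is $\log\frac{1-L_1}{1-L_2}\le i\delta$. The observation that tames the unbounded regime is that $L_2$ is a probability mass lying entirely to the left of $f(D_2)$, so $L_2<\frac{1}{2}$ and therefore $\log\frac{1-L_1}{1-L_2}\le\log\frac{1}{1-L_2}<\log 2$ uniformly in position and in $i$. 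Thus for $i\ge\frac{\log 2}{\delta}$ the bound $i\delta$ already dominates, while over the remaining bounded range of $i$ the same $O(\frac{1}{\sigma})$ estimate for $L_1,L_2$ pushes the left-hand side below $i\delta$ for large $\sigma$; the leftmost span is identical after a horizontal reflection.

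Combining the two cases, any $\sigma$ large enough that $\delta>0$ and that the order-$\frac{1}{\sigma}$ corrections fall below the margin $\delta$ on the now-bounded ranges of $i$ satisfies both guarantees for every pair in the span, which is exactly the claimed existence. I expect the delicate point to be the simultaneous control of the normalization ratio and the exponential factor $e^{\frac{i\Delta F}{\sigma}}$ as $i\to\infty$ in the unbounded spans; the inequality $L_2<\frac{1}{2}$ --- that at most half of the Laplace mass can be truncated on one side --- is precisely what keeps the normalization contribution bounded and lets a single finite $\sigma$ work.
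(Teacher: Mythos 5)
Your proposal is correct in substance, but it takes a genuinely different route from the paper's proof. The paper is constructive: observing that both sides of Formula \eqref{eq:finite_guar} are equal at $i=0$, it forces the initial rate of change of the left-hand side to not exceed that of the right-hand side, which yields the recursive lower bound of Formula \eqref{eq:finite_der_req_2}; it then substitutes exactly that value of $\sigma$ into the exponential factor, rearranges the guarantee, and closes the argument with a second-derivative comparison (for $L_1,R_1\in[0,\tfrac12)$ the rearranged left-hand side is concave in $i$ while the right-hand side is convex, and the two agree in value and slope at $i=0$). You instead prove bare existence---any sufficiently large $\sigma$ works---via the logarithmic reformulation $\log\frac{n_1}{n_2}\le i\delta$ and $\log\frac{n_2}{n_1}\le i\delta$, a bounded/unbounded span dichotomy, and, on unbounded spans, the uniform bound $\log\frac{1-L_1}{1-L_2}<\log 2$ coming from $L_2<\tfrac12$. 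Both arguments establish the lemma as stated, but they buy different things. The paper's explicit bound is precisely what is consumed downstream: Lemma \ref{lem:8}, Lemma \ref{lem:10}, and Theorems \ref{th:2} and \ref{th:3} all operate on that bound (as one of the $3n+2$ quantities to maximize over, and as the quantity confined to $\left(0,\frac{2\Delta F}{\epsilon}\right)$ for the binary search), so an existence-only proof would leave the optimality and complexity results unsupported. Your approach, in exchange, dispatches the regular and symmetric guarantee forms in a single pass (the paper defers the symmetric form to Theorem \ref{th:2} via horizontal reflection), and your $L_2<\tfrac12$ observation is essentially the substance of Lemma \ref{lem:10}.

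One step needs tightening. On both kinds of spans, your handling of small $i$---``the expression vanishes at $i=0$, its derivative there is negative for large $\sigma$, and compactness does the rest''---is not sufficient as stated: the limiting function also vanishes at $i=0$, so the usual strict-inequality-plus-compactness argument fails exactly there, and the quantity $\log\frac{n_2}{n_1}+\frac{i\Delta F}{\sigma}-i\epsilon$ is convex in $i$ (it is $-\log$ of one minus a convex function of $i$, plus a linear term), so a negative slope at $i=0$ does not by itself rule out a later sign change inside the span. The clean repair is a mean-value bound, and it also makes your case split unnecessary: one has $\partial_i \log\frac{n_2}{n_1}=\frac{(\Delta F/\sigma)\left(L_2-R_2\right)}{1-\left(L_2+R_2\right)}<\frac{\Delta F}{\sigma}$, where the inequality is equivalent to $L_2<\tfrac12$ (and the mirror statement for $\log\frac{n_1}{n_2}$ is equivalent to $R_2<\tfrac12$); integrating from $0$ to $i$ gives $\log\frac{n_2}{n_1}<\frac{i\Delta F}{\sigma}$ for every position, every span, and every $i$, so both guarantee forms hold as soon as $\frac{\Delta F}{\sigma}\le\epsilon-\frac{\Delta F}{\sigma}$, i.e.\ for the single choice $\sigma=\frac{2\Delta F}{\epsilon}$, consistent with Lemma \ref{lem:10}.
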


\begin{proof}
	When comparing the left and right-hand sides of the privacy guarantee for $i=0$, both sides are equal to 1. In order for the inequality of the privacy guarantee to be satisfied, it is therefore a necessary condition that the left-hand side is initially increasing at a lower rate than the right-hand side. By taking the derivative of each side and enforcing this condition as an inequality, it is possible to derive a lower bound for $\sigma$ shown in Formula \eqref{eq:finite_der_req_2}. By using this lower bound as the chosen $\sigma$ value, we show that the privacy guarantee is satisfied for all pairs of PDFs with location parameters in the same span of truncated space.
	
	{
		\begin{equation} \label{eq:finite_der_req_2}
		\sigma \geq \frac{\Delta F}{\epsilon - \frac{\Delta F \left( L_1 - R_1 \right)}{\sigma \left( L_1 + R_1 - 1 \right)}}
		\end{equation}
	}
\end{proof}

\begin{lemma} \label{lem:8}
	Within each span of truncated space, the $\sigma$ value determined from Lemma \ref{lem:7} acts as a lower bound for the value of $\sigma$ required to satisfy the privacy guarantee.
\end{lemma}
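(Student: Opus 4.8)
The plan is to prove Lemma~\ref{lem:8} as the necessity counterpart of the sufficiency established in Lemma~\ref{lem:7}. Lemma~\ref{lem:7} exhibits a scaling parameter that satisfies the guarantee for every pair of location parameters inside a given span; here I would instead show that no strictly smaller scaling parameter can satisfy it, so the value selected in Lemma~\ref{lem:7} is genuinely minimal. The whole argument hinges on the local behaviour of Formula~\eqref{eq:finite_guar} near $i=0$: both of its sides equal $1$ there, so a first-order comparison of their growth rates yields a condition that is \emph{necessary} for the inequality to survive for all sufficiently small $i>0$, and I would argue that this necessary condition is exactly what bounds $\sigma$ from below.

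First I would differentiate both sides of Formula~\eqref{eq:finite_guar} with respect to $i$ and evaluate at $i=0$, using $L_2 = L_1 e^{\frac{i \Delta F}{\sigma}}$ and $R_2 = R_1 e^{-\frac{i \Delta F}{\sigma}}$ together with the fact that the denominator $1-(L_1+R_1)$ is independent of $i$. The right-hand side $e^{i\epsilon}$ contributes $\epsilon$, while the left-hand side contributes $\frac{\Delta F\,(1-2L_1)}{\sigma\,(1-(L_1+R_1))}$. Because the two sides agree in value at $i=0$, whenever this left-hand rate strictly exceeds $\epsilon$ the left-hand side overtakes $e^{i\epsilon}$ for all small enough $i>0$, breaking the guarantee; such small-$i$ pairs always exist because each span of truncated space is a nondegenerate interval, so its location parameters can be taken arbitrarily close together. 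I would then verify that the induced inequality $\frac{\Delta F\,(1-2L_1)}{\sigma\,(1-(L_1+R_1))} \le \epsilon$ is precisely Formula~\eqref{eq:finite_der_req_2}, which requires clearing the implicit occurrence of $\sigma$ on the right-hand side of that formula; this cross-multiplication is valid once one checks that the denominator $\epsilon - \frac{\Delta F (L_1 - R_1)}{\sigma (L_1 + R_1 - 1)}$ is positive at the relevant scale, which I would confirm by direct substitution.

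The decisive step is the sign-and-monotonicity analysis. I would show that the left-hand rate is a strictly decreasing function of $\sigma$, which amounts to showing that its coefficient $\frac{\Delta F\,(1-2L_1)}{1-(L_1+R_1)}$ is positive, i.e. that $1-2L_1>0$, or equivalently $L_1<\tfrac12$. This is exactly where the structure of the arbitrary-finite-constraints class enters: with no infinite constraint present, both tails of the Laplace PDF remain unconstrained, so no finite collection of constraints can absorb the entire half of the probability mass lying to one side of the location parameter, forcing $L_1<\tfrac12$. With the coefficient positive, the necessary condition $\frac{\Delta F\,(1-2L_1)}{\sigma\,(1-(L_1+R_1))}\le\epsilon$ is equivalent to $\sigma \ge \frac{\Delta F\,(1-2L_1)}{\epsilon\,(1-(L_1+R_1))}$, and the right-hand side here is exactly the value produced in Lemma~\ref{lem:7}. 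Hence every feasible $\sigma$ must be at least this value, establishing it as a lower bound.

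The main obstacle I expect is the sign and monotonicity bookkeeping rather than any single deep idea: verifying $L_1<\tfrac12$ uniformly over all admissible finite-constraint configurations (so that the rate coefficient is positive and the $1/\sigma$ monotonicity holds), and discharging the implicit dependence on $\sigma$ in Formula~\eqref{eq:finite_der_req_2} without an accidental sign reversal when cross-multiplying. For completeness I would run the identical first-order analysis on the symmetric form of the guarantee in Formula~\eqref{eq:guar_sym}, obtaining the analogous condition with $R_1$ in place of $L_1$; since Lemma~\ref{lem:7} already shows the chosen $\sigma$ satisfies the full guarantee, this symmetric condition is no more restrictive, so the non-symmetric form alone delivers the claimed lower bound.
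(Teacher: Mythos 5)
You have proven the converse of what this lemma asserts. Your argument establishes \emph{necessity}: any $\sigma$ that satisfies the privacy guarantee must be at least the value from Lemma \ref{lem:7}, via the first-order comparison at $i=0$. The paper's Lemma \ref{lem:8}, however, establishes \emph{upward closure}: any $\sigma$ \emph{larger} than the Lemma \ref{lem:7} value still satisfies the guarantee. The paper's proof parameterizes a larger $\sigma$ by subtracting a quantity $\alpha \geq 0$ inside the denominator of Formula \eqref{eq:finite_der_req_2}, repeats the substitution steps from Lemma \ref{lem:7}, and arrives at
\[
\frac{1 - \left( L_1 e^{\frac{i \Delta F}{\sigma}} + R_1 e^{-\frac{i \Delta F}{\sigma}} \right)}{1 - \left( L_1 + R_1 \right)} \leq e^{\frac{i \Delta F \left( L_1 - R_1 \right)}{\sigma \left( L_1 + R_1 - 1 \right)} + i \alpha},
\]
which holds because the exponent on the right-hand side has only grown relative to Lemma \ref{lem:7}'s final form. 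This direction is not automatic: $\sigma$ enters the guarantee both through the explicit exponentials and through $L_1$ and $R_1$, so one cannot simply assert that increasing $\sigma$ preserves the inequality. And it is precisely this monotonicity that the rest of the paper cites Lemma \ref{lem:8} for: Theorem \ref{th:2} takes the \emph{maximum} of $3n+2$ lower bounds and needs each individual span's guarantee to survive at that maximum, and Theorem \ref{th:3}'s binary search requires the feasibility predicate to be monotone in $\sigma$. Your necessity argument supplies neither of these.

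Moreover, the necessity statement you prove is already contained in the paper's proof of Lemma \ref{lem:7}: there, the derivative comparison at $i=0$ is introduced explicitly as a necessary condition and yields Formula \eqref{eq:finite_der_req_2}, and that necessity is what Theorem \ref{th:2} later invokes for optimality (``it is not possible to select a lower value of $\sigma$''). Your computations themselves are essentially sound as far as they go --- the rate of the left-hand side at $i=0$ is indeed $\frac{\Delta F \left( 1 - 2L_1 \right)}{\sigma \left( 1 - \left( L_1 + R_1 \right) \right)}$, the identity $2L_1 - 1 = \left( L_1 - R_1 \right) + \left( L_1 + R_1 - 1 \right)$ connects your form to Formula \eqref{eq:finite_der_req_2}, and $L_1 < \tfrac{1}{2}$ holds for finite constraint configurations --- but they duplicate Lemma \ref{lem:7}'s necessity step while leaving the actual content of Lemma \ref{lem:8}, namely that exceeding the bound never breaks the guarantee, unproven.
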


\begin{proof}
	With an alternate representation of the privacy guarantee which uses a value of $\sigma$ arbitrarily larger than that which is given by the lower bound in Formula \eqref{eq:finite_der_req_2}, we show that the privacy guarantee will still be satisfied.
\end{proof}

\begin{lemma} \label{lem:9}
	For each finite constraint, there exists a value of $\sigma$ that satisfies the privacy guarantee for the pair of PDFs with location parameters on the endpoints of the constraint.
\end{lemma}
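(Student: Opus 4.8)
The plan is to treat this boundary pair as precisely the single cross-span case flagged at the end of Section \ref{sec:arbitrary_finite}, and to reduce it to a one-variable inequality in $\sigma$ whose solvability follows from a limit argument. First I would orient the pair so that $f(D_1)$ sits on the right endpoint of the finite constraint $C$ and $f(D_2)$ on its left endpoint; then $f(D_1) \ge f(D_2)$ as assumed, and $f(D_1) - f(D_2) = i \Delta F$ equals the width of $C$, where $i$ is the width divided by $\Delta F$ (permissible since Section \ref{sec:gen_dist} allows real-valued $i$). Because this class contains no infinite constraint, the worst-case analysis of Section \ref{sec:cont_var} forces a common scaling parameter $\sigma$, so the continuous-variable factor is $e^{i \Delta F / \sigma}$ as in \eqref{eq:cont_var_same_sigma}, and the guarantee to be established reads
\begin{equation*}
\frac{1 - (L_2 + R_2)}{1 - (L_1 + R_1)}\, e^{\frac{i \Delta F}{\sigma}} \le e^{i \epsilon}.
\end{equation*}
The essential difference from \eqref{eq:finite_guar} is that the two location parameters now lie in \emph{different} spans of truncated space, so the offset identities $L_2 = L_1 e^{i \Delta F / \sigma}$ and $R_2 = R_1 e^{-i \Delta F / \sigma}$ no longer hold; I would instead expand $L_1, R_1, L_2, R_2$ directly from \eqref{eq:integral_L} and \eqref{eq:integral_R}.

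The key observation is that $C$ is immediately adjacent to both location parameters, lying wholly to the left of $f(D_1)$ and wholly to the right of $f(D_2)$. By the symmetry of the Laplace PDF its integral is identical from both vantage points, so its contribution to $L_1$ equals its contribution to $R_2$; call this common value $\alpha = (1 - e^{-i \Delta F / \sigma})/2$. Every other constraint keeps its side of $C$ (nothing else lies strictly between the two endpoints), and shifting the location parameter by $i \Delta F$ merely rescales its integral: the total mass of constraints left of $C$ obeys $L_1^{\text{other}} = e^{-i \Delta F / \sigma} L_2^{\text{other}}$ and the total mass of constraints right of $C$ obeys $R_1^{\text{other}} = e^{i \Delta F / \sigma} R_2^{\text{other}}$. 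Writing $\ell = L_2^{\text{other}}$ and $\rho = R_2^{\text{other}}$, this turns the guarantee into an explicit inequality in $\sigma$ whose numerator is $1 - \alpha - \ell - \rho$ and whose denominator is $1 - \alpha - e^{-i \Delta F / \sigma}\ell - e^{i \Delta F / \sigma}\rho$.

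To establish existence I would run a continuity-and-limit argument: as $\sigma \to \infty$ each constraint integral $\alpha$, $\ell$, $\rho$ tends to $0$ and $e^{i \Delta F / \sigma} \to 1$, so the left-hand side tends to $1$, which is strictly below $e^{i \epsilon}$ since $i \epsilon > 0$; as the left-hand side is continuous in $\sigma$, there is a threshold above which the inequality holds, and I would pin it down by a monotonicity analysis in $\sigma$. Since the guarantee \eqref{eq:guar_dist} is non-symmetric, I would then repeat the argument for the symmetric form \eqref{eq:guar_sym}, whose worst case is again $e^{i \Delta F / \sigma}$ by \eqref{eq:cont_var_sym_same_sigma} and which replaces the ratio by its reciprocal $n_2/n_1$. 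The main obstacle is that both directions are genuinely needed: the sign of $n_1 - n_2$ is governed by $\ell(1 - e^{-i \Delta F / \sigma}) - \rho(e^{i \Delta F / \sigma} - 1)$, which depends on whether more constrained mass lies to the left or to the right of $C$, so neither form dominates a priori. Establishing monotonicity of each transcendental expression in $\sigma$, and hence a single threshold satisfying both, is the delicate step; once it is in hand, combining this threshold with the within-span bounds of Lemmas \ref{lem:7} and \ref{lem:8} will let one value of $\sigma$ serve the entire configuration.
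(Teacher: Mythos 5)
Your proposal is correct, and its core coincides with the paper's proof: you orient the pair identically, you make the same key observation that the separating constraint sits at distance zero from both location parameters and hence contributes the same integral to both normalization factors (your $\alpha$ is exactly the paper's $S$ from Formula \eqref{eq:c_val}), and your expanded inequality with numerator $1-\alpha-\ell-\rho$ and denominator $1-\alpha-e^{-i\Delta F/\sigma}\ell-e^{i\Delta F/\sigma}\rho$ is algebraically identical to the paper's instantiation \eqref{eq:finite_guar_sep_2}. Where you genuinely diverge is the existence step and the symmetric case. The paper bootstraps from Lemmas \ref{lem:7} and \ref{lem:8}: a sufficiently large $\sigma$ satisfies the unmodified within-span guarantee with slack, and since $S\to 0$ as $\sigma$ grows, the modified form inherits satisfiability; this phrasing also asserts the ``once satisfied, stays satisfied'' property that the binary search of Theorem \ref{th:3} later relies on. Your continuity-and-limit argument (left-hand side $\to 1 < e^{i\epsilon}$ as $\sigma\to\infty$) is more elementary and self-contained, and it settles both the primal and the symmetric forms in one stroke; note that for mere existence the monotonicity you call ``the delicate step'' is not needed at all, since each form holds for all sufficiently large $\sigma$ and you may take the larger of the two thresholds. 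Your caution about the symmetric form is also better founded than the paper's treatment: the paper defers it to Theorem \ref{th:2} with the claim that inverting the fraction in \eqref{eq:finite_guar_sep_2} decreases the left-hand side, which is valid only when $n_1 \geq n_2$, i.e., when $\rho\left(e^{i\Delta F/\sigma}-1\right) \geq \ell\left(1-e^{-i\Delta F/\sigma}\right)$ in your notation, so your remark that neither form dominates a priori identifies a real looseness in the paper. The trade-off is that your route, as stated, buys existence only; to plug into the downstream algorithmic results you would still owe the monotone-in-$\sigma$ behaviour that the paper asserts (also without a full proof) in its version of this lemma.
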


\begin{proof}
	The guarantee form used thus far has used the same sets of finite constraints to the left and right of both location parameters, implying that they must both lie within the same span of truncated space. We must also be able to show that the privacy guarantee holds for location parameters in different spans of truncated space. To show this, we first consider a pair of PDFs with location parameters that lie on opposite endpoints of a finite constraint (with $f(D_1)$ as always being the point on the right). We provide this instantiation of the privacy guarantee in Formula \ref{eq:finite_guar_sep_2}, using Formula \ref{eq:c_val} to calculate the value of the integral corresponding to the span covered by the finite constraint. Since both location parameters are adjacent to this span, the value of the integral is the same for both PDFs.
	
	{
		\begin{equation} \label{eq:c_val}
		S = \frac{1-e^{-\frac{i \Delta F}{\sigma}}}{2}
		\end{equation}
	}

	{
		\begin{equation} \label{eq:finite_guar_sep_2}
		\frac{1 - \left( 
			e^{\frac{i \Delta F}{\sigma}} \left( L_1 - S \right) + 
			R_1 e^{- \frac{i \Delta F}{\sigma}} + S \right) }
		{1 - \left( L_1 + R_1 \right) }  
		\left( e^{\frac{i \Delta F}{\sigma}} \right)
		\leq e^{i \epsilon}
		\end{equation}
	}
	
	We know from Lemma \ref{lem:7} that a sufficiently high value of $\sigma$ can satisfy the guarantee without the modification made here and from Lemma \ref{lem:8} that raising the value of $\sigma$ beyond the requirement will not cause that form of the privacy guarantee to be violated. We show that increasing $\sigma$ reduces the influence of $S$ in Formula \ref{eq:finite_guar_sep_2} by causing the value of $S$ to asymptotically approach 0. It therefore follows that a sufficiently high value of $\sigma$ will also satisfy this form of the privacy guarantee and that increasing $\sigma$ beyond that value will not violate the guarantee.
\end{proof}

\begin{lemma} \label{lem:10}
	All lower bounds on $\sigma$ identified in Lemmas \ref{lem:7} and \ref{lem:9} are less than $\frac{2 \Delta F}{\epsilon}$.
\end{lemma}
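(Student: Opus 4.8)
The plan is to show that the reference value $\frac{2\Delta F}{\epsilon}$ already satisfies, with strict inequality, the necessary condition from which the Lemma \ref{lem:7} bound is derived; this forces the actual bound (the threshold value of $\sigma$) to lie strictly below it. Recall that the bound of Formula \eqref{eq:finite_der_req_2} arises from requiring that, at $i=0$, the left-hand side of the guarantee \eqref{eq:finite_guar} grows no faster than the right-hand side. Writing this derivative condition out gives the equivalent inequality $\frac{\Delta F(1 - 2 L_1)}{\sigma(1 - L_1 - R_1)} \le \epsilon$, and the Lemma \ref{lem:7} bound is precisely the value of $\sigma$ at which this holds with equality.

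First I would substitute $\sigma = \frac{2\Delta F}{\epsilon}$ into this inequality. After cancelling the common factors of $\Delta F$ and $\epsilon$, the condition collapses to $\frac{1 - 2L_1}{2(1 - L_1 - R_1)} \le 1$, which rearranges to the single requirement $R_1 \le \frac12$. It remains to argue that $R_1 < \frac12$ strictly. Since we are in the arbitrary-finite class there is no constraint spanning to infinity, so $I_R = 0$ and $R_1$ is a sum of integrals of disjoint finite constraints lying in the half-line to the right of $f(D_1)$. The integral of a Laplace PDF over its entire right half equals exactly $\frac12$, and finitely many finite constraints cannot cover that unbounded half-line; hence the constrained region is a proper subset of it and $R_1 < \frac12$. (Applying the symmetric form \eqref{eq:guar_sym} yields the mirror condition $L_1 < \frac12$, which holds by the same argument.) Consequently the necessary condition holds strictly at $\sigma = \frac{2\Delta F}{\epsilon}$.

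To convert strict satisfaction into a strict bound, I would use the threshold characterisation together with Lemmas \ref{lem:7} and \ref{lem:8}: the left-hand side of the derivative inequality is continuous in $\sigma$, equals $\epsilon$ at the Lemma \ref{lem:7} value $\sigma^{\ast}$, and (being necessary for a guarantee that Lemma \ref{lem:8} shows persists for all larger $\sigma$) holds throughout $[\sigma^{\ast}, \infty)$. Since it holds with strict inequality at $\frac{2\Delta F}{\epsilon}$ but only with equality at $\sigma^{\ast}$, we must have $\sigma^{\ast} < \frac{2\Delta F}{\epsilon}$. For the bound identified in Lemma \ref{lem:9}, I would note that differentiating the modified guarantee \eqref{eq:finite_guar_sep_2} at $i=0$ (where $S=0$) produces exactly the same derivative condition as \eqref{eq:finite_guar}, since the contribution of $S$ vanishes to first order; the identical necessary condition therefore yields a bound governed by the same inequality, and the argument above applies verbatim.

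The main obstacle I anticipate is the implicit, self-referential nature of Formula \eqref{eq:finite_der_req_2}: both the explicit $\sigma$ and the quantities $L_1, R_1$ depend on $\sigma$, and $L_1, R_1$ are not monotone in $\sigma$, so one cannot simply invert the bound in closed form. The device that sidesteps this is to avoid global reasoning entirely and instead evaluate the derivative condition at the single candidate point $\sigma = \frac{2\Delta F}{\epsilon}$, where it reduces to the clean, $\sigma$-free fact $R_1 < \frac12$; the threshold characterisation then transfers strict satisfaction at that point to a strict inequality for the bound itself.
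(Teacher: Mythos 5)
Your treatment of the Lemma~\ref{lem:7} bounds is correct, and it is in essence the paper's own argument made precise: the paper treats Formula~\eqref{eq:finite_der_req_2} as an equality and ``solves'' for $\sigma$, which yields $\sigma^{\ast} = \frac{\Delta F}{\epsilon}\cdot\frac{1-2L_1}{1-L_1-R_1}$, and bounding that expression by $\frac{2\Delta F}{\epsilon}$ is exactly your reduction to $R_1 < \frac{1}{2}$. Your evaluate-at-$\frac{2\Delta F}{\epsilon}$ device is the same algebra, and it handles the $\sigma$-dependence of $L_1, R_1$ and the strictness of the inequality more carefully than the paper (whose proof is only a sketch) does.

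The genuine gap is in your last step, the Lemma~\ref{lem:9} bounds. For those pairs the two location parameters are pinned to the endpoints of one particular finite constraint, so the separation $i$ is \emph{fixed} at (constraint width)$/\Delta F > 0$ and $S$ in Formula~\eqref{eq:c_val} is that constraint's integral; the Lemma~\ref{lem:9} bound is the threshold $\sigma$ at which Formula~\eqref{eq:finite_guar_sep_2} holds \emph{at that fixed positive} $i$, not the threshold of the linearized condition at $i=0$. Your observation that the $S$-contribution vanishes to first order is true (the $S$-terms enter as $S\left(e^{i\Delta F/\sigma}-1\right)$, a product of two factors that both vanish at $i=0$), but agreement of first derivatives at $i=0$ only says the two guarantees share a tangent there; it does not make their thresholds equal. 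Indeed, if the Lemma~\ref{lem:9} bounds were governed by the same inequality as Lemma~\ref{lem:7}'s, they would be redundant and Theorem~\ref{th:2} would not need $3n+2$ bounds. The conclusion is salvageable by applying your own device at the fixed $i$ rather than at $i=0$: setting $\sigma = \frac{2\Delta F}{\epsilon}$ and $u = e^{i\epsilon/2}$ in Formula~\eqref{eq:finite_guar_sep_2}, so that $e^{i\Delta F/\sigma}=u$, $e^{i\epsilon}=u^2$ and $S=\frac{1}{2}\left(1-u^{-1}\right)$, and using $\left(1-u^{-1}\right)\left(u-1\right)=u+u^{-1}-2$, the whole inequality collapses to $R_1\left(u^2-1\right) \leq \frac{u^2-1}{2}$, i.e.\ again to $R_1 \leq \frac{1}{2}$, for \emph{every} $i>0$. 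So strict feasibility of $\frac{2\Delta F}{\epsilon}$, and hence the strict bound, does hold for the Lemma~\ref{lem:9} bounds as well --- but only after this finite-$i$ computation; it is not ``verbatim'' the $i=0$ argument.
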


\begin{proof}
	The calculations for lower bounds on $\sigma$ can be handled by treating Formulae \eqref{eq:finite_der_req_2} and \eqref{eq:finite_guar_sep_2} as equalities and solving for $\sigma$. We can identify bounds on the possible values of $\sigma$ by studying the bounds on the variables $L_1$ and $R_1$. Since $L_1$ represents the sum of the integrals of the constraints to the left of $f(D_1)$ it can be as low as 0 (if no constraints are present to the left of $f(D_1)$) and can approach but not reach 0.5 (since half of the integral exists on the left hand side). The bounds on $R_1$ are characterized in the same way. By studying the ranges of the calculations for $\sigma$ in the context of these bounds, we show $\sigma$ will fall in the range of $\left( 0,\frac{2 \Delta F}{\epsilon} \right)$.
\end{proof}

\begin{theorem} \label{th:2}
	The optimal $\sigma$ value that satisfies the privacy guarantee for all pairs of databases can be found by taking the maximum out of $3n+2$ lower bounds, where $n$ is the number of finite constraints.
\end{theorem}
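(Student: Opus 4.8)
The plan is to assemble the per-configuration thresholds of Lemmas \ref{lem:7}--\ref{lem:10} into a single optimal scaling parameter, arguing separately that the proposed maximum is (a) necessary and (b) sufficient for the guarantee over \emph{all} pairs of databases. The first step is to fix the geometry: the $n$ finite constraints partition the feasible (truncated) space into $n+1$ maximal spans, and every pair $D_1, D_2$ with $f(D_1) \ge f(D_2)$ either has both location parameters in a common span or has them in two distinct spans separated by one or more constraints.

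For the common-span case I would apply Lemma \ref{lem:7} to each of the $n+1$ spans to obtain the minimal $\sigma$ for which the forward form \eqref{eq:finite_guar} holds throughout the span, and then run the same derivative-matching argument on the symmetric form (obtained from \eqref{eq:guar_sym} specialized by \eqref{eq:cont_var_sym_same_sigma}) to obtain a second threshold per span; Lemma \ref{lem:8} certifies that each of these $2(n+1)$ values is a genuine lower bound, in the sense that exceeding it never reintroduces a violation within that span. For the cross-span case I would apply Lemma \ref{lem:9} to each of the $n$ constraints to obtain the threshold $\sigma$ at which the straddling pair---location parameters sitting on the two endpoints of that constraint---satisfies \eqref{eq:finite_guar_sep_2}, the $S \to 0$ monotonicity in its proof again making this a lower bound. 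This accounts for $2(n+1) + n = 3n+2$ bounds, and Lemma \ref{lem:10} guarantees that all of them lie strictly below $\frac{2\Delta F}{\epsilon}$, so their maximum is finite and well defined.

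Necessity and optimality are then immediate: each of the $3n+2$ thresholds is the exact minimal $\sigma$ for a concrete pair of databases (a worst within-span pair, or an endpoint-straddling pair), so any $\sigma$ below the maximum would violate the guarantee for whichever pair attains that maximum. Hence no smaller choice can work, and the maximum is optimal.

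The hard part---and the real content of the theorem---is sufficiency for cross-span pairs whose location parameters are separated by \emph{more than one} constraint, since these form a continuum of placements not literally addressed by Lemmas \ref{lem:7} and \ref{lem:9}. My plan here is to reduce any such configuration to a telescoped chain of the cases already bounded: using the distance generalization of Section \ref{sec:gen_dist} together with the composition identity \eqref{eq:dist_proof}, a guarantee spanning several constraints should decompose into within-span segments (controlled by Lemmas \ref{lem:7}--\ref{lem:8}) and single-constraint straddles (controlled by Lemma \ref{lem:9}), with the enlarged separation distance $i$ relaxing the right-hand bound $e^{i\epsilon}$ at least as fast as it inflates the left-hand normalization-ratio and exponential terms. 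The crux of the argument will be verifying that interposing additional constraints between the two location parameters can only help---that is, that the single-straddle and within-span configurations genuinely dominate every multi-constraint placement---so that $\sigma$ set to the maximum of the $3n+2$ bounds satisfies \eqref{eq:guar_dist} for every pair of databases simultaneously.
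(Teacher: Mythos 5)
Your proposal matches the paper's proof essentially step for step: the same count of $3n+2$ bounds ($n+1$ spans under each of the two guarantee orientations, plus $n$ endpoint-straddling pairs from Lemma \ref{lem:9}), the same necessity/optimality argument (each bound is tight for a concrete pair, so no smaller maximum can work), and the same treatment of pairs separated by multiple constraints via decomposition into a chain of within-span and single-straddle segments whose multiplicative bounds $e^{i_k \epsilon}$ compose to $e^{i\epsilon}$. The ``crux'' you flag as still needing verification---that interposing additional constraints cannot hurt---is in fact already discharged by the telescoping itself, since the probability ratio for the far pair factors exactly into the per-segment ratios, each bounded for all $x$; this is precisely the transitivity argument with which the paper's proof concludes.
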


\begin{proof}
	Lemma \ref{lem:7} provides a lower bound on $\sigma$ for pairs of PDFs with location parameters in the same span of truncated space. This applies to a form of the privacy guarantee in which $f(D_2) \leq f(D_1)$ holds. The symmetric case can be handled in the same way after an application of horizontal reflection to the configuration. Lemma \ref{lem:9} provides an additional lower bound on $\sigma$ for PDFs with location parameters on opposite endpoints of a constraint. In Formula \eqref{eq:finite_guar_sep_2}, if the fraction on the left-hand side is inversed, as it would be in a symmetric form, the left-hand side decreases. Thus, if the form in Formula \eqref{eq:finite_guar_sep_2} is satisfied, the symmetric form will be as well. For $n$ constraints, there are $n+1$ lower bounds on $\sigma$ in Lemma \ref{lem:7} from the regular form of the privacy guarantee and an additional $n+1$ lower bounds for the symmetric form. From Lemma \ref{lem:9}, there are $n$ lower bounds, giving a total of $3n+2$. By selecting the largest of these, the guarantee is satisfied for all pairs of PDFs with location parameters in any same span of truncated space and for all pairs of PDFs with location parameters on opposite endpoints of a constraint. Since each of the lower bounds must be adhered to, it is not possible to select a lower value of $\sigma$ than this.
	
	It remains to be shown that any arbitrary pair of databases is also protected. This follows as a transitive property of multiple applications of the guarantee forms used throughout the lemmas. For any pair of PDFs with location parameters at arbitrary points in truncated space, it is possible to represent this as a sequence of points where each adjacent pair of points in the sequence corresponds to a pair of PDF location parameters in the configuration used in either Lemma \ref{lem:7} or Lemma \ref{lem:9}. The multiplicative bound for the arbitrary pair is therefore the product of the bounds of the adjacent pairs in the sequence. Since each of the adjacent pairs satisfy the privacy guarantee, the product will also satisfy the guarantee for the arbitrary pair.
\end{proof}

\begin{theorem} \label{th:3}
	The optimal value of $\sigma$ for any configuration of $n$ arbitrary finite constraints can be calculated to a precision of $d$ decimal places in $O \left( n^2 \left( d + \log \left( \frac{\Delta F}{\epsilon} \right) \right) \right)$ time.
\end{theorem}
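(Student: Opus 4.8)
The plan is to build directly on Theorem~\ref{th:2}, which reduces the computation of the optimal $\sigma$ to taking the maximum over $3n+2$ lower bounds. It therefore suffices to bound the cost of computing a single bound and multiply by $O(n)$. First I would note that each bound is the smallest $\sigma$ satisfying a particular instantiation of the privacy guarantee: either the derivative condition of Formula~\eqref{eq:finite_der_req_2} (for the $2n+2$ same-span bounds of Lemma~\ref{lem:7} and its symmetric counterpart) or the endpoint condition of Formula~\eqref{eq:finite_guar_sep_2} (for the $n$ bounds of Lemma~\ref{lem:9}). Because $\sigma$ occurs inside the exponentials comprising $L_1$ and $R_1$, each such bound is the root of a transcendental equation and admits no closed form, so I would compute it by bisection rather than by a direct formula.

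The legitimacy of bisection rests on monotonicity, which the preceding lemmas already supply. Lemma~\ref{lem:8} shows that once the guarantee holds it continues to hold as $\sigma$ increases, and the asymptotic argument in Lemma~\ref{lem:9} does the same for the endpoint form. Hence for each configuration the set of feasible $\sigma$ is an upward-closed interval $[\sigma^\ast,\infty)$, the feasibility predicate is a monotone step function, and its threshold $\sigma^\ast$ is exactly the desired bound. I would therefore binary search for $\sigma^\ast$ over the interval $\left(0,\frac{2\Delta F}{\epsilon}\right)$, which Lemma~\ref{lem:10} guarantees contains every one of the bounds.

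Next I would account for the two cost factors. The search interval has width $\frac{2\Delta F}{\epsilon}$, so localizing $\sigma^\ast$ to within $10^{-d}$ requires $\left\lceil \log_2\!\left(\frac{2\Delta F}{\epsilon}\,10^{d}\right)\right\rceil = O\!\left(d + \log\!\left(\frac{\Delta F}{\epsilon}\right)\right)$ bisection steps. Each step evaluates the relevant guarantee at a candidate $\sigma$, and the only non-constant work there is forming $L_1$ and $R_1$, which are sums of a constant number of exponentials over the at most $n$ finite constraints; since the endpoint distances for a given span are fixed, they can be precomputed, leaving a single evaluation at cost $O(n)$. A single bound thus costs $O\!\left(n\left(d + \log\!\left(\frac{\Delta F}{\epsilon}\right)\right)\right)$, and repeating over all $3n+2 = O(n)$ bounds yields the claimed $O\!\left(n^2\left(d + \log\!\left(\frac{\Delta F}{\epsilon}\right)\right)\right)$ total, after which a single pass takes the maximum.

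The main obstacle is less any individual calculation than the correct attribution of these two factors. The iteration count collapses to $O(d + \log(\Delta F/\epsilon))$ only because the interval from Lemma~\ref{lem:10} has width $\frac{2\Delta F}{\epsilon}$ independent of $n$, and a guarantee evaluation is $O(n)$ only under a unit-cost model for arithmetic and exponentiation, so I would state that model explicitly. I would also verify that bisection applies to each of the $3n+2$ instantiations, including the symmetric ones; these reduce to the already-treated cases by the horizontal-reflection and fraction-inversion arguments of Theorem~\ref{th:2}, so that the monotonicity of Lemmas~\ref{lem:8} and~\ref{lem:9} indeed governs every search. This bookkeeping verification, rather than any arithmetic, is where I expect to spend the most care.
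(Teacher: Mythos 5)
Your proposal is correct and takes essentially the same approach as the paper: both use the monotone feasibility established in Lemmas \ref{lem:8} and \ref{lem:9} together with the interval bound of Lemma \ref{lem:10} to justify a bisection requiring $O\left(d + \log\left(\frac{\Delta F}{\epsilon}\right)\right)$ steps, and both count $O(n)$ bounds whose evaluation involves $O(n)$-term sums. The only difference is organizational --- the paper runs a single binary search on $\sigma$ with an $O(n^2)$ feasibility check per step, while you run a separate $O(n)$-per-step search for each of the $3n+2$ bounds and then take the maximum --- which yields the identical total complexity.
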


\begin{proof}
	As stated in Theorem \ref{th:2}, $\sigma$ must be chosen as the maximum value out of the  $O \left( n \right)$ lower bounds calculated from Formulae \eqref{eq:finite_der_req_2} and \eqref{eq:finite_guar_sep_2}. The variables $L_1$ and $R_1$ in these inequalities represent summations of $O \left( n \right)$ exponential functions where each function contains an instance of $\sigma$ in the denominator of its exponent. We know of no method to isolate $\sigma$ for such configurations. It therefore takes $O \left( n^2 \right)$ time to check whether a given value of $\sigma$ is above all lower bounds.
			
	From Lemmas \ref{lem:8} and \ref{lem:9}, we know that any value of $\sigma$ larger than the required value will also satisfy the privacy guarantee. This implies that once the inequality is satisfied, increasing $\sigma$ further will never violate the inequality. Lemma \ref{lem:10} indicates that the value of $\sigma$ will always be between 0 and $\frac{2 \Delta F}{\epsilon}$, meaning that for a decimal precision of $d$, there are  $\frac{ 10^d \left( 2 \right) \Delta F}{\epsilon}$ possible values for $\sigma$. By performing a binary search for the optimal value, the logarithm of the number of possible values of $\sigma$ must be checked, leading to an overall time complexity of $O \left( n^2 \left( d + \log \left( \frac{\Delta F}{\epsilon} \right) \right) \right)$. In most cases, the values of $d$ and $\Delta F$ are likely to be small, making $O \left( n^2 \right)$ a more practical representation of the time complexity.
\end{proof}

\subsection{Arbitrary Constraints}

While the privacy guarantee forms in the configurations covered thus far have been manageable in their complexity, configurations with multiple constraints where at least one spans infinitely lead to much more complex expressions due to an increase in the number of exponential functions which appear in the normalization factors combined with variable scaling parameters. The expressions that must be worked with now take the form of transcendental functions having differing polynomials as coefficients and exponents. Exact calculations necessary for determining optimal scaling parameter values now become very difficult, if not impossible, due to the necessary step of root-finding for these functions. Unlike the case of the single infinite constraint, there is no function such as the LambertW that can be easily substituted in these expressions to allow for solutions to be calculated.

When the number of constraints is small, it may still be possible to apply some form of analysis to produce very good approximate solutions which can approach the optimal values within a desired level of precision. We leave the analysis of such configurations as an open problem. For cases with greater numbers of constraints, it is likely that the analysis becomes unmanageable. For such cases, we thus assign the same scaling parameters for every PDF instead of considering variable scaling parameters. In essence, we sacrifice the better utility obtained by the calculation of optimal scaling parameters for the ability to calculate the optimal value when using the same scaling parameter for all PDFs such that we are able to satisfy the privacy guarantee. The corresponding privacy guarantee is shown in Formula \eqref{eq:arbitrary_guar}. This is a modified version of Formula \eqref{eq:guar_dist} using same scaling parameters and the appropriate continuous random variable worst-case analysis of Section \ref{sec:cont_var}.

{
	\begin{equation} \label{eq:arbitrary_guar}
	\frac{1 - \left( 
		L_1 e^{\frac{i \Delta F}{\sigma}} + 
		R_1 e^{-\frac{i \Delta F}{\sigma}} \right) }
	{1 - \left( L_1 + R_1 \right) } 
	\left( e^{\frac{i \Delta F}{\sigma}} \right)
	\leq e^{i \epsilon}
	\end{equation}
}

\begin{corollary}
	The optimal uniform value of $\sigma$ for any configuration of $n$ arbitrary constraints can be calculated to a precision of $d$ decimal places in $O \left( n^2 \left( d + \log \left( \frac{\Delta F}{\epsilon} \right) \right) \right)$ time.
\end{corollary}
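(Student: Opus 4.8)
The plan is to reduce this corollary directly to Theorems \ref{th:2} and \ref{th:3} by observing that the uniform-$\sigma$ privacy guarantee of Formula \eqref{eq:arbitrary_guar} is structurally identical to the finite-constraint guarantee of Formula \eqref{eq:finite_guar}. The only distinction is that, in the arbitrary setting, the quantities $L_1$ and $R_1$ may now include the infinite-constraint contributions $I_L \frac{e^{-\Delta R_1/\sigma}}{2}$ and $I_R \frac{e^{-\Delta L_1/\sigma}}{2}$ from Formulae \eqref{eq:integral_L} and \eqref{eq:integral_R}. Since at most one constraint spans to $+\infty$ and at most one to $-\infty$, these add at most two further exponential terms, so $L_1$ and $R_1$ remain sums of $O(n)$ exponentials in which $\sigma$ appears only inside the denominator of an exponent. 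My strategy is therefore to verify that each lemma and theorem of Section \ref{sec:arbitrary_finite} carries over under this mild enlargement of $L_1$ and $R_1$, and then to invoke the complexity count of Theorem \ref{th:3} essentially verbatim.

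I would carry this out in the following order. First, I would re-examine Lemma \ref{lem:7}: its lower bound \eqref{eq:finite_der_req_2} is obtained purely by equating the rates of increase of the two sides of the guarantee at $i=0$, a derivation that relies only on the ratio structure and on the differentiability of $L_1$ and $R_1$ in $\sigma$, neither of which is affected by the extra infinite terms; the monotonicity argument of Lemma \ref{lem:8} is likewise structural and carries through. Second, I would confirm the Lemma \ref{lem:9} bound for pairs of location parameters on opposite endpoints of a \emph{finite} constraint, noting that an infinite constraint has only a single finite endpoint and so contributes no new pair of its own. Third, I would recheck Lemma \ref{lem:10}: since the total Laplace mass on either side of the location parameter is exactly $\tfrac{1}{2}$, the sums $L_1$ and $R_1$ still lie in $[0,\tfrac{1}{2})$ even when an infinite constraint is present, so every lower bound remains below $\frac{2\Delta F}{\epsilon}$. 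Fourth, I would re-run the counting argument of Theorem \ref{th:2}: there are still $O(n)$ spans of feasible space and $O(n)$ finite constraints, hence $O(n)$ candidate lower bounds, and the transitive chaining over adjacent location-parameter pairs continues to cover every arbitrary pair of databases. Finally, evaluating each $L_1$ or $R_1$ costs $O(n)$, so testing a candidate $\sigma$ against all $O(n)$ bounds costs $O(n^2)$; a binary search over $\left(0,\frac{2\Delta F}{\epsilon}\right)$ to precision $d$ uses $O\!\left(d+\log\!\left(\frac{\Delta F}{\epsilon}\right)\right)$ iterations, yielding the claimed bound.

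I expect the main obstacle to lie in the verification steps rather than in the complexity count, which is immediate once the structural match is established. The lemmas of Section \ref{sec:arbitrary_finite} were proved under the explicit hypothesis that no infinite constraints are present, so the real work is to confirm that their arguments survive once the terms $I_L \frac{e^{-\Delta R_1/\sigma}}{2}$ and $I_R \frac{e^{-\Delta L_1/\sigma}}{2}$ are reinstated in $L_1$ and $R_1$. The most delicate point is that an infinite constraint perturbs the $L_1$ and $R_1$ values appearing in every other bound without contributing a straddling pair of its own, so I must check that the derivative condition underlying Lemma \ref{lem:7} and the asymptotic vanishing of $S$ underlying Lemma \ref{lem:9} continue to hold for these perturbed quantities, and that the worst-case continuous-variable form of Section \ref{sec:cont_var} remains the governing case even when a feasible span is semi-infinite. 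Establishing that these three structural properties are genuinely insensitive to the $I_L$ and $I_R$ terms is the crux; everything else reduces to the argument already given for Theorem \ref{th:3}.
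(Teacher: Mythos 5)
Your overall strategy---reduce Formula \eqref{eq:arbitrary_guar} to the machinery of Section \ref{sec:arbitrary_finite} and reuse the binary search of Theorem \ref{th:3}---is exactly the paper's approach. However, the verification step you yourself flag as the crux is resolved incorrectly. You claim that $L_1$ and $R_1$ ``still lie in $[0,\tfrac{1}{2})$ even when an infinite constraint is present,'' and this is false; it is precisely the one place where the arbitrary class genuinely differs from the finite one. If a constraint spans to negative infinity and the location parameter $f(D_1)$ sits on its right endpoint, then $\Delta R_1 = 0$ and the term $I_L \frac{e^{-\Delta R_1/\sigma}}{2}$ in Formula \eqref{eq:integral_L} equals exactly $\tfrac{1}{2}$, so $L_1 = \tfrac{1}{2}$ is attained; symmetrically $R_1 = \tfrac{1}{2}$ is attained on the boundary of a constraint spanning to infinity. (Location parameters on constraint boundaries are not excluded---they are the configurations used throughout the paper, e.g.\ setting $\Delta L_1 = 0$ in Lemma \ref{lem:1}.) In the finite case the supremum $\tfrac{1}{2}$ is approached but never reached, which is what yields the strict bound of Lemma \ref{lem:10}; with infinite constraints the value is reached (though $L_1$ and $R_1$ cannot both equal $\tfrac{1}{2}$, as that would leave no feasible space), and the bound correspondingly weakens to $\sigma \leq \frac{2 \Delta F}{\epsilon}$. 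This observation is essentially the entire content of the paper's proof of the corollary, and it is why Table \ref{table:sigma_summary} lists a strict bound for the finite class but a non-strict one for the arbitrary class.

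The error is not merely cosmetic: the non-strict bound is attained. Take a single constraint spanning to infinity with $f(D_1)$ on its left endpoint and no other constraints, so $L_1 = 0$ and $R_1 = \tfrac{1}{2}$; treating Formula \eqref{eq:finite_der_req_2} as an equality then gives $\sigma = \frac{\Delta F}{\epsilon - \frac{\Delta F}{\sigma}}$, i.e.\ $\sigma = \frac{2 \Delta F}{\epsilon}$ exactly. So the optimal uniform $\sigma$ can equal the endpoint that your argument would exclude, and a binary search justified over the open interval $\left(0, \frac{2 \Delta F}{\epsilon}\right)$ rather than the closed-on-the-right interval $\left(0, \frac{2 \Delta F}{\epsilon}\right]$ is searching a range that can miss the answer. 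The repair is small---replace your claimed carryover of Lemma \ref{lem:10} with the weakened bound and include the right endpoint in the search---and the asymptotic complexity $O \left( n^2 \left( d + \log \left( \frac{\Delta F}{\epsilon} \right) \right) \right)$ is unaffected, since the interval length and the per-candidate cost are unchanged. The rest of your verifications (Lemmas \ref{lem:7}--\ref{lem:9} being structural in $L_1$, $R_1$, infinite constraints contributing no straddling pair, and the same-$\sigma$ worst case of Section \ref{sec:cont_var} governing) are sound and consistent with the paper.
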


\begin{proof}
	The calculation of $\sigma$ and the related proofs required here are almost identical to those of Section \ref{sec:arbitrary_finite}. The only difference is that $L_1$ and $R_1$ may now contain integrals of constraints that span to negative infinity and infinity respectively. Due to this, it is now possible for $L_1$ and $R_1$ to be equal to 0.5, whereas before they could only approach this value. Note however that they cannot be simultaneously equal to 0.5 as this would imply that no feasible space exists. This implies that rather than the upper bound on $\sigma$ being one that approaches $2 \Delta F / \epsilon$, is it now equal to this value. The algorithm for the approximation of the optimal value of $\sigma$ from Theorem \ref{th:3} can thus be employed here as well.
\end{proof}

\renewcommand{\arraystretch}{1.3}
\begin{table*}
	\caption{Summary of information on $\sigma$ for each constraint configuration.}
	\begin{tabular}{ | c | c | c | p{4cm} | }
		\hline
		\textbf{Constraint Configuration} & \textbf{$\sigma$ Calculation} & \textbf{Bound(s)} & \textbf{Notes} \\ \hline
		Single Infinite Constraint & $\sigma_1$: Formula \eqref{eq:sigma_1_inequality_2} & $\sigma_1 \approx \frac{1.586 \Delta F}{\epsilon}$ & Optimal $\sigma$ values \\
		 & $\sigma_2$: Formula \eqref{eq:s2_equality} & $\sigma_2 < \frac{1.586 \Delta F}{\epsilon}$ & \\ \hline
		Arbitrary Finite Constraints & Section \ref{sec:arbitrary_finite} & $\sigma < \frac{2 \Delta F}{\epsilon}$ & Approximation of optimal $\sigma$ values \\ \hline
		Arbitrary Constraints & Section \ref{sec:arbitrary_finite} & $\sigma \leq  \frac{2 \Delta F}{\epsilon}$ & Approximation of optimal uniform $\sigma$ values \\ \hline
	\end{tabular}
	\label{table:sigma_summary}
\end{table*}

\subsection{Summary of Results}

Through the study of query range constraints, we have presented a framework for the design of mechanisms that allow for the truncation and normalization of the Laplace PDF according to the configuration of constraints. We have provided a detailed analysis of the impact of constraints on the mechanisms in our framework. An important characteristic of this framework is that it maintains the ability to apply an arbitrary discretization to the truncated range of the mechanisms.

We have derived a generalized differential privacy guarantee and have applied our method of design to various classes of constraint configurations. For each mechanism, we have proven its correctness and where applicable, we have also proven optimality with respect to the calculation of the scaling parameters. Information on the calculation of scaling parameters and their bounds is summarized in Table \ref{table:sigma_summary} for each of the constraint configuration classes that we have studied.

\section{Conclusions}

When posing queries on a sensitive database, natural range constraints of the query may be publicly known, however, most randomization mechanisms do not adhere to range constraints when adding noise to query responses. Truncation and normalization of a Laplace PDF offers a means to generate noisy responses within a specified range while improving the utility of the mechanism by inducing an increase in probability density that is greater for noisy responses nearer the true response than for responses farther away. We show that since the normalization of the PDF is a data-dependent operation, this leaks sensitive information and violates the differential privacy guarantee. We propose a method to correct this which involves determining a scaling parameter then used by the Laplace distribution. We introduce a generalization of the differential privacy guarantee which can be applied to the Laplace distribution to incorporate data-dependent normalization factors. Using our generalized guarantee, we study different classes of range constraint configurations and provide a derivation of optimal scaling parameters or approximations thereof. Using our proposed calculations, a data custodian can now apply the Laplace distribution to answer general numeric queries in a range-adherent and differentially private manner.

\bibliographystyle{plain}
\bibliography{bibliography}

\newpage
\setcounter{equation}{0}
\setcounter{lemma}{0}
\setcounter{theorem}{0}

\begin{appendix}
	
	\section*{Appendix}
	
	\begin{lemma} \label{lem:app_1}
		For any PDF with a location parameter at distance $i \Delta F > 0$ from the constraint, bounds on the possible values of its scaling parameter are determined by the following four inequalities:
		
		\medskip
		\begin{equation} \label{eq:app_s2_isolation_bound_1}
		\sigma_2 \leq -\frac{i \Delta F e^{i \epsilon} \sigma_1}{W \left( -\frac{2 i \Delta F e^{-i \epsilon} e^{-\frac{i \Delta F e^{-i \epsilon}}{\sigma_1}} }{\sigma_1} \right) e^{i \epsilon} \sigma_1 + i \Delta F}
		\end{equation}
		
		\begin{equation} \label{eq:app_s2_isolation_bound_2}
		\sigma_2 \geq -\frac{i \Delta F e^{i \epsilon} \sigma_1}{W_{-1} \left( -\frac{2 i \Delta F e^{-i \epsilon} e^{-\frac{i \Delta F e^{-i \epsilon}}{\sigma_1}} }{\sigma_1} \right) e^{i \epsilon} \sigma_1 + i \Delta F}
		\end{equation}
		
		{
			\begin{equation} \label{eq:app_s2_isolation_bound_3}
			\sigma_2 \geq \frac{i \Delta F}
			{ - W \left( 2 W \left( -\frac{1}{2 e} \right) i \epsilon
				e^{ W \left( -\frac{1}{2 e} \right) i \epsilon e^{-i \epsilon + 1} - i \epsilon + 1} \right)
				+ W \left( -\frac{1}{2 e} \right) i \epsilon e^{-i \epsilon + 1}}
			\end{equation}
		}
		
		{
			\begin{equation} \label{eq:app_s2_isolation_bound_4}
			\sigma_2 \leq \frac{i \Delta F}
			{ - W_{-1} \left( 2 W \left( -\frac{1}{2 e} \right) i \epsilon
				e^{ W \left( -\frac{1}{2 e} \right) i \epsilon e^{-i \epsilon + 1} - i \epsilon + 1} \right)
				+ W \left( -\frac{1}{2 e} \right) i \epsilon e^{-i \epsilon + 1}}
			\end{equation}
		}
	\end{lemma}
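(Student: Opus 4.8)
The plan is to start from the single-infinite-constraint form of the privacy guarantee in Formula~\eqref{eq:sing_inf_guar_simp} and specialize it to the configuration of the statement: a reference PDF whose location parameter sits exactly on the constraint ($\Delta L_1 = 0$), compared against a PDF whose location parameter lies at distance $i \Delta F > 0$ from the constraint. Substituting $\Delta L_1 = 0$ collapses the factor $2 - e^{-\Delta L_1/\sigma_1}$ to $1$ and the factor $e^{\Delta L_1/\sigma_1}$ to $1$, so the guarantee reduces to the single transcendental inequality $\frac{\sigma_2}{\sigma_1}\left(2 e^{i \Delta F / \sigma_2} - 1\right) \le e^{i \epsilon}$ in the unknown $\sigma_2$.

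First I would isolate $\sigma_2$ at the boundary (equality) case of this inequality. Setting $t = i \Delta F / \sigma_2$ and clearing denominators turns the equality into the mixed linear-exponential equation $2 i \Delta F\, e^{t} = \sigma_1 e^{i \epsilon} t + i \Delta F$, which is precisely the shape solvable by the LambertW function of Formulae~\eqref{eq:lambert_0} and~\eqref{eq:lambert_-1}. After the standard change of variable that absorbs the linear term, the equation takes the canonical form $X e^{X} = Y$ with $Y = -\frac{2 i \Delta F e^{-i \epsilon}}{\sigma_1}\, e^{-i \Delta F e^{-i \epsilon}/\sigma_1}$, whose right-hand side matches the LambertW argument in Formulae~\eqref{eq:app_s2_isolation_bound_1} and~\eqref{eq:app_s2_isolation_bound_2}. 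Provided $Y \ge -\frac{1}{e}$ (exactly the condition examined in the next step), the equation has two real roots, one on the principal branch $W$ and one on the $W_{-1}$ branch; back-substituting each gives two boundary values of $\sigma_2$. To decide which is an upper and which a lower bound, I would analyze the left-hand side of the reduced inequality as a function of $\sigma_2$: it diverges both as $\sigma_2 \to 0^{+}$ and as $\sigma_2 \to \infty$ and has a single interior minimum, so the feasible set is the closed interval between the two roots. This yields the upper bound from the principal branch (Formula~\eqref{eq:app_s2_isolation_bound_1}) and the lower bound from the $-1$ branch (Formula~\eqref{eq:app_s2_isolation_bound_2}).

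The remaining two inequalities come from requiring the LambertW expressions to be real. Writing $v = i \Delta F e^{-i \epsilon}/\sigma_1$, the argument $Y = -2 v e^{-v}$ must satisfy $Y \ge -\frac{1}{e}$, i.e. $v e^{-v} \le \frac{1}{2e}$; solving the boundary $v e^{-v} = \frac{1}{2e}$ again through LambertW introduces the constant $W\!\left(-\frac{1}{2e}\right)$ that appears throughout Formulae~\eqref{eq:app_s2_isolation_bound_3} and~\eqref{eq:app_s2_isolation_bound_4}. Propagating the critical value of $v$ (equivalently, of $\sigma_1$) back through the two branch-specific $\sigma_2$ expressions, and splitting once more according to which branch of the outer LambertW is selected, produces the lower bound of Formula~\eqref{eq:app_s2_isolation_bound_3} and the upper bound of Formula~\eqref{eq:app_s2_isolation_bound_4}. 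Finally I would record that all four inequalities are only necessary conditions for the reduced guarantee at $\Delta L_1 = 0$: they confine $\sigma_2$ to an intersection of spans but do not by themselves certify the full guarantee, which is deferred to the subsequent lemmas.

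The step I expect to be the main obstacle is the branch bookkeeping together with the nested LambertW algebra of the last two bounds. Tracking the sign of every factor so that the principal branch consistently yields upper bounds and the $W_{-1}$ branch lower bounds, and cross-checking this against the interior-minimum shape of the reduced inequality, is delicate; moreover the substitution producing Formulae~\eqref{eq:app_s2_isolation_bound_3} and~\eqref{eq:app_s2_isolation_bound_4} places one LambertW evaluation inside the argument of another, so the derivation must proceed without any sign or branch error. This is the part that would require a careful, essentially mechanical verification.
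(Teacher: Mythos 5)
Your handling of the first two bounds is essentially the paper's: the paper isolates $\sigma_2$ in the guarantee for general $\Delta L_1$ (Formulae \eqref{eq:app_s2_isolation_l1_leq} and \eqref{eq:app_s2_isolation_l1_geq}) and only afterwards sets $\Delta L_1 = 0$, whereas you set $\Delta L_1 = 0$ first and then isolate; either order yields Formulae \eqref{eq:app_s2_isolation_bound_1} and \eqref{eq:app_s2_isolation_bound_2}, and your observation that the feasible set is the interval between the two branch roots matches the paper's ``intersection of the two spans.''

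However, your route to Formulae \eqref{eq:app_s2_isolation_bound_3} and \eqref{eq:app_s2_isolation_bound_4} contains a genuine gap, for two related reasons. First, the real-valuedness requirement on the LambertW argument appearing in Formulae \eqref{eq:app_s2_isolation_bound_1} and \eqref{eq:app_s2_isolation_bound_2} is a condition on $v = i \Delta F e^{-i \epsilon}/\sigma_1$, i.e.\ on $\sigma_1$, the scaling parameter of the PDF sitting on the constraint; it says nothing directly about $\sigma_2$, which is the quantity the lemma bounds. Second, the repair you propose --- substituting the critical value of $\sigma_1$ back into the two branch-specific $\sigma_2$ expressions --- does reproduce the right-hand sides of Formulae \eqref{eq:app_s2_isolation_bound_3} and \eqref{eq:app_s2_isolation_bound_4}, but with the inequality directions \emph{reversed}: the principal-branch expression arrives as an upper bound on $\sigma_2$ and the $W_{-1}$ expression as a lower bound. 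Those reversed inequalities are precisely Formulae \eqref{eq:app_s2_isolation_bound_5} and \eqref{eq:app_s2_isolation_bound_6} in Lemma \ref{lem:app_2}, which the paper uses there to pin $\sigma_2$ down to equality; they are not the bounds claimed in the present lemma. In the paper, Formulae \eqref{eq:app_s2_isolation_bound_3} and \eqref{eq:app_s2_isolation_bound_4} encode a requirement your setup cannot express once you have discarded $\Delta L_1$: the PDF at distance $i \Delta F$ must \emph{itself} admit real-valued LambertW comparisons with PDFs still farther from the constraint. Concretely, the paper keeps $\Delta L_1$ general, demands that the LambertW input of Formulae \eqref{eq:app_s2_isolation_l1_leq}--\eqref{eq:app_s2_isolation_l1_geq} be at least $-\frac{1}{e}$, isolates the scaling parameter of the closer PDF (Formulae \eqref{eq:app_sigma_1_bound_2_geq} and \eqref{eq:app_sigma_1_bound_2_leq}), evaluates at the worst-case $i$ (the minimizer of the input, which requires the unimodality analysis you deferred), and only then sets $\Delta L_1 = i \Delta F$ so that the resulting union-of-spans condition (Formulae \eqref{eq:app_sigma_1_bound_3_geq} and \eqref{eq:app_sigma_1_bound_3_leq}) becomes a constraint on the scaling parameter of the PDF at distance $i \Delta F$, i.e.\ on $\sigma_2$. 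Without this step you obtain the wrong directions for two of the four inequalities, and the later optimality argument of Theorem \ref{th:app_1}, which plays the disjoint spans of \eqref{eq:app_s2_isolation_bound_3}--\eqref{eq:app_s2_isolation_bound_4} against the overlapping spans of \eqref{eq:app_s2_isolation_bound_1}--\eqref{eq:app_s2_isolation_bound_2}, cannot go through.
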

	
	\begin{proof}
		Recall from the main text that the privacy guarantee to be satisfied is as shown in Formula \eqref{eq:app_sing_inf_guar_l1}.
		
		\medskip
		{
			\begin{equation} \label{eq:app_sing_inf_guar_l1}
			\frac{\sigma_2}{\sigma_1}  
			\left( \frac{2-e^{-\frac{\Delta L_1 + i \Delta F}{\sigma_2}}}{2-e^{-\frac{\Delta L_1}{\sigma_1}}} \right)
			\left( \frac{e^{\frac{\Delta L_1 + i \Delta F}{\sigma_2}}}{e^{\frac{\Delta L_1}{\sigma_1}}} \right)
			\leq e^{i \epsilon}
			\end{equation}
		}
		\bigskip
		
		By isolating $\sigma_2$ in the privacy guarantee, we obtain Formulae \eqref{eq:app_s2_isolation_l1_leq} and \eqref{eq:app_s2_isolation_l1_geq}. Given a PDF for database $D_1  \in \mathbb{D}$ with a scaling parameter $\sigma_1$, a paired PDF for database $D_2  \in \mathbb{D}$ satisfies the privacy guarantee if its scaling parameter $\sigma_2$ falls in the intersection of the two spans given by these inequalities.
		
		\medskip
		{
			\begin{equation} \label{eq:app_s2_isolation_l1_leq}
			\sigma_2 \leq
			- \frac{2 \left( e^{\frac{\Delta L_1}{\sigma_1}} - \frac{1}{2} \right) e^{i \epsilon} \sigma_1 \left( i \Delta F + \Delta L_1 \right)}
			{-\sigma_1 \left( e^{i \epsilon} - 2 e^{\frac{i \epsilon \sigma_1 + \Delta L_1}{\sigma_1}} \right) 
				W \left( -\frac{2 \left( i \Delta F + \Delta L_1 \right) 
					e^{-\frac{2 i \epsilon \sigma_1 e^{\frac{\Delta L_1}{\sigma_1}} + e^{-i \epsilon} i \Delta F - i \epsilon \sigma_1 + e^{-i \epsilon} \Delta L_1}{\sigma_1 \left( 2 e^{\frac{\Delta L_1}{\sigma_1}} - 1 \right)}}}
				{\sigma_1 \left( 2 e^{\frac{\Delta L_1}{\sigma_1}} - 1 \right)} \right)
				+ i \Delta F + \Delta L_1}
			\end{equation}
		}
		
		{
			\begin{equation} \label{eq:app_s2_isolation_l1_geq}
			\sigma_2 \geq
			- \frac{2 \left( e^{\frac{\Delta L_1}{\sigma_1}} - \frac{1}{2} \right) e^{i \epsilon} \sigma_1 \left( i \Delta F + \Delta L_1 \right)}
			{-\sigma_1 \left( e^{i \epsilon} - 2 e^{\frac{i \epsilon \sigma_1 + \Delta L_1}{\sigma_1}} \right) 
				W_{-1} \left( -\frac{2 \left( i \Delta F + \Delta L_1 \right) 
					e^{-\frac{2 i \epsilon \sigma_1 e^{\frac{\Delta L_1}{\sigma_1}} + e^{-i \epsilon} i \Delta F - i \epsilon \sigma_1 + e^{-i \epsilon} \Delta L_1}{\sigma_1 \left( 2 e^{\frac{\Delta L_1}{\sigma_1}} - 1 \right)}}}
				{\sigma_1 \left( 2 e^{\frac{\Delta L_1}{\sigma_1}} - 1 \right)} \right)
				+ i \Delta F + \Delta L_1}
			\end{equation}
		}
		\bigskip
		
		In order for this intersection to be a real-valued range, it is necessary for the input to the LambertW functions in the inequalities to always be greater than or equal to $-\frac{1}{e}$. To ensure that this condition is met for all possible values of $i$, we first take the derivative of the input with respect to $i$ as shown in Formula \eqref{eq:app_l1_lambert_input_der}.
		
		\medskip
		{
			\begin{equation} \label{eq:app_l1_lambert_input_der}
			-\frac{2 \left( i \epsilon \Delta F + \epsilon \Delta L_1 - \Delta F \right) 
				\left( -2 \sigma_1 e^{\frac{\Delta L_1}{\sigma_1}} + \left( i \Delta F + \Delta L_1 \right) e^{-i \epsilon} + \sigma_1 \right)
				e^{-\frac{2 i \epsilon \sigma_1 e^{\frac{\Delta L_1}{\sigma_1}} + e^{-i \epsilon} i \Delta F - i \epsilon \sigma_1 + e^{-i \epsilon} \Delta L_1}
					{\sigma_1 \left( 2 e^{\frac{\Delta L_1}{\sigma_1}} -1 \right)}}}
			{\left( \sigma_1 \right)^2 \left( 2 e^{\frac{\Delta L_1}{\sigma_1}} -1 \right)^2}
			\end{equation}
		}
		\bigskip
		
		Three possible zeros for the derivative can be calculated as shown in Formulae \eqref{eq:app_l1_lambert_input_der_zero} (where the variable Z can be replaced with 0 or -1) and \eqref{eq:app_l1_lambert_input_der_zero_2}.
		
		\medskip
		{
			\begin{equation} \label{eq:app_l1_lambert_input_der_zero}
			i = -\frac{\Delta L_1 \epsilon + \Delta F W_Z \left( 
				-\frac{\epsilon \sigma_1 \left( 2 e^{\frac{\Delta L_1}{\sigma_1}} -1 \right) e^{-\frac{\Delta L_1 \epsilon}{\Delta F}} }
				{\Delta F} \right)}
			{\epsilon \Delta F}
			\end{equation}
		}
		
		{
			\begin{equation} \label{eq:app_l1_lambert_input_der_zero_2}
			i = -\frac{\Delta L_1 \epsilon - \Delta F}{\epsilon \Delta F}
			\end{equation}
		}
		\bigskip
		
		Since the substitution of the zero from Formula \eqref{eq:app_l1_lambert_input_der_zero} into the input of the LambertW function does not allow for easy steps of simplification when using -1 as the value of $Z$, we take an alternate approach to first show that the value of the input to the LambertW function is the same at both zeros defined in Formula \eqref{eq:app_l1_lambert_input_der_zero} (using 0 or -1 as the value of $Z$). Let the first zero occur at $i$ and the second occur at $j$. The equality between the values at these zeros is shown in Formulae \eqref{eq:app_second_zero_equality} - \eqref{eq:app_second_zero_equality_6}.
		
		\medskip
		{
			\begin{equation} \label{eq:app_second_zero_equality}
			-\frac{2 \left( i \Delta F + \Delta L_1 \right) 
				e^{-\frac{2 i \epsilon \sigma_1 e^{\frac{\Delta L_1}{\sigma_1}} + e^{-i \epsilon} i \Delta F - i \epsilon \sigma_1 + e^{-i \epsilon} \Delta L_1}{\sigma_1 \left( 2 e^{\frac{\Delta L_1}{\sigma_1}} - 1 \right)}}}
			{\sigma_1 \left( 2 e^{\frac{\Delta L_1}{\sigma_1}} - 1 \right)}
			=
			-\frac{2 \left( j \Delta F + \Delta L_1 \right) 
				e^{-\frac{2 j \epsilon \sigma_1 e^{\frac{\Delta L_1}{\sigma_1}} + e^{-j \epsilon} i \Delta F - j \epsilon \sigma_1 + e^{-j \epsilon} \Delta L_1}{\sigma_1 \left( 2 e^{\frac{\Delta L_1}{\sigma_1}} - 1 \right)}}}
			{\sigma_1 \left( 2 e^{\frac{\Delta L_1}{\sigma_1}} - 1 \right)}
			\end{equation}
		}
		
		{
			\begin{equation} \label{eq:app_second_zero_equality_2}
			-\frac{e^{-i \epsilon} \left( i \Delta F + \Delta L_1 \right) 
				e^{-\frac{e^{-i \epsilon} \left( i \Delta F + \Delta L_1 \right)}
					{\sigma_1 \left( 2 e^{\frac{\Delta L_1}{\sigma_1}} - 1 \right)}}}
			{\sigma_1 \left( 2 e^{\frac{\Delta L_1}{\sigma_1}} - 1 \right)}
			=
			-\frac{\left( j \Delta F + \Delta L_1 \right) 
				e^{-\frac{e^{-j \epsilon} \left( j \Delta F + \Delta L_1 \right)}
					{\sigma_1 \left( 2 e^{\frac{\Delta L_1}{\sigma_1}} - 1 \right)}}}
			{\sigma_1 \left( 2 e^{\frac{\Delta L_1}{\sigma_1}} - 1 \right)}
			\end{equation}
		}
		
		{
			\begin{equation} \label{eq:app_second_zero_equality_3}
			-\frac{e^{-i \epsilon} \left( i \Delta F + \Delta L_1 \right)}
			{\sigma_1 \left( 2 e^{\frac{\Delta L_1}{\sigma_1}} - 1 \right)}
			=
			-\frac{e^{-j \epsilon} \left( j \Delta F + \Delta L_1 \right)}
			{\sigma_1 \left( 2 e^{\frac{\Delta L_1}{\sigma_1}} - 1 \right)}
			\end{equation}
		}
		
		{
			\begin{equation} \label{eq:app_second_zero_equality_4}
			e^{-i \epsilon} \left( i \Delta F + \Delta L_1 \right)
			=
			e^{-j \epsilon} \left( j \Delta F + \Delta L_1 \right)
			\end{equation}
		}
		
		At this point, we replace $i$ and $j$ with their values as determined by Formula \eqref{eq:app_l1_lambert_input_der_zero} and then we simplify the expression. In the interest of space, we substitute the instances of the LambertW function from Formula \eqref{eq:app_l1_lambert_input_der_zero}, using $a$ for the 0 branch and $b$ for the -1 branch.
		
		\medskip
		{
			\begin{equation} \label{eq:app_second_zero_equality_5}
			e^{ \left( \frac{\Delta L_1 \epsilon + \Delta F a}{\epsilon \Delta F} \right) \epsilon}
			\left( \left( -\frac{\Delta L_1 \epsilon + \Delta F a}{\epsilon \Delta F} \right) \Delta F + \Delta L_1 \right)
			=
			e^{ \left( \frac{\Delta L_1 \epsilon + \Delta F b}{\epsilon \Delta F} \right) \epsilon}
			\left( \left( -\frac{\Delta L_1 \epsilon + \Delta F b}{\epsilon \Delta F} \right) \Delta F + \Delta L_1 \right)
			\end{equation}
		}
		
		{
			\begin{equation} \label{eq:app_second_zero_equality_6}
			a e^a = b e^b
			\end{equation}
		}
		
		Since the LambertW functions of $a$ and $b$ both have the same input, the equality of Formula \eqref{eq:app_second_zero_equality_6} is confirmed to be valid. Now by taking the zero of Formula \eqref{eq:app_l1_lambert_input_der_zero} using 0 as the value for Z, we rewrite the zero in terms of $\Delta F$ as shown in Formula \eqref{eq:app_second_zero_invalidity}
		
		\medskip
		{
			\begin{equation} \label{eq:app_second_zero_invalidity}
			\Delta F = \frac{-e^{i \epsilon} \sigma_1 + 2 \sigma_1 e^{\frac{i \epsilon \sigma_1 + \Delta L_1}{\Delta L_1}} -\Delta L_1}{i}
			\end{equation}
		}
		
		When substituting $\Delta F$ in the original LambertW input with the expression in Formula \eqref{eq:app_second_zero_invalidity}, the value of the input becomes $-\frac{2}{e}$ which is outside of the allowable range of input. Since we have shown that the value of the zero using -1 as the value of Z will be the same, both of the potential modes determined by these zeros can be ignored. The function of the input is therefore unimodal and it remains to determine whether the mode is a minimum or a maximum. By setting $i$ to 0 in the derivative, we obtain the expression shown in Formula \eqref{eq:app_derivative_sign}.
		
		\medskip
		{
			\begin{equation} \label{eq:app_derivative_sign}
			\frac{2 \left( \left( \Delta L_1 + \sigma_1 \right) e^{-\frac{\Delta L_1}{\sigma_1}} -2 \sigma_1 \right) 
				e^{\frac{2 \Delta L_1}{\sigma_1 \left( -2 + e^{-\frac{\Delta L_1}{\sigma_1}} \right) }}
				\left( - \Delta L_1 \epsilon + \Delta F \right) }
			{ \left( \sigma_1 \right)^2 \left( -2 + e^{-\frac{\Delta L_1}{\sigma_1}} \right)^2 }
			\end{equation}
		}
		
		It is clear that the denominator is always positive as both factors are squared. In the numerator, four factors are present. The first, being the constant 2, and the third, being an exponential function, must always be positive. The sign of the second and fourth factors remains to be determined. We show that the second factor is always negative by proving Formula \eqref{eq:app_derivative_sign_3}.
		
		\medskip
		{
			\begin{equation} \label{eq:app_derivative_sign_3}
			\left( \Delta L_1 + \sigma_1 \right) e^{-\frac{\Delta L_1}{\sigma_1}} \leq 2 \sigma_1
			\end{equation}
		}
		
		The derivative of the left-hand side with respect to $\Delta L_1$ is shown in Formula \eqref{eq:app_derivative_sign_4}.
		
		\medskip
		{
			\begin{equation} \label{eq:app_derivative_sign_4}
			-\frac{e^{-\frac{\Delta L_1}{\sigma_1}} \Delta L_1}{\sigma_1}
			\end{equation}
		}
		
		Since all variables are non-negative, this derivative is always negative meaning that the left-hand side is decreasing as $\Delta L_1$ is increasing. It is therefore maximized when $\Delta L_1 = 0$. When making this substitution, the inequality reduces to $1 \leq 2$, thus proving that the factor is indeed always negative.
		
		The fourth factor in the numerator of the derivative is the same expression as the numerator of the zero in Formula \eqref{eq:app_l1_lambert_input_der_zero_2}. From this, we can infer that if the zero is at a positive $i$ value, then the derivative is negative when $i=0$ and if the zero is at a negative $i$ value, then the derivative is positive when $i=0$. This means that the mode at the zero is a minimum.
		
		Now, we can define a condition on the input to the LambertW function as Formula \eqref{eq:app_sigma_1_bound}.
		
		\medskip
		{
			\begin{equation} \label{eq:app_sigma_1_bound}
			-\frac{2 \left( i \Delta F + \Delta L_1 \right) 
				e^{-\frac{2 i \epsilon \sigma_1 e^{\frac{\Delta L_1}{\sigma_1}} + e^{-i \epsilon} i \Delta F - i \epsilon \sigma_1 + e^{-i \epsilon} \Delta L_1}{\sigma_1 \left( 2 e^{\frac{\Delta L_1}{\sigma_1}} - 1 \right)}}}
			{\sigma_1 \left( 2 e^{\frac{\Delta L_1}{\sigma_1}} - 1 \right)}
			\geq -\frac{1}{e}
			\end{equation}
		}
		
		By isolating for $\sigma_1$, we get Formulae \eqref{eq:app_sigma_1_bound_2_geq} and \eqref{eq:app_sigma_1_bound_2_leq}. These inequalities represent disjoint spans where the value of $\sigma_1$ is required to fall into their union.
		
		\medskip
		{
			\begin{equation} \label{eq:app_sigma_1_bound_2_geq}
			\hspace{-1cm}
			\sigma_1 \geq \frac{-\Delta L_1 \left( i \Delta F + \Delta L_1 \right) }
			{ \left( i \Delta F + \Delta L_1 \right)
				W \left( \frac{2 \Delta L_1 W \left( -\frac{1}{2 e} \right) 
					e^{\frac{\Delta L_1 W \left( -\frac{1}{2 e} \right) e^{i \epsilon} + i \epsilon \left( i \Delta F + \Delta L_1 \right)}{i \Delta F + \Delta L_1}}}
				{i \Delta F + \Delta L_1} \right)
				- \Delta L_1 W \left( -\frac{1}{2 e} \right) e^{i \epsilon}}
			\end{equation}
		}

		{
			\begin{equation} \label{eq:app_sigma_1_bound_2_leq}
			\hspace{-1cm}
			\sigma_1 \leq \frac{-\Delta L_1 \left( i \Delta F + \Delta L_1 \right) }
			{ \left( i \Delta F + \Delta L_1 \right)
				W_{-1} \left( \frac{2 \Delta L_1 W \left( \frac{-1}{2 e} \right) 
					e^{\frac{\Delta L_1 W \left( \frac{-1}{2 e} \right) e^{i \epsilon} + i \epsilon \left( i \Delta F + \Delta L_1 \right)}{i \Delta F + \Delta L_1}}}
				{i \Delta F + \Delta L_1} \right)
				- \Delta L_1 W \left( \frac{-1}{2 e} \right) e^{i \epsilon}}
			\end{equation}
		}
		
		We substitute $i$ with the zero of the derivative since we must ensure that the inequalities will hold at the minimum value of the input function. This produces Formulae \eqref{eq:app_sigma_1_bound_3_geq} and \eqref{eq:app_sigma_1_bound_3_leq}. These inequalities represent bounds on $\sigma_1$ values for PDFs with location parameters at a distance of $\Delta L_1 > 0$ away from the constraint.
		
		\medskip
		{
			\begin{equation} \label{eq:app_sigma_1_bound_3_geq}
			\hspace{-2cm}
			\sigma_1 \geq \frac{- \Delta L_1 \Delta F}
			{ \Delta F W \left( \frac{2 \Delta L_1 W \left( \frac{-1}{2 e} \right) \epsilon
					e^{\frac{\Delta L_1 W \left( \frac{-1}{2 e} \right) \epsilon e^{-\frac{\Delta L_1 \epsilon - \Delta F}{\Delta F}} - \Delta L_1 \epsilon + \Delta F }{\Delta F}}}
				{\Delta F} \right)
				- \Delta L_1 W \left( \frac{-1}{2 e} \right) \epsilon e^{-\frac{\Delta L_1 \epsilon - \Delta F}{\Delta F}}}
			\end{equation}
		}
		
		{
			\begin{equation} \label{eq:app_sigma_1_bound_3_leq}
			\hspace{-2cm}
			\sigma_1 \leq \frac{- \Delta L_1 \Delta F}
			{ \Delta F W_{-1} \left( \frac{2 \Delta L_1 W \left( \frac{-1}{2 e} \right) \epsilon
					e^{\frac{\Delta L_1 W \left( \frac{-1}{2 e} \right) \epsilon e^{-\frac{\Delta L_1 \epsilon - \Delta F}{\Delta F}} - \Delta L_1 \epsilon + \Delta F }{\Delta F}}}
				{\Delta F} \right)
				- \Delta L_1 W \left( \frac{-1}{2 e} \right) \epsilon e^{-\frac{\Delta L_1 \epsilon - \Delta F}{\Delta F}}}
			\end{equation}
		}
		
		Finally, since we want to be able to consider the bounds of Formulae \eqref{eq:app_s2_isolation_l1_leq} and \eqref{eq:app_s2_isolation_l1_geq} as well as those of Formulae \eqref{eq:app_sigma_1_bound_3_geq} and \eqref{eq:app_sigma_1_bound_3_leq} for $\sigma$ values of PDFs having the same location parameter, we set $\Delta L_1=0$ in Formulae \eqref{eq:app_s2_isolation_l1_leq} and \eqref{eq:app_s2_isolation_l1_geq} and we set $\Delta L_1 = i \Delta F$ in Formulae \eqref{eq:app_sigma_1_bound_3_geq} and \eqref{eq:app_sigma_1_bound_3_leq}. This produces four inequalities which specify bounds on $\sigma_2$ values for PDFs with location parameters at distance $i \Delta F > 0$ from the constraint such that Formulae \eqref{eq:app_s2_isolation_bound_1} and \eqref{eq:app_s2_isolation_bound_2} are overlapping spans where $\sigma_2$ must fall in their intersection and Formulae \eqref{eq:app_s2_isolation_bound_3} and \eqref{eq:app_s2_isolation_bound_4} are disjoint spans where $\sigma_2$ must fall in their union.
	\end{proof}

	\begin{lemma} \label{lem:app_2}
		Through the selection of an appropriate $\sigma_1$ value when $\Delta L_1 = 0$, it is possible to calculate $\sigma_2$ values for any PDF with a location parameter at distance $i \Delta F > 0$ away from the constraint such that the inequalities of Lemma \ref{lem:app_1} are satisfied.
	\end{lemma}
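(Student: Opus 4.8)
The plan is to exhibit the single value of $\sigma_1$ (valid for the reference PDF at $\Delta L_1 = 0$) that aligns the two pairs of bounds from Lemma~\ref{lem:app_1} and thereby collapses the admissible set of $\sigma_2$ to exactly two points. Recall that Lemma~\ref{lem:app_1} confines $\sigma_2$ to the intersection of the closed interval bounded above by the right-hand side of \eqref{eq:app_s2_isolation_bound_1} and below by that of \eqref{eq:app_s2_isolation_bound_2} with the union of the two disjoint rays cut out by \eqref{eq:app_s2_isolation_bound_3} and \eqref{eq:app_s2_isolation_bound_4}. Crucially, the guarantee-derived pair \eqref{eq:app_s2_isolation_bound_1}--\eqref{eq:app_s2_isolation_bound_2} depends on $\sigma_1$, whereas the validity-derived pair \eqref{eq:app_s2_isolation_bound_3}--\eqref{eq:app_s2_isolation_bound_4} does not; so $\sigma_1$ is the only free parameter available to bring the two pairs into coincidence.

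The key observation I would exploit is that each pair shares a common LambertW argument across its two branches, and that both arguments have the shape $2xe^{x}$. Writing $t = -\frac{i \Delta F e^{-i\epsilon}}{\sigma_1}$ for the exponent appearing in the argument of \eqref{eq:app_s2_isolation_bound_1}--\eqref{eq:app_s2_isolation_bound_2} and $s = W\!\left(-\frac{1}{2e}\right) i\epsilon\, e^{-i\epsilon + 1}$ for the corresponding quantity in \eqref{eq:app_s2_isolation_bound_3}--\eqref{eq:app_s2_isolation_bound_4}, the two arguments become $2te^{t}$ and $2se^{s}$ respectively. First I would impose $t = s$; this single scalar equation simultaneously matches the two LambertW arguments and forces the surrounding rational prefactors to agree, so that by branch symmetry the principal-branch bound \eqref{eq:app_s2_isolation_bound_1} coincides with \eqref{eq:app_s2_isolation_bound_3} and, at the same time, the $W_{-1}$-branch bound \eqref{eq:app_s2_isolation_bound_2} coincides with \eqref{eq:app_s2_isolation_bound_4}. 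Solving $t = s$ for $\sigma_1$ yields exactly Formula~\eqref{eq:sigma_1_inequality_2}; I would note as a consistency check that this is also the $\Delta L_1 \to 0$ limit of the $\sigma_1$-validity bounds \eqref{eq:app_sigma_1_bound_3_geq}--\eqref{eq:app_sigma_1_bound_3_leq}, and that $W\!\left(-\frac{1}{2e}\right) \in (-1,0)$ makes this $\sigma_1$ well-defined and positive.

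Having fixed $\sigma_1$, the guarantee interval coincides with the closed gap between the two validity rays, whose open interior is exactly what those rays exclude; hence the intersection of the interval with the validity region is precisely the two endpoints of that gap and nothing more. These two surviving values are, by construction, the right-hand sides of \eqref{eq:app_s2_isolation_bound_1} and \eqref{eq:app_s2_isolation_bound_2}, which are precisely the two outputs of Formula~\eqref{eq:s2_equality} obtained by taking $Z = 0$ and $Z = -1$. This shows that for every PDF whose location parameter lies at distance $i \Delta F > 0$ from the constraint there is an admissible scaling parameter (indeed exactly two of them) satisfying all four inequalities of Lemma~\ref{lem:app_1}, which is the assertion of the lemma.

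The hard part will be the algebra underlying the branch-symmetry step: verifying that imposing $t = s$ makes not merely the LambertW \emph{arguments} coincide but the full bounds coincide, i.e. that the rational prefactor $-\frac{i\Delta F e^{i\epsilon}\sigma_1}{(\,\cdot\,)e^{i\epsilon}\sigma_1 + i\Delta F}$ of \eqref{eq:app_s2_isolation_bound_1}--\eqref{eq:app_s2_isolation_bound_2} reduces, under this choice of $\sigma_1$, to the prefactor $\frac{i\Delta F}{-(\,\cdot\,) + W\!\left(-\frac{1}{2e}\right) i\epsilon\, e^{-i\epsilon+1}}$ of \eqref{eq:app_s2_isolation_bound_3}--\eqref{eq:app_s2_isolation_bound_4} independently of which branch value is substituted for the dot. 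This requires unwinding the nested LambertW in \eqref{eq:app_s2_isolation_bound_3}--\eqref{eq:app_s2_isolation_bound_4} via the identity $W(xe^{x}) = x$ and carefully tracking which branch is active at each stage, so that the principal-to-principal and $(-1)$-to-$(-1)$ matchings are genuinely simultaneous rather than merely formal.
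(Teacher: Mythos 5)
Your proposal is correct, and it lands exactly on the paper's answer --- the $\sigma_1$ of Formula \eqref{eq:app_sigma_1_inequality_2} and the two admissible $\sigma_2$ values of Formula \eqref{eq:app_s2_equality} --- but by a genuinely different derivation. The paper never \emph{imposes} coincidence of the LambertW arguments; it derives $\sigma_1$ from a real-valuedness requirement: the input to $W_Z$ in \eqref{eq:app_s2_isolation_bound_1}--\eqref{eq:app_s2_isolation_bound_2} must stay $\geq -\tfrac{1}{e}$ for every $i$, which it establishes by a unimodality analysis of that input in $i$ (derivative \eqref{eq:app_lambert_input_der}, elimination of spurious zeros at which the input equals $-\tfrac{2}{e}$, minimum at $i=\tfrac{1}{\epsilon}$), then enforces the condition at the mode, obtaining two admissible spans for $\sigma_1$ and selecting the boundary of the first; the coincidence of \eqref{eq:app_s2_isolation_bound_1}--\eqref{eq:app_s2_isolation_bound_2} with \eqref{eq:app_s2_isolation_bound_3}--\eqref{eq:app_s2_isolation_bound_4} is then \emph{observed} by substitution. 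Your route --- set $t=s$, note that $i$ cancels so one $\sigma_1$ serves all $i$, then match prefactors --- is shorter and makes the collapse to two points transparent. The step you flag as hard does go through, and more easily than you expect: substituting $\sigma_1 e^{i\epsilon} = -\tfrac{i\Delta F}{t}$ reduces the right-hand sides of \eqref{eq:app_s2_isolation_bound_1}/\eqref{eq:app_s2_isolation_bound_2} to $\tfrac{i\Delta F}{t - W_Z\left(2te^t\right)}$, which under $t=s$ is literally the form of \eqref{eq:app_s2_isolation_bound_3}/\eqref{eq:app_s2_isolation_bound_4}; no unwinding of the nested LambertW is needed (and, incidentally, your sign is right --- the leading minus in the paper's Formulae \eqref{eq:app_s2_isolation_bound_5}--\eqref{eq:app_s2_isolation_bound_6} is a typo, as the paper's own text asserts these differ from \eqref{eq:app_s2_isolation_bound_3}--\eqref{eq:app_s2_isolation_bound_4} only in inequality direction). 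What the paper's longer route buys is reusable structure: the worst case at $i=\tfrac{1}{\epsilon}$ and the two-span characterization of admissible $\sigma_1$ feed directly into the branch determination of Lemma \ref{lem:app_3}, the mode analysis of Lemmas \ref{lem:app_5}--\ref{lem:app_6}, and the optimality argument of Theorem \ref{th:app_1}, which needs $\sigma_1$ to sit on the boundary of its admissible region, not merely to be a value that happens to align the bounds. One point you should make explicit: real-valuedness of the coinciding bounds requires $2se^s \geq -\tfrac{1}{e}$ for all $i>0$, which is inherited from Lemma \ref{lem:app_1} or checked directly from $i\epsilon e^{1-i\epsilon}\leq 1$ and $2W\!\left(-\tfrac{1}{2e}\right)e^{W\left(-\frac{1}{2e}\right)} = -\tfrac{1}{e}$.
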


	\begin{proof}
		By setting $\Delta L_1=0$ in Formula \eqref{eq:app_sing_inf_guar_l1}, we obtain the form in Formula \eqref{eq:app_sing_inf_guar_simp}.
		
		\medskip
		{
			\begin{equation} \label{eq:app_sing_inf_guar_simp}
			\frac{\sigma_2}{\sigma_1} 
			\left( 2-e^{-\frac{i \Delta F}{\sigma_2}} \right) 
			e^{\frac{i \Delta F}{\sigma_2}} 
			\leq e^{i \epsilon}
			\end{equation}
		}
		\bigskip
		
		By isolating for $\sigma_2$, we obtain Formulae \eqref{eq:app_sing_inf_sigma2_leq} and \eqref{eq:app_sing_inf_sigma2_geq}. As in Lemma \ref{lem:app_1}, the intersection of the spans defined by the inequalities gives the valid range for the $\sigma_2$ values.
		
		\begin{equation} \label{eq:app_sing_inf_sigma2_leq}
		\sigma_2 \leq -\frac{i \Delta F e^{i \epsilon} \sigma_1}{W \left( -\frac{2 i \Delta F e^{-i \epsilon} e^{-\frac{i \Delta F e^{-i \epsilon}}{\sigma_1}} }{\sigma_1} \right) e^{i \epsilon} \sigma_1 + i \Delta F}
		\end{equation}
		
		\begin{equation} \label{eq:app_sing_inf_sigma2_geq}
		\sigma_2 \geq -\frac{i \Delta F e^{i \epsilon} \sigma_1}{W_{-1} \left( -\frac{2 i \Delta F e^{-i \epsilon} e^{-\frac{i \Delta F e^{-i \epsilon}}{\sigma_1}} }{\sigma_1} \right) e^{i \epsilon} \sigma_1 + i \Delta F}
		\end{equation}
		\bigskip
		
		We now take the same process as in Lemma \ref{lem:app_1} to ensure that the input to the LambertW function is always greater than or equal to $-\frac{1}{e}$. The derivative of the input is shown in Formula \eqref{eq:app_lambert_input_der}.
		
		\medskip
		{
			\begin{equation} \label{eq:app_lambert_input_der}
			-\frac{2 \Delta F e^{-i \epsilon} e^{-\frac{i \Delta F e^{-i \epsilon}}{\sigma_1}} \left( i \epsilon - 1 \right) \left( i \Delta F e^{-i \epsilon} - \sigma_1 \right)}{{\sigma_1}^2}
			\end{equation}
		}
		\bigskip
		
		We are interested in the modality of the input, thus we calculate the zeros of the derivative. Two possible zeros are given in Formula \eqref{eq:app_lambert_input_der_zeros} where Z can be replaced with 0 or -1 and a third is given in Formula \eqref{eq:app_lambert_input_der_zeros_2}.
		
		\medskip
		{
			\begin{equation} \label{eq:app_lambert_input_der_zeros}
			i = -\frac{W_Z \left( -\frac{\epsilon \sigma_1}{\Delta F} \right) }{\epsilon}
			\end{equation}
		}
		
		{
			\begin{equation} \label{eq:app_lambert_input_der_zeros_2}
			i = \frac{1}{\epsilon}
			\end{equation}
		}
		
		For both of the first two zeros in Formula \eqref{eq:app_lambert_input_der_zeros}, by substituting $i$ with this value in the input of the LambertW function, the value of the input becomes $-\frac{2}{e}$. Since this is outside of the valid range for the input, both of these zeros can be ignored. The third zero allows for the input to fall in the valid range. We can therefore conclude that the input is unimodal as a function of $i$.
		
		We now turn to identifying whether the zero is a minimum or a maximum. To do so, we examine the value of the derivative when $i=0$. Since the value is $-\frac{2 \Delta F}{\sigma_1}$, and all variables must be non-negative, the function is decreasing at $i=0$ and since the mode is at $i=\frac{1}{\epsilon}$, it is a minimum. From this, we can now specify that in order for the input of the LambertW to always remain greater than or equal to $-\frac{1}{e}$, the value of the input at its mode must satisfy the same condition. This is represented as Formula \eqref{eq:app_sigma_1_inequality}.
		
		\medskip
		{
			\begin{equation} \label{eq:app_sigma_1_inequality}
			-\frac{2 \Delta F e^{-1-\frac{\Delta F}{\epsilon e \sigma_1}}}{\epsilon \sigma_1} \geq -\frac{1}{e}
			\end{equation}
		}
		\bigskip
		
		Using this inequality, we can now isolate for $\sigma_1$ and find that its possible values are given as the union of two disjoint spans defined by Formulae \eqref{eq:app_sigma_1_inequality_2} and \eqref{eq:app_sigma_1_inequality_3}.
		
		\medskip
		{
			\begin{equation} \label{eq:app_sigma_1_inequality_2}
			\sigma_1 \geq -\frac{\Delta F}{W \left( -\frac{1}{2e} \right) e \epsilon}
			\end{equation}
		}
		
		{
			\begin{equation} \label{eq:app_sigma_1_inequality_3}
			\sigma_1 \leq -\frac{\Delta F}{W_{-1} \left( -\frac{1}{2e} \right) e \epsilon}
			\end{equation}
		}
		\bigskip
		
		From these two spans, we select the lower bound of Formula \eqref{eq:app_sigma_1_inequality_2} as the value to assign to $\sigma_1$. Over the course of the remaining lemmas, we will show why this choice is necessary. By substituting the $\sigma_1$ of Formulae \eqref{eq:app_s2_isolation_bound_1} and \eqref{eq:app_s2_isolation_bound_2} with this value, we get the following inequalities:
		
		\medskip
		{
			\begin{equation} \label{eq:app_s2_isolation_bound_5}
			\sigma_2 \leq -\frac{i \Delta F}
			{ - W \left( 2 W \left( -\frac{1}{2 e} \right) i \epsilon
				e^{ W \left( -\frac{1}{2 e} \right) i \epsilon e^{-i \epsilon + 1} - i \epsilon + 1} \right)
				+ W \left( -\frac{1}{2 e} \right) i \epsilon e^{-i \epsilon + 1}}
			\end{equation}
		}
		
		{
			\begin{equation} \label{eq:app_s2_isolation_bound_6}
			\sigma_2 \geq -\frac{i \Delta F}
			{ - W_{-1} \left( 2 W \left( -\frac{1}{2 e} \right) i \epsilon
				e^{ W \left( -\frac{1}{2 e} \right) i \epsilon e^{-i \epsilon + 1} - i \epsilon + 1} \right)
				+ W \left( -\frac{1}{2 e} \right) i \epsilon e^{-i \epsilon + 1}}
			\end{equation}
		}
		\bigskip
		
		Note that these inequalities are identical to Formulae \eqref{eq:app_s2_isolation_bound_3} and \eqref{eq:app_s2_isolation_bound_4} except that the directions of the inequality signs are flipped. As a result, by using Formulae \eqref{eq:app_s2_isolation_bound_3} and \eqref{eq:app_s2_isolation_bound_4} as equalities rather than inequalities, for any distance $i \Delta F > 0$, we can calculate exactly two possible $\sigma_2$ values which satisfy all four inequalities of Lemma \ref{lem:app_1} when using this choice of $\sigma_1$. The calculation of $\sigma_2$ (with $\sigma_1$ left as a variable) is shown in Formula \eqref{eq:app_s2_equality}.
		
		\medskip
		\begin{equation} \label{eq:app_s2_equality}
		\sigma_2 = -\frac{i \Delta F e^{i \epsilon} \sigma_1}{W_Z \left( -\frac{2 i \Delta F e^{-i \epsilon} e^{-\frac{i \Delta F e^{-i \epsilon}}{\sigma_1}} }{\sigma_1} \right) e^{i \epsilon} \sigma_1 + i \Delta F}
		\end{equation}
		\bigskip
	\end{proof}

	\begin{lemma} \label{lem:app_3}
		The sign of the denominator in the derivative taken with respect to $i$ of the $\sigma_2$ calculation of Lemma \ref{lem:app_2} depends on which branch is indicated by the branch index variable $Z$.
	\end{lemma}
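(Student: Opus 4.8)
The plan is to differentiate the closed form for $\sigma_2$ in Formula \eqref{eq:app_s2_equality} directly with respect to $i$, treating $\sigma_1$, $\Delta F$ and $\epsilon$ as fixed positive constants, and then to collect the result as a single fraction so that its denominator can be inspected factor by factor. First I would write $\sigma_2 = N(i)/D(i)$ with $N(i) = -i \Delta F e^{i\epsilon}\sigma_1$ and $D(i) = W_Z(g(i)) e^{i\epsilon}\sigma_1 + i\Delta F$, where $g(i) = -\frac{2 i \Delta F e^{-i\epsilon} e^{-i\Delta F e^{-i\epsilon}/\sigma_1}}{\sigma_1}$ denotes the argument of the Lambert W function. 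By the quotient rule $\frac{d\sigma_2}{di} = \frac{N'(i)D(i) - N(i)D'(i)}{D(i)^2}$, so the only place a branch-dependent quantity can enter is through the derivative $D'(i)$, and specifically through the derivative of the $W_Z$ term.

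The key step is differentiating $W_Z(g(i))$. Implicitly differentiating the defining relation $W_Z(x)e^{W_Z(x)} = x$ of the Lambert W function (Formula \eqref{eq:lambert_0}) gives the identity $W_Z'(x) = \frac{W_Z(x)}{x\left(1 + W_Z(x)\right)}$, so the chain rule introduces a factor of $\frac{1}{g(i)\left(1 + W_Z(g(i))\right)}$ into $D'(i)$. Clearing this factor so that $\frac{d\sigma_2}{di}$ is expressed as a single fraction, the combined denominator reduces to $D(i)^2 \, g(i) \left(1 + W_Z(g(i))\right)$, possibly multiplied by strictly positive exponential factors that accumulate from the chain rule but cannot affect the sign. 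Since $D(i)^2$ is a square, the sign of the whole denominator is therefore governed by the product $g(i)\left(1 + W_Z(g(i))\right)$.

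It then remains to sign this product. For every $i > 0$ the argument $g(i)$ is strictly negative, since it equals the negative constant $-2/\sigma_1$ multiplied by the strictly positive quantities $i \Delta F e^{-i\epsilon}$ and $e^{-i\Delta F e^{-i\epsilon}/\sigma_1}$. The branch dependence enters through the second factor: the range restrictions $W(x) \geq -1$ and $W_{-1}(x) \leq -1$ from the definition of the Lambert W function give $1 + W_Z(g(i)) > 0$ for the principal branch $Z=0$ and $1 + W_Z(g(i)) < 0$ for the branch $Z=-1$, with the strict inequalities holding everywhere except at the branch point $g(i) = -\frac{1}{e}$. Combining the sign $g(i) < 0$ with each of these, the product $g(i)\left(1 + W_Z(g(i))\right)$ is negative for the principal branch and positive for the $-1$ branch, establishing the claim.

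The hard part will be the bookkeeping in the middle step: carrying out the combined quotient-and-chain-rule computation and confirming that, once a common denominator is formed, no sign-altering factor survives beyond the square $D(i)^2$, the argument $g(i)$, and the factor $1 + W_Z(g(i))$. In particular I must verify that all of the exponential terms generated, namely $e^{i\epsilon}$, $e^{-i\epsilon}$ and $e^{-i\Delta F e^{-i\epsilon}/\sigma_1}$, are strictly positive and hence sign-neutral, and that $g(i)$ remains strictly negative and bounded below by $-\frac{1}{e}$ over the range of interest, which was arranged by the choice of $\sigma_1$ in Lemma \ref{lem:app_2}, so that $1 + W_Z(g(i))$ does not vanish and the branch-based sign argument is valid away from the single point $g(i) = -\frac{1}{e}$.
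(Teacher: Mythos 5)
Your proof is correct, but it reaches the conclusion by a genuinely different route than the paper, and it signs a different (though sign-equivalent) expression. The paper works with the derivative in the factored form of Formula \eqref{eq:app_sing_inf_sigma2_der}, whose denominator is $\left( W_Z(g(i))\, e^{i\epsilon}\sigma_1 + i\Delta F \right)\left( W_Z(g(i)) + 1 \right) = D(i)\left(1 + W_Z(g(i))\right)$ in your notation; the second factor is signed exactly as you sign yours, but the bulk of the paper's proof is spent showing that the first factor $D(i)$ is always negative, which requires rewriting the inequality via the identity $W_Z(xe^x)=x$, verifying (using the specific $\sigma_1$ of Lemma \ref{lem:app_2} and a small unimodality argument) that the relevant argument stays above $-1$ so the principal branch applies, and then comparing two LambertW values through their inputs and the ordering of the branches. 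Your quotient-rule form makes all of that machinery unnecessary: $D(i)$ enters only as the square $D(i)^2$, which is sign-neutral (you do need $D(i)\neq 0$, which holds since $\sigma_2 = N(i)/D(i)$ is finite and positive), and the branch dependence is isolated in the factor $g(i)\left(1 + W_Z(g(i))\right)$, where $g(i)<0$ is immediate. Both computations differentiate the same function, and the two denominators differ by the factor $D(i)\,g(i)$, which is positive because $N(i)<0$ and $\sigma_2>0$ force $D(i)<0$; hence they agree in sign and your conclusion --- negative for the principal branch, positive for the $-1$ branch --- is exactly the paper's. What the paper's longer route buys is the explicit factored form \eqref{eq:app_sing_inf_sigma2_der}, whose simple numerator $-\left(W_Z(g(i)) + i\epsilon\right)\Delta F e^{i\epsilon}\sigma_1$ is precisely what Lemmas \ref{lem:app_4} and \ref{lem:app_6} analyze downstream; with your form the corresponding numerator is messier, so those later arguments would either have to be rerun on it or be justified via the sign-equivalence just noted. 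What your route buys is brevity and robustness: it uses only the implicit derivative $W_Z'(x) = W_Z(x)/\left(x\left(1+W_Z(x)\right)\right)$, the trivial negativity of $g(i)$, and the branch inequalities $W \geq -1$ and $W_{-1} \leq -1$, and it does not depend on the particular choice of $\sigma_1$ beyond keeping $g(i) \geq -\frac{1}{e}$. Your explicit treatment of the branch point $g(i) = -\frac{1}{e}$, where $1 + W_Z(g(i)) = 0$ and the derivative is undefined, is also more careful than the paper's, which is silent on that degenerate case.
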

	
	\begin{proof}
		The derivative for the calculation of $\sigma_2$ is shown in Formula \eqref{eq:app_sing_inf_sigma2_der}.
		
		\medskip
		\begin{equation} \label{eq:app_sing_inf_sigma2_der}
		-\frac{\left( W_Z \left( -\frac{2 i \Delta F e^{-i \epsilon} e^{-\frac{i \Delta F e^{-i \epsilon}}{\sigma_1}} }{\sigma_1} \right) + i \epsilon \right) \Delta F e^{i \epsilon} \sigma_1}{\left( W_Z \left( -\frac{2 i \Delta F e^{-i \epsilon} e^{-\frac{i \Delta F e^{-i \epsilon}}{\sigma_1}} }{\sigma_1} \right)  e^{i \epsilon} \sigma_1 + i \Delta F \right) \left( W_Z \left( -\frac{2 i \Delta F e^{-i \epsilon} e^{-\frac{i \Delta F e^{-i \epsilon}}{\sigma_1}} }{\sigma_1} \right) + 1\right)}
		\end{equation}
		\bigskip
		
		We start by studying the sign of the first factor in the denominator. We will show that it is always negative by proving Formula \eqref{eq:app_lemma_3_step_1}.
		
		\medskip
		{
			\begin{equation} \label{eq:app_lemma_3_step_1}
			W_Z \left( -\frac{2 i \Delta F e^{-i \epsilon} e^{-\frac{i \Delta F e^{-i \epsilon}}{\sigma_1}} }{\sigma_1} \right)  e^{i \epsilon} \sigma_1 < -i \Delta F
			\end{equation}
		}
		\bigskip
		
		This can be re-written as:
		
		\medskip
		{
			\begin{equation} \label{eq:app_lemma_3_step_2}
			W_Z \left( -\frac{2 i \Delta F e^{-i \epsilon} e^{-\frac{i \Delta F e^{-i \epsilon}}{\sigma_1}} }{\sigma_1} \right) < -\frac{i \Delta F e^{-i \epsilon}}{\sigma_1}
			\end{equation}
		}
		\bigskip
		
		Since $W_Z(x e^x) = x$, we can replace the right-hand side with the appropriate LambertW function. To do this, we must determine which branch should be used. If the value of $x$ in the aforementioned identity is greater than or equal to -1, the principal branch should be used, otherwise, the -1 branch should be used. To determine which of these should be used, we substitute the $\sigma_1$ variable of the right-hand side with its selected value from Lemma \ref{lem:app_2} and analyze the resulting function of the variable $i$ shown in Formula \eqref{eq:app_lemma_3_step_3_a}.

		\medskip
		{
			\begin{equation} \label{eq:app_lemma_3_step_3_a}
			i \epsilon e^{-i \epsilon + 1} W \left( -\frac{1}{2e} \right)
			\end{equation}
		}
		\bigskip
		
		To identify what values this function can take on, we first calculate its derivate as shown in Formula \eqref{eq:app_lemma_3_step_3_b}.
		
		\medskip
		{
			\begin{equation} \label{eq:app_lemma_3_step_3_b}
			\epsilon e^{- i \epsilon + 1} W \left( -\frac{1}{2e} \right) \left( 1 - i \epsilon \right)
			\end{equation}
		}
		\bigskip
		
		Since the derivative has a single zero at $i=\frac{1}{\epsilon}$, the function is unimodal. The value of the derivative at $i=0$ is negative therefore, the mode is a minimum. The lowest value that Formula \eqref{eq:app_lemma_3_step_3_a} can take on is therefore $W \left( -\frac{1}{2e} \right)$ which is greater than -1. As a result, the principal branch should be used to apply the LambertW identity to Formula \eqref{eq:app_lemma_3_step_2}. This produces Formula \eqref{eq:app_lemma_3_step_4}.
		
		\medskip
		{
			\begin{equation} \label{eq:app_lemma_3_step_4}
			W_Z \left( -\frac{2 i \Delta F e^{-i \epsilon} e^{-\frac{i \Delta F e^{-i \epsilon}}{\sigma_1}} }{\sigma_1} \right) < W \left( -\frac{i \Delta F e^{-i \epsilon} e^{-\frac{i \Delta F e^{-i \epsilon}}{\sigma_1}}}{\sigma_1} \right)
			\end{equation}
		}
		\bigskip
		
		To determine the relationship between the left-hand side and right-hand side, we can compare the input being given to each function:
		
		\medskip
		{
			\begin{equation} \label{eq:app_lemma_3_step_5}
			-\frac{2 i \Delta F e^{-i \epsilon} e^{-\frac{i \Delta F e^{-i \epsilon}}{\sigma_1}} }{\sigma_1} < -\frac{i \Delta F e^{-i \epsilon} e^{-\frac{i \Delta F e^{-i \epsilon}}{\sigma_1}}}{\sigma_1}
			\end{equation}
		}
		\bigskip
		
		This relationship can be simplified to $-2 < -1$. This shows that the input given to the function on the left-hand side will always be less than the input given to the function on the right-hand side. The relationship between the two LambertW functions can now be determined based on which branch the variable $Z$ indicates. If $Z$ is 0 then the left-hand side is less than the right-hand side, making the factor negative. If $Z$ is -1 then the left-hand side will still be less than the right-hand side since the output of the -1 branch is always less than or equal to the output of the principal branch. Therefore, in either case, the factor is negative.
		
		Next, we study the second factor in the denominator of the derivative:
		
		\medskip
		{
			\begin{equation} \label{eq:app_lemma_3_step_6}
			W_Z \left( -\frac{2 i \Delta F e^{-i \epsilon} e^{-\frac{i \Delta F e^{-i \epsilon}}{\sigma_1}} }{\sigma_1} \right) + 1
			\end{equation}
		}
		\bigskip
		
		Since the output of the principal branch of a LambertW function is always greater than or equal to -1, this factor will always be positive when $Z$ is 0. If $Z$ is -1 then the output of the LambertW function is always less than or equal to -1, making the sign of the factor negative.
		
		Taking the signs of both factors into account, we can see that the sign of the denominator will always be negative for the principal branch and will always be positive for the -1 branch.
	\end{proof}

	\begin{lemma} \label{lem:app_4}
		As a function of $i$, the $\sigma_2$ calculation of Lemma \ref{lem:app_2} is unimodal for either branch index.
	\end{lemma}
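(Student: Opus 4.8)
The derivative of the $\sigma_2$ calculation with respect to $i$ is given in Formula \eqref{eq:app_sing_inf_sigma2_der}. By Lemma \ref{lem:app_3}, its denominator keeps a constant sign on each branch (negative for $W$, positive for $W_{-1}$), so every sign change of the derivative must come from its numerator $-\left(W_Z(u) + i\epsilon\right)\Delta F e^{i\epsilon}\sigma_1$, where I abbreviate by $u = u(i)$ the argument of $W_Z$ appearing in Formula \eqref{eq:app_s2_equality}. Since $\Delta F e^{i\epsilon}\sigma_1 > 0$ and the leading sign is fixed, the plan is to show that the single scalar factor $g(i) := W_Z(u(i)) + i\epsilon$ changes sign exactly once on $(0,\infty)$ for each choice of branch; this then forces the derivative to have a single sign change and makes $\sigma_2(i)$ unimodal.

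First I would locate the zeros of $g$. Setting $W_Z(u(i)) = -i\epsilon$ and applying $t \mapsto t e^t$ to both sides gives the necessary condition $u(i) = -i\epsilon\, e^{-i\epsilon}$. Substituting the explicit form of $u(i)$ together with the value of $\sigma_1$ chosen in Lemma \ref{lem:app_2} (Formula \eqref{eq:app_sigma_1_inequality_2}), and using the defining identity $W\left(-\frac{1}{2e}\right)e^{W\left(-\frac{1}{2e}\right)} = -\frac{1}{2e}$, this collapses after division by $i\epsilon\, e^{-i\epsilon} > 0$ to the single equation $(\epsilon i)e^{-\epsilon i} = 1/e$. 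Because $s \mapsto s e^{-s}$ attains its maximum $1/e$ only at $s = 1$, the unique positive solution is $i = 1/\epsilon$; one checks directly that there $u = -1/e$, so $W_Z(-1/e) = -1$ and $g(1/\epsilon) = 0$ for both branches. Hence $g$ has at most one zero, and therefore at most one sign change, on each branch.

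It remains to confirm that the crossing at $i = 1/\epsilon$ is genuine for each branch. Here I would lean on the modality of $u$ already established in the proof of Lemma \ref{lem:app_2}: $u$ is strictly decreasing on $(0, 1/\epsilon)$ and strictly increasing on $(1/\epsilon, \infty)$, with $u \to 0^-$ at both ends and minimum $u(1/\epsilon) = -1/e$. Combining this with the monotonicity of the two branches ($W$ increasing, $W_{-1}$ decreasing on their domains) pins the sign of $g$ on the \emph{easy} side for each branch: for $W$ on $(1/\epsilon,\infty)$ both $W(u(i))$ and $i\epsilon$ are increasing, so $g$ rises strictly from $g(1/\epsilon)=0$ and is positive; symmetrically, for $W_{-1}$ on $(0,1/\epsilon)$ both $W_{-1}(u(i))$ and $i\epsilon$ are increasing, so $g$ climbs from $-\infty$ to $0$ and is negative. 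The remaining side of each branch then carries a constant sign (no further zero), which I fix by a limiting evaluation: $g \to +\infty$ as $i \to \infty$ on the principal branch, and $g \to -W\left(-\frac{1}{2e}\right) > 0$ as $i\to\infty$ on the $-1$ branch, while near $i=0^+$ a first-order expansion gives $g \approx i\epsilon\left(2e\,W\left(-\frac{1}{2e}\right) + 1\right) < 0$ on the principal branch. In every case $g$ passes from negative to positive exactly once, so the numerator, and hence the derivative, has a single sign change; therefore $\sigma_2(i)$ is unimodal for either branch index.

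The main obstacle is the sign of $g$ on the side of each branch not controlled by the direct monotonicity argument, in particular the $-1$ branch on $(1/\epsilon,\infty)$, where $g$ is the sum of a term $W_{-1}(u(i))$ decreasing to $-\infty$ and the increasing term $i\epsilon$. Settling it cleanly requires either the asymptotic behavior of $W_{-1}$ near $0^-$ (to see the divergences cancel and leave the positive constant $-W\left(-\frac{1}{2e}\right)$) or an appeal to the uniqueness of the zero to rule out a dip-and-return. A secondary subtlety is that $u = -1/e$ is exactly the branch point of $W_Z$, where $W_Z$ has a vertical tangent, so a naive evaluation of $g'(1/\epsilon)$ is degenerate; the crossing must be argued from one-sided signs rather than from the derivative at the mode.
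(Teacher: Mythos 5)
Your proof is correct and takes essentially the same route as the paper's: Lemma \ref{lem:app_3} fixes the sign of the denominator, and the zeros of the numerator factor $W_Z(u(i)) + i\epsilon$ are located through the identity $W_Z(xe^x) = x$, giving a single sign change per branch and hence unimodality. The only differences are matters of bookkeeping: you substitute the $\sigma_1$ value of Formula \eqref{eq:app_sigma_1_inequality_2} immediately, so both branches' zeros collapse to the single point $i = 1/\epsilon$ (a reduction the paper performs only later, in Theorem \ref{th:app_1}, working instead with the general-$\sigma_1$ zero of Formula \eqref{eq:app_lemma_4_step_3}), and your careful verification that the crossing is genuine---the one-sided sign analysis, the asymptotics on the $-1$ branch, and the initial negativity of the factor near $i=0^+$---supplies rigor the paper largely defers to Lemma \ref{lem:app_6}.
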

	
	\begin{proof}
		Since the sign of the denominator of its derivative cannot change when restricted to the same branch index, the only remaining component that could induce a sign change is the first factor in the numerator:
		
		{
			\begin{equation} \label{eq:app_lemma_4_step_1}
			W_Z \left( 
			\frac{2 i \Delta F e^{\frac{-i \Delta F}{\sigma_1 e^{i \epsilon}}} }
			{- \sigma_1 e^{i \epsilon}} 
			\right) + i \epsilon
			\end{equation}
		}
		
		To find when this can be equal to 0, we use the identity of $W_Z(xe^x) = x$, setting the variable $x$ to be $-i \epsilon$. We therefore must find when the following equality is true:
		
		{
			\begin{equation} \label{eq:app_lemma_4_step_2}
			-\frac{2 i \Delta F e^{-i \epsilon} e^{-\frac{i \Delta F e^{-i \epsilon}}{\sigma_1}} }{\sigma_1} = -i \epsilon e^{-i \epsilon}
			\end{equation}
		}
		
		When $i=0$, both sides will be equal to 0, however this does not constitute a sign change in the derivative since this is at the beginning of the input range. To find any other zeros, we can isolate $i$ to get:
		
		{
			\begin{equation} \label{eq:app_lemma_4_step_3}
			i = -\frac{W_Z \left( \frac{ln \left( \frac{\epsilon \sigma_1}{2 \Delta F} \right) \epsilon \sigma_1}{\Delta F} \right)}{\epsilon}
			\end{equation}
		}
		
		Each LambertW branch therefore has its own zero using the corresponding branch index. Since the full derivative has only one sign change when considering $W$ and $W_{-1}$ separately, each of the functions are unimodal.
	\end{proof}

	\begin{lemma} \label{lem:app_5}
		As a function of $i$, the input to $W_Z$ in the $\sigma_2$ calculation of Lemma \ref{lem:app_2} is unimodal and is initially decreasing.
	\end{lemma}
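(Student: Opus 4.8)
The plan is to study the input
\[
g(i) = -\frac{2 i \Delta F e^{-i \epsilon} e^{-\frac{i \Delta F e^{-i \epsilon}}{\sigma_1}}}{\sigma_1}
\]
directly through its first derivative. Crucially, that derivative has already been computed in the proof of Lemma \ref{lem:app_2} as Formula \eqref{eq:app_lambert_input_der}, so the present lemma is largely a matter of re-reading that calculation with the chosen value $\sigma_1 = -\frac{\Delta F}{W \left( -\frac{1}{2e} \right) e \epsilon}$ now fixed.

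First I would dispatch the two explicit endpoint claims. The leading factor of $i$ in $g$ gives $g(0) = 0$, which is the assertion that the input vanishes at $i = 0$. For the initially-decreasing claim, I would evaluate Formula \eqref{eq:app_lambert_input_der} at $i = 0$, obtaining $-\frac{2 \Delta F}{\sigma_1}$; since $\Delta F$ and $\sigma_1$ are strictly positive, this is negative, so $g$ is decreasing at the origin.

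The core of the argument is to show that $g'$ changes sign exactly once. Formula \eqref{eq:app_lambert_input_der} factors into a negative constant, two strictly positive exponential factors, and the two factors $\left( i \epsilon - 1 \right)$ and $\left( i \Delta F e^{-i \epsilon} - \sigma_1 \right)$. The first vanishes only at $i = \frac{1}{\epsilon}$, where it changes sign. The hard part is ruling out any sign change from the second factor: the map $i \mapsto i \Delta F e^{-i \epsilon}$ attains its global maximum $\frac{\Delta F}{e \epsilon}$ at $i = \frac{1}{\epsilon}$, so it suffices to confirm $\sigma_1 > \frac{\Delta F}{e \epsilon}$. With the chosen $\sigma_1$ this reduces to $-W \left( -\frac{1}{2e} \right) < 1$, which holds because the principal branch satisfies $W \left( -\frac{1}{2e} \right) > -1$ (the argument $-\frac{1}{2e}$ lies strictly above the branch point $-\frac{1}{e}$). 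Hence $\left( i \Delta F e^{-i \epsilon} - \sigma_1 \right) < 0$ for all $i \ge 0$ and contributes no zero; this is exactly the situation already established in Lemma \ref{lem:app_2}, where the two candidate zeros of Formula \eqref{eq:app_lambert_input_der_zeros} were shown to be extraneous and only $i = \frac{1}{\epsilon}$ survives.

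Combining these observations, $g'$ is negative on $\left[ 0, \frac{1}{\epsilon} \right)$ and positive on $\left( \frac{1}{\epsilon}, \infty \right)$, so $g$ decreases to a unique minimum at $i = \frac{1}{\epsilon}$ and thereafter increases toward $0$; this is precisely the claimed unimodality together with the initially-decreasing property. The only genuine obstacle is the sign analysis of $\left( i \Delta F e^{-i \epsilon} - \sigma_1 \right)$, and since that is inherited directly from the groundwork in Lemma \ref{lem:app_2}, no new heavy computation is required.
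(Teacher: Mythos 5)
Your proof is correct, and its skeleton matches the paper's: both arguments work from the derivative in Formula \eqref{eq:app_lambert_input_der} (reproduced as Formula \eqref{eq:app_lemma_5_step_2}), note that the input vanishes at $i=0$ with $g'(0) = -\frac{2\Delta F}{\sigma_1} < 0$, and reduce everything to the sign behaviour of the two factors $\left( i \epsilon - 1 \right)$ and $\left( i \Delta F e^{-i \epsilon} - \sigma_1 \right)$, with the surviving critical point at $i = \frac{1}{\epsilon}$. Where you genuinely diverge is in how the second factor is eliminated. The paper substitutes the hypothetical zero $\sigma_1 = i \Delta F e^{-i \epsilon}$ (Formula \eqref{eq:app_lemma_5_step_4}) back into the input and observes that the input would then equal $-\frac{2}{e}$, outside the interval $\left[ -\frac{1}{e}, \infty \right)$ on which $W_Z$ is real-valued; this is an indirect exclusion that implicitly leans on the requirement, built into Lemma \ref{lem:app_2}, that the LambertW inputs stay in the valid domain. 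You instead prove the factor is strictly negative for all $i \geq 0$ by a direct magnitude comparison: $\max_{i \geq 0}\, i \Delta F e^{-i \epsilon} = \frac{\Delta F}{e \epsilon}$, and the chosen $\sigma_1 = -\frac{\Delta F}{W \left( -\frac{1}{2e} \right) e \epsilon}$ from Formula \eqref{eq:app_sigma_1_inequality_2} exceeds this precisely because $W \left( -\frac{1}{2e} \right) > -1$. Your route is more elementary and self-contained (it never invokes domain validity of $W_Z$, and it delivers strict negativity, hence exactly one sign change of the derivative), at the mild cost of being tied to the specific $\sigma_1$ selected in Lemma \ref{lem:app_2}, whereas the paper's domain argument is agnostic to the particular value of $\sigma_1$ so long as the construction is well-defined. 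Since the lemma is stated precisely for the $\sigma_2$ calculation of Lemma \ref{lem:app_2}, where that $\sigma_1$ is fixed, this restriction costs you nothing.
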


	\begin{proof}
		Recall that the input to the LambertW function is as shown in Formula \eqref{eq:app_lemma_5_step_1}.
		
		\medskip
		{
			\begin{equation} \label{eq:app_lemma_5_step_1}
			-\frac{2 i \Delta F e^{-i \epsilon} e^{-\frac{i \Delta F e^{-i \epsilon}}{\sigma_1}} }{\sigma_1}
			\end{equation}
		}
		\bigskip
		
		The input is equal to 0 when $i=0$. Since all variables used here must be greater than or equal to 0 and there is a negative sign in front of the input, the value will always be less than or equal to 0. We can therefore infer that it must initially be decreasing as $i$ increases. To determine the modality, we study the derivative of the input with respect to $i$ as shown in Formula \eqref{eq:app_lemma_5_step_2}:
		
		\medskip
		{
			\begin{equation} \label{eq:app_lemma_5_step_2}
			-\frac{2 \Delta F e^{-i \epsilon} e^{-\frac{i \Delta F e^{-i \epsilon}}{\sigma_1}} \left( i \epsilon - 1 \right) \left( i \Delta F e^{-i \epsilon} - \sigma_1 \right)}{{\sigma_1}^2}
			\end{equation}
		}
		\bigskip
		
		From the derivative, we can see that a sign change could occur when any of the following cases occur:
		
		\medskip
		{
			\begin{equation} \label{eq:app_lemma_5_step_3}
			i = \frac{1}{\epsilon}
			\end{equation}
		}
		\bigskip
		
		\medskip
		{
			\begin{equation} \label{eq:app_lemma_5_step_4}
			\sigma_1 = i \Delta F e^{-i \epsilon}
			\end{equation}
		}
		\bigskip
		
		Formula \eqref{eq:app_lemma_5_step_3} represents a valid point in the span of truncated space and thus constitutes a change in modality.
		
		For Formula \eqref{eq:app_lemma_5_step_4}, we have isolated this in terms of $\sigma_1$ in order to study the LambertW function at this point. By substituting $\sigma_1$ in Formula \eqref{eq:app_lemma_5_step_1} with the right-hand side of Formula \eqref{eq:app_lemma_5_step_4}, we obtain the following:
		
		\medskip
		{
			\begin{equation} \label{eq:app_lemma_5_step_5}
			-\frac{2 i \Delta F e^{-i \epsilon} e^{-\frac{i \Delta F e^{-i \epsilon}}{i \Delta F e^{-i \epsilon}}} }{i \Delta F e^{-i \epsilon}}
			\end{equation}
		}
		\bigskip
		
		After simplification, this expression reduces to $-\frac{2}{e}$. Since the LambertW function only produces real-valued output when the input given is in the range of $[-\frac{1}{e}, \infty)$, we can see that a sign change cannot occur here. The function of Formula \eqref{eq:app_lemma_5_step_1} is therefore unimodal with its mode occurring at the zero specified in Formula \eqref{eq:app_lemma_5_step_3}.
	\end{proof}

	\begin{lemma} \label{lem:app_6}
		As a function of $i$, the mode of the $\sigma_2$ calculation of Lemma \ref{lem:app_2} is a minimum for the principal branch and a maximum for the -1 branch.
	\end{lemma}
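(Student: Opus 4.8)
The plan is to read off the sign of the derivative of the $\sigma_2$ calculation, given in Formula \eqref{eq:app_sing_inf_sigma2_der}, separately on each branch, combining the denominator sign supplied by Lemma \ref{lem:app_3} with a determination of the numerator sign. Writing the numerator of that derivative as $-\left(W_Z(\cdot)+i\epsilon\right)\Delta F e^{i\epsilon}\sigma_1$, the factors $\Delta F$, $e^{i\epsilon}$, and $\sigma_1$ are strictly positive, so the sign of the numerator is governed entirely by the factor $W_Z(\cdot)+i\epsilon$ appearing in Formula \eqref{eq:app_lemma_4_step_1}. Lemma \ref{lem:app_4} already guarantees that this factor changes sign exactly once, at the zero of Formula \eqref{eq:app_lemma_4_step_3}, so it suffices to pin down its initial sign and then invoke this single-crossing property.

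For the initial sign, I would evaluate $W_Z(\cdot)+i\epsilon$ as $i\to 0^+$. Using the identity $W(-i\epsilon e^{-i\epsilon})=-i\epsilon$ together with the monotonicity of the principal branch, the inequality $W(\text{input})+i\epsilon<0$ is equivalent to $\text{input}<-i\epsilon e^{-i\epsilon}$; dividing through by the negative quantity $-i e^{-i\epsilon}$ reduces this to $\frac{2\Delta F}{\sigma_1}e^{-i\Delta F e^{-i\epsilon}/\sigma_1}>\epsilon$, which at $i=0$ becomes $\frac{2\Delta F}{\sigma_1}>\epsilon$. Substituting the chosen value of $\sigma_1$ from Formula \eqref{eq:app_sigma_1_inequality_2} turns this into $-2W\!\left(-\frac{1}{2e}\right)e>1$, a numeric inequality that holds, so by continuity $W(\text{input})+i\epsilon<0$ for small $i>0$ on the principal branch. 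For the $-1$ branch the argument is even simpler: as $i\to 0^+$ the input approaches $0$ from below and $W_{-1}(\text{input})\to-\infty$, so $W_{-1}(\text{input})+i\epsilon$ is initially negative as well. Hence on either branch the factor $W_Z(\cdot)+i\epsilon$ starts negative, the numerator $-\left(W_Z(\cdot)+i\epsilon\right)\Delta F e^{i\epsilon}\sigma_1$ of the derivative is therefore initially positive, and by the single-crossing property of Lemma \ref{lem:app_4} the numerator is negative past the mode.

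Combining the numerator and denominator signs then yields the claim. On the principal branch the denominator is negative by Lemma \ref{lem:app_3}, so the derivative equals $(\text{positive})/(\text{negative})<0$ before the mode and $(\text{negative})/(\text{negative})>0$ after it; thus $\sigma_2$ first decreases and then increases, making the mode a minimum. On the $-1$ branch the denominator is positive, so the derivative is positive before the mode and negative after it, so $\sigma_2$ first increases and then decreases, making the mode a maximum.

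The main obstacle I anticipate is the principal-branch initial-sign computation. Unlike the $-1$ branch, where the blow-up of $W_{-1}$ near $0$ settles the sign immediately, the principal branch requires comparing $W(\text{input})$ against $-i\epsilon$ quantitatively, which is why the argument must route through the monotonicity-based input comparison and the explicit value of $\sigma_1$ in order to reduce everything to the numeric inequality $-2W\!\left(-\frac{1}{2e}\right)e>1$.
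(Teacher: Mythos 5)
Your proof is correct, and its skeleton matches the paper's: both reduce the question to the sign of the factor $W_Z(\cdot)+i\epsilon$ in the numerator of Formula \eqref{eq:app_sing_inf_sigma2_der}, invoke the single sign change established in Lemma \ref{lem:app_4}, and combine this with the constant denominator signs from Lemma \ref{lem:app_3}; your $-1$ branch argument (blow-up of $W_{-1}$ as the input tends to $0^-$) is exactly the paper's. Where you genuinely diverge is the principal branch. The paper argues qualitatively via Lemma \ref{lem:app_5}: since the input to $W$ is initially decreasing while $i\epsilon$ is increasing, the only way the sum can have its (guaranteed) sign change is to start negative. You instead compute the initial sign directly: by monotonicity of $W$ and the identity $W(-i\epsilon e^{-i\epsilon})=-i\epsilon$, the condition $W(\text{input})+i\epsilon<0$ is equivalent for small $i>0$ to $\frac{2\Delta F}{\sigma_1}e^{-i\Delta F e^{-i\epsilon}/\sigma_1}>\epsilon$, which after substituting the $\sigma_1$ of Formula \eqref{eq:app_sigma_1_inequality_2} reduces at $i=0$ to $-2eW\left(-\frac{1}{2e}\right)>1$. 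That inequality holds, and in fact provably rather than just numerically: for $x\in\left(-\frac{1}{e},0\right)$ one has $W(x)=xe^{-W(x)}<x$, so $-2eW\left(-\frac{1}{2e}\right)>-2e\cdot\left(-\frac{1}{2e}\right)=1$; your continuity step is sound because the reduced inequality is strict at $i=0$. Your route buys rigor: the paper's ``only way a zero can occur'' reasoning implicitly needs the additional observation that the sum is eventually positive, which your computation sidesteps, and you never need Lemma \ref{lem:app_5} at all. What it costs is generality: your argument hinges on the explicit choice of $\sigma_1$ (it requires $\sigma_1<\frac{2\Delta F}{\epsilon}$, satisfied by the chosen $\sigma_1\approx\frac{1.586\Delta F}{\epsilon}$), whereas the paper's structural argument is agnostic to that choice. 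Since the lemma's $\sigma_2$ calculation is used downstream (Theorem \ref{th:app_1}) only with that specific $\sigma_1$, this restriction is harmless here.
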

	
	\begin{proof}
		From Lemma \ref{lem:app_4}, we know that the function is unimodal and that the sign of its rate of change depends on the sign of the function in Formula \eqref{eq:app_lemma_4_step_1}. From Lemma \ref{lem:app_5}, we know that the input to $W_Z$ in Formula \eqref{eq:app_s2_equality} is initially decreasing until its mode and is then increasing. This implies that $W$ will also be decreasing until the same mode and then increasing and that $W_{-1}$ will be increasing until the same mode and then decreasing.
		
		Since the function of $i \epsilon$ is monotonically increasing, the only way a zero can occur in Formula \eqref{eq:app_lemma_4_step_1} for the principal branch is if $W$ is initially decreasing faster than $i \epsilon$ is increasing such that the sign of Formula \eqref{eq:app_lemma_4_step_1} is initially negative. In this way, as the rate of change of $W$ increases, the sign will eventually become positive. This implies that Formula \eqref{eq:app_lemma_4_step_1} must initially be negative for the principal branch. For the -1 branch, since the input is initially decreasing from 0, $W_{-1}$ will be initially increasing from negative infinity, making Formula \eqref{eq:app_lemma_4_step_1} initially negative.
		
		Thus, for both branches, the overall sign of the numerator of the derivative is positive prior to the mode. From Lemma \ref{lem:app_3}, we know that the sign of the denominator is negative for the principal branch, making its mode a minimum and that the sign of the denominator is positive for the -1 branch, making its mode a maximum.
	\end{proof}

	\begin{theorem} \label{th:app_1}
		The $\sigma$ calculation method of Lemma \ref{lem:app_2} provides a solution which optimally satisfies the differential privacy guarantee.
	\end{theorem}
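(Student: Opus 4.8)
The plan is to verify three things in turn: (i) that the scaling parameters produced by Lemma \ref{lem:app_2} are monotonically decreasing in $i$, which is exactly the hypothesis $\sigma_1 \geq \sigma_2$ under which the continuous-random-variable worst case of Section \ref{sec:cont_var} was derived; (ii) that the same values also satisfy the symmetric form of the guarantee, since Formula \eqref{eq:app_sing_inf_guar_l1} only covers the case $f(D_1) \geq f(D_2)$; and (iii) that no smaller assignment can satisfy the guarantee, yielding optimality. Establishing (i) closes the loop on the worst-case analysis, so I would prove it first; (ii) and (iii) then certify correctness and optimality respectively.

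For (i), I would assemble the pieces from Lemmas \ref{lem:app_4}--\ref{lem:app_6}. The key observation is that the choice of $\sigma_1$ in Formula \eqref{eq:app_sigma_1_inequality_2} was calibrated precisely so that the input to $W_Z$ attains its minimum value $-\tfrac{1}{e}$ at its mode $i = \tfrac{1}{\epsilon}$ (Lemma \ref{lem:app_5}). Since $W\!\left(-\tfrac{1}{e}\right) = W_{-1}\!\left(-\tfrac{1}{e}\right) = -1$, the two branches of the $\sigma_2$ formula coincide at $i = \tfrac{1}{\epsilon}$. By Lemma \ref{lem:app_6} the principal branch has a minimum at $i = \tfrac{1}{\epsilon}$ while the $-1$ branch has a maximum there, so the principal branch decreases on $\left[0, \tfrac{1}{\epsilon}\right]$ and the $-1$ branch decreases on $\left[\tfrac{1}{\epsilon}, \infty\right)$. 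Gluing the principal branch on the first interval to the $-1$ branch on the second therefore produces a single function that is continuous at the junction (the branches agree there), connects continuously to $\sigma_1$ as $i \to 0$, and is monotonically decreasing on all of $[0,\infty)$. This glued function is the assignment $\sigma(i)$ used by the mechanism, and its monotonicity establishes $\sigma_1 \geq \sigma_2$.

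For (ii), I would instantiate the symmetric guarantee \eqref{eq:guar_sym} together with the symmetric worst-case placement \eqref{eq:cont_var_sym_worst}, substitute the glued assignment, and verify the inequality; because the normalization-factor fraction is inverted relative to \eqref{eq:app_sing_inf_guar_l1} and monotonicity fixes the direction of $\sigma_1 \geq \sigma_2$, I expect this form to be weaker than the one already satisfied, so the check should reduce to confirming the sign of the net exponent. For (iii), I would return to Lemma \ref{lem:app_1}: under the $\sigma_1$ of Lemma \ref{lem:app_2}, the intersection of the overlapping spans \eqref{eq:app_s2_isolation_bound_1}--\eqref{eq:app_s2_isolation_bound_2} with the union of the disjoint spans \eqref{eq:app_s2_isolation_bound_3}--\eqref{eq:app_s2_isolation_bound_4} collapses to exactly the two branch values. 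Hence for each distance the admissible set of $\sigma_2$ consists of precisely those two points, and the smaller one -- the branch selected by the gluing in step (i) -- is the smallest scaling parameter compatible with the guarantee, with $\sigma_1$ itself pinned as the minimal value admitting a real-valued solution across all distances; any smaller value leaves the feasible region and violates some bound.

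The main obstacle I anticipate is not a single computation but the global-consistency argument inside step (i): showing that one monotone assignment, stitched from two distinct LambertW branches, satisfies the guarantee for \emph{every} pair of databases rather than merely for each PDF compared against the boundary PDF at $\Delta L_1 = 0$. This is where the worst-case reduction of Section \ref{sec:cont_var} must be invoked with care, since it is only licensed once $\sigma_1 \geq \sigma_2$ is known; the monotonicity and the applicability of the worst case are mutually dependent and must be untangled in the right order. I would resolve this by proving monotonicity directly from Lemmas \ref{lem:app_4}--\ref{lem:app_6} (none of which presupposes the worst-case reduction), and only afterward using it to justify that placing $x$ on the constraint boundary with the largest admissible $\Delta x_1$ is the worst case for an arbitrary pair.
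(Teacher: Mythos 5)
Your step (i) is essentially the paper's argument and is correct; the observation that the chosen $\sigma_1$ forces the $W_Z$ input to bottom out at exactly $-\tfrac{1}{e}$ at $i=\tfrac{1}{\epsilon}$, so that the two branches coincide there and the glued assignment is continuous and decreasing, is a cleaner statement of continuity than the paper gives. Your step (ii) takes a different route from the paper: the paper does not re-verify the symmetric inequality at all, but instead uses the fact that the construction makes the original guarantee hold \emph{with equality}, so that $\Pr(K(D_1)=x) = e^{i\epsilon}\Pr(K(D_2)=x)$ together with $e^{i\epsilon}\ge 1$ immediately yields $\Pr(K(D_2)=x)\le \Pr(K(D_1)=x)$, which is stronger than the symmetric form. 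Your sketch could probably be completed, but it is left as a plan ("I expect this form to be weaker"), whereas the tightness argument closes it in two lines.

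The genuine gap is in step (iii). You claim the gluing selects "the smaller one" of the two feasible points at each distance, and that optimality follows. This is false on $\left(0,\tfrac{1}{\epsilon}\right)$: since $W_{-1}\le W$ and both denominators in Formula \eqref{eq:app_s2_equality} are negative (Lemma \ref{lem:app_3}), the $-1$ branch always yields the \emph{smaller} $\sigma_2$, yet your glued assignment uses the \emph{principal} branch there, i.e., the larger of the two feasible values. So the two-point-collapse argument by itself does not establish optimality; one must additionally show why the smaller point is inadmissible before $i=\tfrac{1}{\epsilon}$. The paper's reason is that the $-1$-branch value is \emph{increasing} on that interval (its mode at $\tfrac{1}{\epsilon}$ is a maximum, Lemma \ref{lem:app_6}), and a non-monotone assignment invalidates the worst-case reduction on which the guarantee form, and hence the bounds of Lemma \ref{lem:app_1}, were derived. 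Furthermore, for $i>\tfrac{1}{\epsilon}$ the two-point feasible set is conditional on the particular $\sigma_1$: bounds \eqref{eq:app_s2_isolation_bound_1}--\eqref{eq:app_s2_isolation_bound_2} move if $\sigma_1$ is re-chosen, so excluding a smaller $\sigma_2$ there requires ruling out \emph{every} alternative global assignment. The paper does this by showing that any attempt to lower a value past $\tfrac{1}{\epsilon}$ forces a larger $\sigma_1$, which shifts the mode of the right-hand side of Formula \eqref{eq:app_s2_isolation_bound_1} leftward and then forces some earlier $\sigma$ values below the value at $i=\tfrac{1}{\epsilon}$, contradicting monotonicity; this global trade-off argument is entirely absent from your proposal. (A smaller inaccuracy: $\sigma_1$ is not "pinned as the minimal value admitting a real-valued solution" --- Formula \eqref{eq:app_sigma_1_inequality_3} admits a second, lower span of real-valued choices, which must be excluded by a separate argument rather than by feasibility alone.)
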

	
	\begin{proof}
		Lemma \ref{lem:app_2}, provides a method of calculating $\sigma$ values that satisfy the bounds identified in Lemma \ref{lem:app_1}. The additional condition that the $\sigma$ values are monotonically decreasing as $i$ increases must also be proven in order for the worst-case analysis of the continuous random variable (Section 3.5, main document) to be applicable.
		
		When using the $\sigma_1$ value of Formula \eqref{eq:app_sigma_1_inequality_2}, the mode specified in Formula \eqref{eq:app_lemma_4_step_3} reduces to $i=\frac{1}{\epsilon}$ for both branches of Formula \eqref{eq:app_s2_equality}. By Lemma \ref{lem:app_6}, the principal branch is decreasing until its mode and that the -1 branch is decreasing after its mode. Thus, by using the principal branch to calculate the values of $\sigma_2$ prior to $i=\frac{1}{\epsilon}$ and the -1 branch after this point, all $\sigma$ values will be monotonically decreasing as $i$ increases.
		
		We must also consider the symmetric form of the privacy guarantee. Let $K$ be a function representing the randomization mechanism. Since the calculations from Lemma \ref{lem:app_2} make both sides of the privacy guarantee equal to each other, the guarantee written using $K$ would be as shown in Formula \eqref{eq:app_symmetric_proof}. 
		
		{
			\begin{equation} \label{eq:app_symmetric_proof}
			\Pr \left( K(D_1) = x \right) = e^{i \epsilon} \Pr \left( K(D_2) = x \right)
			\hspace{2em} \forall D_1, D_2 \in \mathbb{D} : f \left( D_1 \right) \geq f \left( D_2 \right)
			\end{equation}
		}
		
		To prove the symmetric form of the guarantee, we must show that the condition shown in Formula \eqref{eq:app_symmetric_proof_2} holds.
		
		{
			\begin{equation} \label{eq:app_symmetric_proof_2}
			\Pr \left( K(D_2) = x \right) \leq e^{i \epsilon} \Pr \left( K(D_1) = x \right)
			\hspace{2em} \forall D_1, D_2 \in \mathbb{D} : f \left( D_1 \right) \geq f \left( D_2 \right)
			\end{equation}
		}
		
		Since $\epsilon$ and $i$ must be positive, we know that $e^{i \epsilon} \geq 1$. Therefore, from Formula \eqref{eq:app_symmetric_proof}, we can infer Formula \eqref{eq:app_symmetric_proof_3} which shows that the guarantee for the symmetric form is satisfied.
		
		{
			\begin{equation} \label{eq:app_symmetric_proof_3}
			\Pr \left( K(D_2) = x \right) \leq \Pr \left( K(D_1) = x \right)
			\hspace{2em} \forall D_1, D_2 \in \mathbb{D} : f \left( D_1 \right) \geq f \left( D_2 \right)
			\end{equation}
		}
		
		Finally, we must consider the optimality of the $\sigma$ values. Since lower $\sigma$ values are preferable, we will show that it is not possible to calculate lower $\sigma$ values that could satisfy the privacy guarantee.
		
		By Lemma \ref{lem:app_1}, $\sigma$ values must fall into one of the two spans specified in Formulae \eqref{eq:app_s2_isolation_bound_3} and \eqref{eq:app_s2_isolation_bound_4}. By Lemma \ref{lem:app_2}, the right-hand sides of Formulae \eqref{eq:app_s2_isolation_bound_3} and \eqref{eq:app_s2_isolation_bound_4} are equivalent to Formula \eqref{eq:app_s2_equality} using the $\sigma_1$ value of Formula \eqref{eq:app_sigma_1_inequality_2}. Therefore, these two bounds are characterized in the same way according to Lemma \ref{lem:app_6}. This implies that that prior to $i=\frac{1}{\epsilon}$ any $\sigma$ values chosen in the span of Formula \eqref{eq:app_s2_isolation_bound_4} would be increasing as $i$ increases which violates the requirement of monotonically decreasing $\sigma$ values. We are therefore restricted to the span of Formula \eqref{eq:app_s2_isolation_bound_3} and since we always select the lowest value in this span, our method of calculating $\sigma$ values prior to $i=\frac{1}{\epsilon}$ is optimal.
		
		After $i=\frac{1}{\epsilon}$, the $\sigma$ values we select are on the upper bound of Formula \eqref{eq:app_s2_isolation_bound_4}. It is therefore possible to select lower $\sigma$ values in this span without violating the requirement of having monotonically decreasing $\sigma$ values. However, since we are also selecting $\sigma$ values that are on the bounds of Formula \eqref{eq:app_s2_isolation_bound_2}, the only way to select a lower $\sigma$ value while still satisfying the privacy guarantee is to raise the value of all prior choices. Doing so means that higher $\sigma$ values must be used at $i=0$ and $i=\frac{1}{\epsilon}$. The use of a higher $\sigma$ value at $i=0$ has the effect of shifting the mode of the right-hand side of Formula \eqref{eq:app_s2_isolation_bound_1} to the left on the $i$-axis. Since the mode used to be at $i=\frac{1}{\epsilon}$, we now end up with a $\sigma$ value at $i=\frac{1}{\epsilon}$ which is past the mode of Formula \eqref{eq:app_s2_isolation_bound_1} and is also higher than that mode. As a result, in order to satisfy the bound of Formula \eqref{eq:app_s2_isolation_bound_1}, it is necessary that some of the preceding $\sigma$ values are lower than that at $i=\frac{1}{\epsilon}$, however, this again violates the requirement of having monotonically decreasing $\sigma$ values as $i$ increases.
		
		Thus, for values of $i$ both smaller and larger than $\frac{1}{\epsilon}$, the $\sigma$ values calculated in Lemma \ref{lem:app_2} are optimal.
	\end{proof}

	\begin{lemma} \label{lem:app_7}
		For each span of truncated space, there exists a value of $\sigma$ for which the privacy guarantee is satisfied for any pair of PDFs with location parameters within that span.
	\end{lemma}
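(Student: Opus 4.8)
The plan is to collapse the guarantee (eq:finite_guar) into a single-variable inequality in the offset $i$ and then exploit that it holds with equality at $i=0$. First I would fix one span of truncated space together with a database $D_1$ whose location parameter lies in it; since there are no infinite constraints a common scaling parameter $\sigma$ is used, so $L_1$, $R_1$ and $D:=1-(L_1+R_1)>0$ are constants with respect to $i$. Writing $a:=\Delta F/\sigma$ and $u:=e^{ai}$, the left-hand side of (eq:finite_guar) simplifies to $g(i)=\tfrac{1}{D}\bigl(e^{ai}-L_1 e^{2ai}-R_1\bigr)$, and I would record the basic fact $g(0)=1$, so the two sides of the guarantee coincide at $i=0$.

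Because equality holds at $i=0$, a necessary condition for $g(i)\le e^{i\epsilon}$ near the origin is $g'(0)\le\epsilon$. I would compute $g'(0)=\tfrac{a(1-2L_1)}{D}$ and show that the inequality $g'(0)\le\epsilon$ is algebraically equivalent to the lower bound (eq:finite_der_req_2); this both motivates and yields that bound. I would then choose $\sigma$ so that (eq:finite_der_req_2) holds with equality, making $g'(0)=\epsilon$ exactly. (Existence of such a $\sigma$ is part of the claim and follows by an intermediate-value/monotonicity argument, noting that this equation is only implicit in $\sigma$ because $L_1,R_1$ themselves contain $\sigma$.)

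The core of the argument is to upgrade this local condition at $i=0$ to the global inequality $g(i)\le e^{i\epsilon}$ for all $i\ge 0$. I would study $G(i):=g(i)e^{-i\epsilon}$, which satisfies $G(0)=1$, and prove that $G$ is non-increasing on $[0,\infty)$, whence $G(i)\le 1$ and the guarantee follows. Since $D>0$ and $e^{-i\epsilon}>0$, the sign of $G'(i)$ equals the sign of the quadratic $P(u):=-L_1(2a-\epsilon)u^2+(a-\epsilon)u+\epsilon R_1$ in $u=e^{ai}\ge 1$. The choice of $\sigma$ forces $P(1)=0$, and because the product of the roots is $\epsilon R_1/\bigl(-L_1(2a-\epsilon)\bigr)$, the second root is non-positive once $2a>\epsilon$ is established. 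I would prove $2a>\epsilon$ directly from the bound using $a=\epsilon D/(1-2L_1)$ together with the structural facts $L_1<\tfrac12$ and $R_1<\tfrac12$ (a one-sided constraint integral can approach but never reach one half). With the parabola opening downward and its roots at $u=1$ and a non-positive value, $P(u)<0$ for every $u>1$, so $G'(i)<0$ for $i>0$ and $G$ strictly decreases. The degenerate case $L_1=0$ reduces $P$ to a line of negative slope and is handled identically in spirit.

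The step I expect to be the main obstacle is precisely this globalization: the first-derivative test at $i=0$ is only necessary, and turning it into a sufficient condition requires full control of $G'$ across $[0,\infty)$, which the quadratic-in-$u$ reduction and the root-sign analysis provide. A secondary technical point is that $L_1,R_1$ shift as $f(D_1)$ moves within the span; I would address this by observing that each per-position bound is strictly below $2\Delta F/\epsilon$, so a single finite $\sigma$ suffices for the whole span, leaving the sharp combination across positions to the subsequent lemmas and theorem.
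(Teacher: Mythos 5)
Your proof is correct, and its first half coincides with the paper's: both exploit equality of the two sides of Formula \eqref{eq:finite_guar} at $i=0$, impose the first-derivative necessary condition there, obtain exactly the bound of Formula \eqref{eq:finite_der_req_2}, and then adopt it with equality (both treatments also share the same informality about the implicit, recursive definition of $\sigma$ and about the dependence of $L_1,R_1$ on the position of $f(D_1)$ within the span, an issue you at least flag explicitly). Where you genuinely diverge is the globalization step. The paper substitutes the $\sigma$-identity only into the factor $e^{\frac{i\Delta F}{\sigma}}$ multiplying the fraction, moves it to the right-hand side, and then compares second derivatives: the rearranged left side is concave in $i$ (its second derivative carries the negative factor $L_1+R_1-1$ in the denominator) while the right side is convex, so agreement of values and first derivatives at $i=0$ forces the inequality for all $i$ --- a tangent-line argument with no case analysis. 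You instead write $g(i)$ for the left-hand side, set $a=\Delta F/\sigma$ and $D=1-(L_1+R_1)$, and analyze the sign of $\frac{d}{di}\bigl(g(i)e^{-i\epsilon}\bigr)$, which reduces to the sign of the quadratic $P(u)=-L_1(2a-\epsilon)u^2+(a-\epsilon)u+\epsilon R_1$ in $u=e^{ai}$; your choice of $\sigma$ forces $P(1)=0$, and $R_1<\tfrac12$ gives $2a>\epsilon$ (equivalently $\sigma<\frac{2\Delta F}{\epsilon}$), so the parabola opens downward with its other root non-positive, whence $P(u)<0$ for $u>1$. Your route costs a degenerate case ($L_1=0$) and some root bookkeeping, but it buys strict monotone decrease of the ratio of the two sides (hence strict inequality for all $i>0$), and it surfaces the condition $\sigma<\frac{2\Delta F}{\epsilon}$ organically --- precisely the bound the paper must establish separately in Lemma \ref{lem:10}. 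Both arguments are sound and interchangeable as proofs of this lemma.
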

	
	\begin{proof}
		Recall that the privacy guarantee for this constraint configuration class is as shown in Formula \eqref{eq:app_finite_guar}.
		
		\medskip
		{
			\begin{equation} \label{eq:app_finite_guar}
			\frac{1 - \left( 
				L_1 e^{\frac{i \Delta F}{\sigma}} + 
				R_1 e^{-\frac{i \Delta F}{\sigma}} \right) }
			{1 - \left( L_1 + R_1 \right) }		
			\left( e^{\frac{i \Delta F}{\sigma}} \right)
			\leq e^{i \epsilon}
			\end{equation}
		}
		\bigskip
		
		When $i=0$, both sides of Formula \eqref{eq:app_finite_guar} will be equal to each other. It is therefore a necessary condition to ensure that the rate of change of the left-hand side with respect to the variable $i$ is initially less than or equal to that of the right-hand side. The derivatives with respect to $i$ of the left-hand side and right-hand side are shown in Formulae \eqref{eq:app_finite_guar_der_left} and \eqref{eq:app_finite_guar_der_right} respectively.
		
		\medskip
		{
			\begin{equation} \label{eq:app_finite_guar_der_left}
			\frac{\Delta F e^{\frac{i \Delta F}{\sigma}} \left( 2L_1 e^{\frac{i \Delta F}{\sigma}} - 1 \right) }
			{\sigma \left( L_1 + R_1 - 1 \right)}
			\end{equation}
		}
		\bigskip
		
		\medskip
		{
			\begin{equation} \label{eq:app_finite_guar_der_right}
			\epsilon e^{i \epsilon}
			\end{equation}
		}
		\bigskip
		
		Since we must ensure that the derivative of the left-hand is less than or equal to the derivative of the right-hand side when $i=0$, we set all instances of $i$ to 0 and obtain Formula \eqref{eq:app_finite_der_req}.
		
		\medskip
		{
			\begin{equation} \label{eq:app_finite_der_req}
			\frac{\Delta F \left( 2L_1 - 1 \right) }
			{\sigma \left( L_1 + R_1 - 1 \right)} 
			\leq \epsilon
			\end{equation}
		}
		\bigskip
		
		Through rearrangement and application of identities, we can produce Formula \eqref{eq:app_finite_der_req_2}.
		
		\medskip
		{
			\begin{equation} \label{eq:app_finite_der_req_2}
			\sigma \geq 
			\frac{\Delta F}{\epsilon - \frac{\Delta F \left( L_1 - R_1 \right)}{\sigma \left( L_1 + R_1 - 1 \right)}}
			\end{equation}
		}
		\bigskip
		
		We will consider the use of the lowest $\sigma$ value that satisfies this inequality. Since this remains a recursive definition of $\sigma$, we cannot use this to actually calculate that value. We will show later how to do so. For now, we will simply treat the right-hand side of this inequality as an identity of the required value of $\sigma$.
		
		Using this value of $\sigma$, we know that both sides of the guarantee will initially have an equal rate of change with respect to $i$. We must now show that this value of $\sigma$ preserves the guarantee for higher values of $i$. We substitute the $\sigma$ identity into exponential function appearing just before the inequality symbol in Formula \eqref{eq:app_finite_guar} to obtain the form shown in Formula \eqref{eq:app_finite_guar_proof}. Note that we do not substitute this value into the other occurrences of $\sigma$. As we will see, any positive value for $\sigma$ could be applied to those occurrences while still satisfying the guarantee.
		
		\medskip
		{
			\begin{equation} \label{eq:app_finite_guar_proof}
			\frac{1 - \left( 
				L_1 e^{\frac{i \Delta F}{\sigma}} + 
				R_1 e^{- \frac{i \Delta F}{\sigma}} \right) }
			{1 - \left( L_1 + R_1 \right) } 
			\left( e^{\frac{i \Delta F}{\frac{\Delta F}{\epsilon - \frac{\Delta F \left( L_1 - R_1 \right)}{\sigma \left( L_1 + R_1 - 1 \right)}}}} \right)
			\leq e^{i \epsilon}
			\end{equation}
		}
		\bigskip
		
		After simplification, this can be re-written as shown in Formula \eqref{eq:app_finite_guar_proof_2}.
		
		\medskip
		{
			\begin{equation} \label{eq:app_finite_guar_proof_2}
			\frac{1 - \left( 
				L_1 e^{\frac{i \Delta F}{\sigma}} + 
				R_1 e^{- \frac{i \Delta F}{\sigma}} \right) }
			{1 - \left( L_1 + R_1 \right) } 
			\left( e^{i \epsilon - \frac{i \Delta F \left( L_1 - R_1 \right)}{\sigma \left( L_1 + R_1 - 1 \right)}} \right)
			\leq e^{i \epsilon}
			\end{equation}
		}
		\bigskip
		
		Now by rearranging, we can write the guarantee as shown in Formula \eqref{eq:app_finite_guar_proof_3}.
		
		\medskip
		{
			\begin{equation} \label{eq:app_finite_guar_proof_3}
			\frac{1 - \left( 
				L_1 e^{\frac{i \Delta F}{\sigma}} + 
				R_1 e^{- \frac{i \Delta F}{\sigma}}  \right) }
			{1 - \left( L_1 + R_1 \right) }
			\leq e^{\frac{i \Delta F \left( L_1 - R_1 \right)}{\sigma \left( L_1 + R_1 - 1 \right)}}
			\end{equation}
		}
		\bigskip
		
		Since both sides of the original guarantee were equal to each other when $i=0$, regardless of the choice of $\sigma$, the same is true of this version. Furthermore, since we have chosen a value for $\sigma$ that ensures the rate of change is the same on both sides when $i=0$, the same is true in this version as well. It remains to show that as $i$ increases, the right-hand side will grow more quickly than the left-hand side. To show this, we look at the second derivatives of the left and right sides shown in Formulae \eqref{eq:app_finite_guar_2_der_left} and \eqref{eq:app_finite_guar_2_der_right} respectively.
		
		\medskip
		{
			\begin{equation} \label{eq:app_finite_guar_2_der_left}
			\frac{\Delta F^2 \left( L_1 e^{\frac{i \Delta F}{\sigma}} + R_1 e^{-\frac{i \Delta F}{\sigma}} \right) }
			{\sigma^2 \left( L_1 + R_1 - 1 \right) }
			\end{equation}
		}
		\bigskip
		
		\medskip
		{
			\begin{equation} \label{eq:app_finite_guar_2_der_right}
			\frac{\Delta F^2 \left( L_1 - R_1 \right) ^2}
			{\sigma^2 \left( L_1 + R_1 - 1 \right) ^2} 
			\left( e^{\frac{i \Delta F \left( L_1 - R_1 \right) }{\sigma \left( L_1 + R_1 - 1 \right) }} \right)
			\end{equation}
		}
		\bigskip
		
		Since the variables $L_1$ and $R_1$ represent the summations of the integrals of the finite constraints to the left and right respectively of the location parameter $f(D_1)$, both of them have a range of $[0, 0.5)$. When $L_1$ and $R_1$ are restricted to these ranges, the expression in Formula \eqref{eq:app_finite_guar_2_der_left} will always be negative and the expression in Formula \eqref{eq:app_finite_guar_2_der_right} will always be positive. This means that the derivative of the left-hand side is decreasing and the derivative of the right-hand side is increasing. As such, the left-hand side will always be less than or equal to the right-hand side, meaning that the guarantee is satisfied when using the $\sigma$ value taken as the lower bound from Formula \eqref{eq:app_finite_der_req_2}.
	\end{proof}

	\begin{lemma} \label{lem:app_8}
		Within each span of truncated space, the $\sigma$ value determined from Lemma \ref{lem:app_7} acts as a lower bound for the value of $\sigma$ required to satisfy the privacy guarantee.
	\end{lemma}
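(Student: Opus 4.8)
The plan is to reduce the claim to a single scalar inequality in $\sigma$ and then to establish that inequality on the whole half-line $[\sigma^*,\infty)$, where $\sigma^*$ denotes the value produced by Lemma~\ref{lem:app_7}. The leverage comes from observing that the proof of Lemma~\ref{lem:app_7} in fact establishes the rearranged inequality \eqref{eq:app_finite_guar_proof_3} for \emph{every} positive scaling parameter, not merely for $\sigma^*$: the concavity of its left-hand side \eqref{eq:app_finite_guar_2_der_left} and the convexity of its right-hand side \eqref{eq:app_finite_guar_2_der_right} hold for any $\sigma>0$ and any $L_1,R_1\in[0,0.5)$, while the two sides share the same value and the same first derivative at $i=0$ independently of the choice of $\sigma$. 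I would state this explicitly as the reusable ingredient.

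First I would apply this universal inequality with the very scaling parameter whose validity is in question. Writing the guarantee \eqref{eq:app_finite_guar} as $\mathrm{Frac}(i)\,e^{i\Delta F/\sigma}\le e^{i\epsilon}$ and bounding $\mathrm{Frac}(i)$ above by the right-hand side of \eqref{eq:app_finite_guar_proof_3}, the entire guarantee collapses, after comparing exponents, to the initial-rate condition \eqref{eq:app_finite_der_req}, namely $\phi(\sigma):=\frac{\Delta F\left(1-2L_1\right)}{\sigma\left(1-L_1-R_1\right)}\le\epsilon$. Hence it suffices to show $\phi(\sigma)\le\epsilon$ for all $\sigma\ge\sigma^*$. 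By the defining relation \eqref{eq:app_finite_der_req_2} we have $\phi(\sigma^*)=\epsilon$ exactly, and since $L_1$ and $R_1$ are sums of finite-constraint integrals that vanish as $\sigma\to\infty$, it follows that $\phi(\sigma)\to0$; thus the target inequality holds at the left endpoint with equality and in the limit with room to spare.

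The heart of the argument, and the step I expect to be the main obstacle, is ruling out any rise of $\phi$ above $\epsilon$ once $\sigma\ge\sigma^*$; the cleanest route is to prove that $\phi$ is strictly decreasing in $\sigma$. Taking a logarithmic derivative gives $\frac{\phi'}{\phi}=-\frac{1}{\sigma}-\frac{2L_1'}{1-2L_1}+\frac{L_1'+R_1'}{1-L_1-R_1}$, so the problem becomes a sign analysis in which the guaranteed negative term $-1/\sigma$ must dominate the contributions carrying $L_1'$ and $R_1'$. The difficulty is that $L_1(\sigma)$ and $R_1(\sigma)$ are each sums of ``bump'' terms of the form $\tfrac12\bigl(e^{-a/\sigma}-e^{-b/\sigma}\bigr)$ with $0\le a<b$, which vanish at both ends of the $\sigma$-range and are non-monotone in between, so their derivatives change sign and the factors $1-2L_1$ and $1-L_1-R_1$ may become small; the analysis must exploit the ordering $a<b$ inside each summand together with the bounds $L_1,R_1<\tfrac12$ to control these quantities termwise.

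As the concrete ``alternate representation'' hinted at by the statement, I would carry out this sign analysis by writing $\sigma=\sigma^*+\delta$ with $\delta\ge0$ and verifying that the slack $\epsilon-\phi(\sigma^*+\delta)$ is nonnegative and nondecreasing in $\delta$; combined with the equality $\phi(\sigma^*)=\epsilon$ this yields $\phi(\sigma)\le\epsilon$ throughout $[\sigma^*,\infty)$. Feeding this back through the reduction of the first two paragraphs then shows that every $\sigma\ge\sigma^*$ satisfies \eqref{eq:app_finite_guar}, so the value supplied by Lemma~\ref{lem:app_7} is indeed a lower bound.
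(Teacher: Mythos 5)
Your first two paragraphs are correct, and in substance they \emph{are} the paper's own proof in different notation: the paper represents a larger scaling parameter by subtracting $\alpha \geq 0$ from the denominator in \eqref{eq:app_finite_larger_sigma}, and that $\alpha$ is precisely your slack $\epsilon - \phi(\sigma)$, while the extra factor $e^{i\alpha}$ appearing on the right of \eqref{eq:app_finite_larger_sigma_proof} is exactly your ``comparing exponents'' reduction. Likewise, your observation that \eqref{eq:app_finite_guar_proof_3} holds for \emph{every} $\sigma>0$ (concave left side, convex right side, equal value and slope at $i=0$) is the content of the Lemma~\ref{lem:app_7} argument, so the reduction of the guarantee to the scalar condition $\phi(\sigma)\leq\epsilon$ is sound.

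The genuine gap is your third step, which you yourself call the heart of the argument and then never carry out. Your closing paragraph is circular: ``verifying that the slack $\epsilon-\phi(\sigma^*+\delta)$ is nonnegative and nondecreasing in $\delta$'' is a restatement of the claim $\phi\leq\epsilon$ on $[\sigma^*,\infty)$, not a method for proving it. Worse, the route you call cleanest---strict monotone decrease of $\phi$---is false. Take $\Delta F=1$, no constraints to the right ($R_1=0$), and a single finite constraint to the left with $\Delta R_{11}=10^{-3}$ and $\Delta L_{11}=10$, so that $L_1(\sigma)=\tfrac{1}{2}\bigl(e^{-10^{-3}/\sigma}-e^{-10/\sigma}\bigr)$ and $\phi(\sigma)=\frac{1-2L_1}{\sigma(1-L_1)}$. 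Then $\phi(1)\approx 0.0021$ but $\phi(10)\approx 0.054$: $\phi$ rises by a factor of roughly $25$, because in your logarithmic derivative the term $-2L_1'/(1-2L_1)$ becomes large and positive once the ``bump'' $L_1$ is decaying while $1-2L_1$ is still small (at $\sigma=5$ it is about $+0.40$ against $-1/\sigma=-0.2$). Consequently, for $\epsilon=0.01$ the set $\{\sigma:\phi(\sigma)\leq\epsilon\}$ contains $\sigma=1$ but not $\sigma=10$, so it is not upward closed, and no termwise sign analysis can make it so: for a fixed location parameter, the statement you reduced everything to is simply not true. You have in fact surfaced a weak point of the paper itself---its phrase ``we can represent a larger value of $\sigma$ by subtracting some value $\alpha$'' silently assumes $\epsilon-\phi(\sigma)\geq 0$ for all $\sigma\geq\sigma^*$, which is this same unproved claim---but surfacing the issue is not closing it, and any genuine repair must exploit the quantification over all location parameters in the span (for instance, positions far from all constraints force $\Delta F/\sigma\leq\epsilon$, which is what restores upward closure); a fixed-$L_1,R_1$ monotonicity argument cannot succeed.
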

	
	\begin{proof}
		Starting from the value of $\sigma$ obtained by using Formula \eqref{eq:app_finite_der_req_2} as an equality (as was done in Lemma \ref{lem:app_7}), we can represent a larger value of $\sigma$ by subtracting some value $\alpha$ from the denominator such that $\alpha$ is not greater than or equal to the original value of the denominator. This form is shown in Formula \eqref{eq:app_finite_larger_sigma}
		
		\medskip
		{
			\begin{equation} \label{eq:app_finite_larger_sigma}
			\sigma \geq \frac{\Delta F}{\epsilon - \frac{\Delta F \left( L_1 - R_1 \right)}{\sigma \left( L_1 + R_1 - 1 \right)} - \alpha}
			\end{equation}
		}
		\bigskip
		
		Using this new $\sigma$ value, we can repeat the steps of Formulae \eqref{eq:app_finite_guar_proof} through \eqref{eq:app_finite_guar_proof_3} to obtain Formula \eqref{eq:app_finite_larger_sigma_proof}.
		
		\medskip
		{
			\begin{equation} \label{eq:app_finite_larger_sigma_proof}
			\frac{1 - \left( 
				L_1 e^{\frac{i \Delta F}{\sigma}} + 
				R_1 e^{- \frac{i \Delta F}{\sigma}} \right) }
			{1 - \left( L_1 + R_1 \right) }
			\leq e^{\frac{i \Delta F \left( L_1 - R_1 \right)}{\sigma \left( L_1 + R_1 - 1 \right)} + i \alpha}
			\end{equation}
		}
		\bigskip
		
		The only difference between Formula \eqref{eq:app_finite_guar_proof_3} and Formula \eqref{eq:app_finite_larger_sigma_proof} is that the exponent of the right-hand side has become larger, making the guarantee easier to satisfy. It therefore follows that any $\sigma$ value larger than that which was found to be required will also satisfy the privacy guarantee.
	\end{proof}

	\begin{lemma} \label{lem:app_9}
		For each finite constraint, there exists a value of $\sigma$ that satisfies the privacy guarantee for the pair of PDFs with location parameters on the endpoints of the constraint.
	\end{lemma}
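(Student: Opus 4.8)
The plan is to reduce the ``opposite endpoints'' case to the already-solved ``same span'' case of Lemma \ref{lem:app_7}. The two location parameters here straddle a single finite constraint $C$ whose width is a fixed value $i \Delta F$: from $f(D_1)$ (the right point) this constraint lies to the left, while from $f(D_2)$ (the left point) it lies to the right. Because both location parameters are adjacent to $C$ and the Laplace PDF is symmetric, $C$ contributes the common integral value $S$ of Formula \eqref{eq:c_val} to both PDFs, which is exactly why the modified guarantee of Formula \eqref{eq:finite_guar_sep_2} is well-posed. The crucial observation making a single-variable analysis possible is that $i$ is not a free quantifier here; it is pinned to the (fixed) width of $C$, so the whole problem becomes a study of Formula \eqref{eq:finite_guar_sep_2} as a function of $\sigma$ alone.

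First I would rewrite the left-hand side of Formula \eqref{eq:finite_guar_sep_2} as the left-hand side of the unmodified same-span guarantee of Formula \eqref{eq:app_finite_guar} plus a single correction term. Using the identity $e^{i \Delta F / \sigma}\left(L_1 - S\right) + S = L_1 e^{i \Delta F / \sigma} - S\left(e^{i \Delta F / \sigma} - 1\right)$, the numerator of Formula \eqref{eq:finite_guar_sep_2} equals the numerator of Formula \eqref{eq:app_finite_guar} increased by $S\left(e^{i \Delta F / \sigma} - 1\right)$. Since $L_1, R_1 \in [0,0.5)$ gives $1 - (L_1 + R_1) > 0$, the modified left-hand side is the unmodified left-hand side plus the nonnegative correction
\[
\frac{S \left( e^{\frac{i \Delta F}{\sigma}} - 1 \right) e^{\frac{i \Delta F}{\sigma}}}{1 - \left( L_1 + R_1 \right)}.
\]
This isolates the single obstruction, the $S$-dependent correction, and makes explicit that Formula \eqref{eq:finite_guar_sep_2} is strictly harder to satisfy than Formula \eqref{eq:app_finite_guar}.

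Next I would control both pieces as $\sigma \to \infty$. From Formula \eqref{eq:c_val}, $S = \tfrac{1}{2}\left(1 - e^{-i \Delta F / \sigma}\right) \to 0$ and $e^{i \Delta F / \sigma} - 1 \to 0$, so the correction term vanishes. Simultaneously the unmodified left-hand side of Formula \eqref{eq:app_finite_guar} tends to $1$ (each exponential factor tends to $1$ and the outer fraction tends to $\tfrac{1 - (L_1 + R_1)}{1 - (L_1 + R_1)}$), while the right-hand side $e^{i \epsilon}$ is a constant strictly greater than $1$ for the fixed $i > 0$. Hence the full modified left-hand side tends to $1 < e^{i \epsilon}$, and by continuity in $\sigma$ there is a threshold $\sigma^{\ast}$ beyond which Formula \eqref{eq:finite_guar_sep_2} holds; combining the decay of the correction with Lemma \ref{lem:app_8} (raising $\sigma$ never violates the same-span guarantee) shows the guarantee persists for every $\sigma \geq \sigma^{\ast}$, which is what the downstream use in Theorem \ref{th:2} requires.

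I expect the main obstacle to be the interaction between the vanishing correction and the slack in the unmodified guarantee: one must confirm that a genuine positive gap opens up rather than both quantities shrinking at comparable rates and leaving the inequality indeterminate. The asymptotic computation resolves this, since the slack approaches the fixed positive constant $e^{i \epsilon} - 1$ while the correction approaches $0$. The care required is to evaluate the limit of the \emph{full} modified left-hand side rather than arguing termwise, and to keep $i$ fixed as the constraint width rather than treating it as a variable to be quantified over, as it is precisely this fact that legitimizes reducing the case to Lemma \ref{lem:app_7}.
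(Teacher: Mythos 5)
Your proposal is correct and takes essentially the same route as the paper's proof: fix $i$ as the (scaled) width of the separating constraint, observe that $S \to 0$ as $\sigma \to \infty$ while the unmodified same-span guarantee remains satisfiable with slack, and conclude that a sufficiently large $\sigma$ satisfies Formula \eqref{eq:finite_guar_sep_2}. Your explicit decomposition of the left-hand side into the unmodified guarantee plus the nonnegative correction term, and the limit computation showing the slack tends to the fixed constant $e^{i \epsilon}-1 > 0$ while the correction tends to $0$, merely makes precise the step the paper argues qualitatively (``increasing $\sigma$ reduces the influence of $S$''), so the two arguments are the same in substance.
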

	
	\begin{proof}
		The guarantee form used thus far has used the same sets of finite constraints to the left and right of both location parameters, implying that they must both lie within the same span of truncated space. We must also be able to show that the privacy guarantee holds for location parameters in different spans of truncated space. To show this, we first consider a pair of PDFs with location parameters that lie on opposite endpoints of a finite constraint (with $f(D_1)$ as always being the point on the right).
		
		For two such location parameters, the sets of constraints that lie to their left and right now differ by exactly one constraint span, the one that separates them. We can still indicate the distances between the location parameter $f(D_2)$ and the constraints to its left in terms of the set of constraints to the left of $f(D_1)$. However, we must omit the separating constraint from the summation in $L_1$ for the normalization factor of $D_2$. This can be done by subtracting the value of its integral from the summation. Logically, that same value would be added to the summation in $R_1$ of constraints to the right of $f(D_2)$, however, we must account for how the distance between $f(D_2)$ and the separating constraint differs from the cases with the other constraints. Normally, $f(D_2)$ is $i \Delta F$ farther from the each constraint to its right than $f(D_1)$ is but in this case, since the two location parameters are each on an endpoint of the separating constraint, they are both at a distance of 0 from the constraint. This means that the value of the separating integral is identical for both PDFs. As a result, rather than adding this value to the summation in $R_1$ for the normalization factor of $D_2$ (which would cause the integral to be scaled down), it is added independently to the total summation of integral values. Using a variable $S$ as defined in Formula \eqref{eq:app_c_val} to represent the integral of the separating constraint, this modified version of the privacy guarantee can be written as shown in Formula \eqref{eq:app_finite_guar_sep_2}.
		
		{
			\begin{equation} \label{eq:app_c_val}
			S = \frac{1-e^{-\frac{i \Delta F}{\sigma}}}{2}
			\end{equation}
		}

		{
			\begin{equation} \label{eq:app_finite_guar_sep_2}
			\frac{1 - \left( 
				e^{\frac{i \Delta F}{\sigma}} \left( L_1 - S \right) + 
				R_1 e^{- \frac{i \Delta F}{\sigma}} + S \right) }
			{1 - \left( L_1 + R_1 \right) }  
			\left( e^{\frac{i \Delta F}{\sigma}} \right)
			\leq e^{i \epsilon}
			\end{equation}
		}
		
		We know from Lemma \ref{lem:app_7} that a sufficiently high value of $\sigma$ can satisfy the guarantee without the modification made here and from Lemma \ref{lem:app_8} that raising the value of $\sigma$ beyond the requirement causes the difference between the left and right sides of the guarantee to grow more quickly. We can also see that increasing $\sigma$ reduces the influence of $S$ by causing the value of $S$ to asymptotically approach 0. It therefore follows that a sufficiently high value of $\sigma$ will also satisfy this form of the privacy guarantee.
	\end{proof}

	\begin{lemma} \label{lem:app_10}
		All lower bounds on $\sigma$ identified in Lemmas \ref{lem:app_7} and \ref{lem:app_9} are less than $\frac{2 \Delta F}{\epsilon}$.
	\end{lemma}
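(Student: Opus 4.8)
The plan is to treat each of the two lower bounds as an equality, solve explicitly for $\sigma$, and then bound the resulting closed form using the fact established in the proof of Lemma~\ref{lem:app_7} that $L_1, R_1 \in [0, 0.5)$. The two inequalities contributing lower bounds are Formula~\eqref{eq:app_finite_der_req_2} from Lemma~\ref{lem:app_7} and Formula~\eqref{eq:app_finite_guar_sep_2} from Lemma~\ref{lem:app_9}, so I would handle them in turn.

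For the bound of Formula~\eqref{eq:app_finite_der_req_2}, I would first clear the nested fraction. Treating the inequality as an equality and multiplying through by the outer denominator cancels the recursive occurrence of $\sigma$, yielding the closed form $\sigma = \frac{\Delta F (2L_1 - 1)}{\epsilon (L_1 + R_1 - 1)}$. Requiring $\sigma < \frac{2\Delta F}{\epsilon}$ is then equivalent to $\frac{2L_1 - 1}{L_1 + R_1 - 1} < 2$. The key observation is that $L_1 + R_1 - 1 < 0$, since $L_1, R_1 < 0.5$; clearing this negative denominator flips the inequality and collapses the condition to $R_1 < \frac{1}{2}$, which holds by assumption. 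The same computation confirms $\sigma > 0$, since numerator and denominator are both negative.

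For the bound of Formula~\eqref{eq:app_finite_guar_sep_2}, I would substitute the definition of $S$ from Formula~\eqref{eq:app_c_val} and set $y = e^{\frac{i\Delta F}{\sigma}}$, so that $S = \frac{y-1}{2y}$. Clearing denominators and collecting terms, the equality form collapses to a purely quadratic relation in $y$, namely $y^2 (1 - 2L_1) + (1 - 2R_1) = 2 e^{i\epsilon}(1 - L_1 - R_1)$, from which $y^2$ is isolated directly. Since $\sigma = \frac{2 i \Delta F}{\ln(y^2)}$, proving $\sigma < \frac{2\Delta F}{\epsilon}$ amounts to showing $e^{i\epsilon} < y^2$ (which in particular forces $y^2 > 1$, so $\ln(y^2) > 0$ and the equivalence is valid). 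Substituting the isolated value of $y^2$ and clearing the positive factor $1 - 2L_1$ reduces this, after the linear terms in $L_1$ cancel, to $e^{i\epsilon}(1 - 2R_1) > 1 - 2R_1$; dividing by the positive quantity $1 - 2R_1$ leaves $e^{i\epsilon} > 1$, which holds because $i\epsilon > 0$.

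The main obstacle is the algebraic simplification in the second case: substituting $S$ into Formula~\eqref{eq:app_finite_guar_sep_2} and verifying that all terms involving $y^{\pm 1}$ combine into a clean quadratic in $y$ requires careful bookkeeping, since the cross term $S(1-y)$ must be expanded as $-\frac{(y-1)^2}{2y}$ before the $1/y$ contributions cancel. Once this simplification is confirmed, the remaining steps are routine, but throughout both cases I would take care to track the signs of $1 - 2L_1$, $1 - 2R_1$, and $L_1 + R_1 - 1$, since each clearing of a denominator depends on whether that factor is positive or negative, and all three signs follow from $L_1, R_1 \in [0, 0.5)$.
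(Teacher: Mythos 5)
Your proposal is correct and follows exactly the route the paper takes: treat Formulae \eqref{eq:app_finite_der_req_2} and \eqref{eq:app_finite_guar_sep_2} as equalities, solve for $\sigma$, and bound the result using $L_1, R_1 \in [0, 0.5)$. In fact, your write-up supplies the explicit closed forms and sign-tracking (e.g., the quadratic in $y = e^{\frac{i \Delta F}{\sigma}}$ and the reduction to $(e^{i\epsilon}-1)(1-2R_1) > 0$) that the paper's proof only sketches, so it is a faithful and more detailed execution of the same argument.
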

	
	The proof of this is provided in the appendix. The calculations for lower bounds on $\sigma$ can be handled by treating Formulae \eqref{eq:app_finite_der_req_2} and \eqref{eq:app_finite_guar_sep_2} as equalities and solving for $\sigma$. We can identify bounds on the possible values of $\sigma$ by studying the bounds on the variables $L_1$ and $R_1$. Since $L_1$ represents the sum of the integrals of the constraints to the left of $f(D_1)$ it can be as low as 0 (if no constraints are present to the left of $f(D_1)$) and can approach but not reach 0.5 (since half of the integral exists on the left hand side). The bounds on $R_1$ are characterized in the same way. By studying the the ranges of the lower bounds for $\sigma$, it can be shown that for bounds from both equations, $\sigma$ will fall in the range of $\left( 0,\frac{2 \Delta F}{\epsilon} \right)$.

	\begin{theorem} \label{th:app_2}
		The optimal $\sigma$ value that satisfies the privacy guarantee can be found by taking the maximum out of $3n+2$ lower bounds, where $n$ is the number of finite constraints.
	\end{theorem}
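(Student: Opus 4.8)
The plan is to (i) exhibit the $3n+2$ lower bounds explicitly, (ii) show their maximum satisfies the guarantee for a distinguished family of database pairs, (iii) extend this to arbitrary pairs by a telescoping argument, and (iv) conclude optimality. With $n$ finite constraints the feasible region splits into $n+1$ spans of truncated space. Lemma \ref{lem:app_7} pins down one governing lower bound per span for the regular form of the guarantee (the case $f(D_1) \ge f(D_2)$), contributing $n+1$ bounds; applying a horizontal reflection to each configuration produces the symmetric form and a further $n+1$ bounds; and Lemma \ref{lem:app_9} gives one bound per constraint for the pair of location parameters resting on its two endpoints, for $n$ more. This sums to $3n+2$. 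I would stress that the endpoint pairs contribute only $n$ rather than $2n$ bounds: inverting the left-hand fraction of Formula \eqref{eq:app_finite_guar_sep_2} to pass to the symmetric form only decreases that side, so the regular-form bound already dominates the symmetric requirement for those pairs.

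Taking $\sigma^\star$ to be the maximum of the $3n+2$ values, Lemma \ref{lem:app_8} guarantees that exceeding any individual threshold never breaks the corresponding instance, so $\sigma^\star$ simultaneously satisfies every same-span instance in both directions and every endpoint instance. Because each of the $3n+2$ quantities is obtained by solving its defining inequality as an equality, each is a tight, necessary lower bound for the guarantee of its associated pair; hence any feasible $\sigma$ must be at least $\sigma^\star$.

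The substantive step is extending coverage from this distinguished family to every pair of databases, including location parameters separated by many spans. I would argue by telescoping. Given arbitrary $D_p, D_q$ with $f(D_p) \ge f(D_q)$, insert the successive constraint endpoints lying strictly between them to form a chain $f(D_q) = z_0, z_1, \dots, z_m = f(D_p)$ in which every consecutive pair is either co-located within a single span (handled by Lemma \ref{lem:app_7}) or occupies the two endpoints of one constraint (handled by Lemma \ref{lem:app_9}). Because the worst-case analysis of Section \ref{sec:cont_var} already eliminated the continuous random variable, each segment's guarantee holds \emph{pointwise} in $x$: writing $i_k \Delta F$ for the $k$-th separation, $\Pr(K(D_{z_{k-1}}) = x) \le e^{i_k \epsilon} \Pr(K(D_{z_k}) = x)$ for all $x$. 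Multiplying these inequalities for a common $x$ telescopes to $\Pr(K(D_p) = x) \le e^{(\sum_k i_k)\epsilon}\Pr(K(D_q) = x)$, and since $\sum_k i_k \Delta F$ is exactly the separation of $D_p$ and $D_q$, this is precisely the bound the guarantee demands. This both certifies $\sigma^\star$ for all pairs and shows that no pair outside the distinguished family imposes a stricter requirement, so $\sigma^\star$ is optimal.

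I expect the telescoping/decomposition to be the main obstacle. One must check that alternating within-span and endpoint-to-endpoint segments tile every separation, that a location parameter landing on a constraint boundary can be assigned consistently to an adjacent span so Lemma \ref{lem:app_7} applies at each seam, and---most delicately---that the per-segment inequalities may be composed at a \emph{single common} $x$ rather than each at its own worst case. The last point is what makes the pointwise (rather than worst-case) form of Lemmas \ref{lem:app_7} and \ref{lem:app_9} essential, and it must be invoked carefully. A secondary subtlety, already absorbed into Lemma \ref{lem:app_7}, is that $L_1$ and $R_1$ vary continuously as $f(D_1)$ moves within a span, so the single governing bound per span is really the supremum of the derivative condition over that span; confirming this supremum is attained (or safely approached) is needed for the count to be exactly $3n+2$.
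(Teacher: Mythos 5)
Your proposal is correct and follows essentially the same route as the paper: the same $\,(n+1)+(n+1)+n\,$ counting of bounds from Lemmas \ref{lem:app_7} and \ref{lem:app_9} (with the symmetric endpoint case dismissed by the same fraction-inversion observation), Lemma \ref{lem:app_8} to justify taking the maximum, and necessity of each bound for optimality. Your telescoping chain through constraint endpoints, composed pointwise at a common $x$, is exactly the paper's ``transitive property'' argument for arbitrary pairs, just spelled out more explicitly (and your remark that the per-span bound is really a supremum over positions within the span is a fair refinement the paper leaves implicit).
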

	
	\begin{proof}
		Lemma \ref{lem:app_7} provides a lower bound on $\sigma$ for pairs of PDFs with location parameters in the same span of truncated space. This applies to a form of the privacy guarantee in which $f(D_2) \leq f(D_1)$ holds. The symmetric case can be handled in the same way after an application of horizontal reflection to the configuration. Since the same value of $\sigma$ must be used everywhere, it is necessary to select the maximum of the lower bounds. By Lemma \ref{lem:app_8}, the privacy guarantee is still satisfied within each span of truncated space under the use of a value of $\sigma$ greater than the lower bound.
		
		Lemma \ref{lem:app_9} provides an additional bound on $\sigma$ for PDFs with location parameters on opposite endpoints of a constraint. In Formula \eqref{eq:app_finite_guar_sep_2}, if the fraction on the left-hand side is inversed, as it would be in a symmetric form, the left-hand side decreases. Thus if the form in Formula \eqref{eq:app_finite_guar_sep_2} is satisfied, the symmetric form will be as well.
		
		For $n$ constraints, there are $n+1$ lower bounds on $\sigma$ in Lemma \ref{lem:app_7} from the regular form of the privacy guarantee and an additional $n+1$ lower bounds for the symmetric form. From Lemma \ref{lem:app_9}, there are $n$ lower bounds, giving a total of $3n+2$. By selecting the largest of these, the guarantee is satisfied for all pairs of PDFs with location parameters in any same span of truncated space and for all pairs of PDFs with location parameters on opposite endpoints of a constraint. Since each of the lower bounds must be adhered to, it is not possible to select a lower value of $\sigma$ than this.
		
		It remains to be shown that any arbitrary pair of databases is also protected. This follows as a transitive property of multiple applications of the guarantee forms used throughout the lemmas. For any pair of PDFs with location parameters at arbitrary points in truncated space, it is possible to represent this as a sequence of points where each adjacent pair of points in the sequence corresponds to a pair of PDF location parameters in the configuration used in either Lemma \ref{lem:app_7} or Lemma \ref{lem:app_9}. The multiplicative bound for the arbitrary pair is therefore the product of the bounds of the adjacent pairs in the sequence. Since each of the adjacent pairs satisfy the privacy guarantee, the product will also satisfy the guarantee for the arbitrary pair.
	\end{proof}

	\begin{theorem} \label{th:app_3}
		The optimal value of $\sigma$ for any configuration of $n$ arbitrary finite constraints can be calculated to a precision of $d$ decimal places in $O \left( n^2 \left( d + \log \left( \frac{\Delta F}{\epsilon} \right) \right) \right)$ time.
	\end{theorem}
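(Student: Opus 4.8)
The plan is to reduce the optimization to a monotone decision problem and then locate the threshold by binary search. By Theorem~\ref{th:app_2}, the optimal $\sigma$ is the maximum of the $3n+2$ lower bounds arising from Formula~\eqref{eq:app_finite_der_req_2} (the same-span bounds, in both the regular and symmetric orientations) and Formula~\eqref{eq:app_finite_guar_sep_2} (the endpoint bounds, one per finite constraint). Each such bound is defined only implicitly: the quantities $L_1$ and $R_1$ are sums of $O(n)$ exponential terms, each carrying $\sigma$ inside the denominator of its exponent, and the defining relation in Formula~\eqref{eq:app_finite_der_req_2} is itself recursive in $\sigma$. Since there is no apparent way to isolate $\sigma$ in closed form, I would not attempt to compute each bound directly; instead I would treat ``does a given $\sigma$ satisfy all $3n+2$ instances of the guarantee?'' as a decision oracle and search for the smallest feasible $\sigma$.

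First I would analyze the cost of a single oracle call. For a fixed candidate value of $\sigma$, evaluating $L_1$ and $R_1$ requires summing $O(n)$ exponential terms, so each of the $3n+2$ guarantee instances can be tested in $O(n)$ time, giving $O(n^2)$ per oracle call. Next I would establish that feasibility is upward closed in $\sigma$, so that the oracle is monotone and binary search is applicable. This is exactly what Lemmas~\ref{lem:app_8} and~\ref{lem:app_9} provide: Lemma~\ref{lem:app_8} shows that any $\sigma$ exceeding the same-span lower bound continues to satisfy the guarantee, and Lemma~\ref{lem:app_9} shows the analogous stability for the endpoint form (the influence of $S$ in Formula~\eqref{eq:app_finite_guar_sep_2} decays to $0$ as $\sigma$ grows). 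Taken together, the feasible set is a half-line $[\sigma^{*},\infty)$, so a single threshold separates feasible from infeasible values.

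Finally I would bound the search interval and count iterations. Lemma~\ref{lem:app_10} confines the optimum to $\left(0,\frac{2\Delta F}{\epsilon}\right)$, so at a target precision of $d$ decimal places the candidate grid contains $\frac{2\cdot 10^{d}\,\Delta F}{\epsilon}$ points. Binary search over this grid performs $O\!\left(\log\!\left(\frac{10^{d}\Delta F}{\epsilon}\right)\right)=O\!\left(d+\log\!\left(\frac{\Delta F}{\epsilon}\right)\right)$ oracle calls, and multiplying by the $O(n^2)$ cost per call yields the claimed $O\!\left(n^2\left(d+\log\!\left(\frac{\Delta F}{\epsilon}\right)\right)\right)$ bound. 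The main obstacle is the justification of binary search itself: the complexity count is routine once one knows that the feasible region is a single upward-closed interval, so the crux is that the monotonicity guaranteed by Lemmas~\ref{lem:app_8} and~\ref{lem:app_9} holds simultaneously across all $3n+2$ constraints, i.e.\ that the pointwise maximum of the individual lower bounds is genuinely the threshold of a monotone oracle rather than merely a collection of separately-satisfiable conditions. Verifying this uniformity, together with confirming that the $O(n)$ exponential-sum evaluations dominate the per-iteration cost, is the step I would treat most carefully.
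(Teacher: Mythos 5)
Your proposal is correct and follows essentially the same route as the paper's proof: the same $3n+2$ implicit lower bounds with $O(n^2)$ per-candidate verification, monotonicity of feasibility via Lemmas~\ref{lem:app_8} and~\ref{lem:app_9}, the search interval $\left(0,\frac{2\Delta F}{\epsilon}\right)$ from Lemma~\ref{lem:app_10}, and binary search over the $d$-decimal grid yielding $O\left(n^2\left(d+\log\left(\frac{\Delta F}{\epsilon}\right)\right)\right)$. Your explicit framing of the check as a monotone decision oracle with an upward-closed feasible half-line is a slightly cleaner articulation of what the paper argues implicitly, but it is the same argument.
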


	\begin{proof}
		As stated in Theorem \ref{th:app_2}, $\sigma$ must be chosen as the maximum value out of the  $O \left( n \right)$ lower bounds calculated from Formulae \eqref{eq:app_finite_der_req_2} and \eqref{eq:app_finite_guar_sep_2}. The variables $L_1$ and $R_1$ in these inequalities represent summations of $O \left( n \right)$ exponential functions where each function contains an instance of $\sigma$ in the denominator of its exponent. We know of no method to isolate $\sigma$ for such configurations. It therefore takes $O \left( n^2 \right)$ time to check whether a given value of $\sigma$ is above all lower bounds.

		From Lemmas \ref{lem:app_8} and \ref{lem:app_9}, we know that any value of $\sigma$ larger than the required value will also satisfy the privacy guarantee. This tells us that once the inequality is satisfied, increasing $\sigma$ further will never violate the inequality. Lemma \ref{lem:app_10} indicates that the value of $\sigma$ will always be between 0 and $\frac{2 \Delta F}{\epsilon}$, meaning that for a decimal precision of $d$, there are  $\frac{2 \left( 10^d \right) \Delta F}{\epsilon}$ possible values for $\sigma$. By performing a binary search for the optimal value, the logarithm of the number of possible values of $\sigma$ must be checked, leading to an overall time complexity of $O \left( n^2 \left( d + \log \left( \frac{\Delta F}{\epsilon} \right) \right) \right)$. In most cases, the values of $d$ and $\Delta F$ are likely to be small, making $O \left( n^2 \right)$ a more practical representation of the time complexity.
	\end{proof}

\end{appendix}

\end{document}